\documentclass[sigconf,nonacm,review=false,authorsperrow=4]{acmart}

\AtBeginDocument{%
  \providecommand\BibTeX{{%
    \normalfont B\kern-0.5em{\scshape i\kern-0.25em b}\kern-0.8em\TeX}}}

\setcopyright{acmlicensed}
\copyrightyear{2026}
\acmYear{2026}
\acmDOI{XXXXXXX.XXXXXXX}
\acmConference[SIGMOD '26]{ACM International Conference on Management of Data}{May 31--June 5, 2026}{Bengaluru, India}
%
%
\acmISBN{978-1-4503-XXXX-X/2018/06}

\usepackage{amsmath} 
\usepackage{balance} 
\usepackage{booktabs} 
\usepackage{caption}
\usepackage{comment}
\usepackage{enumitem}
\setlist[itemize]{leftmargin=2.2em}

\usepackage{fancyvrb} 
\usepackage{graphicx} 
\usepackage{makecell} 
\usepackage{multirow} 
\usepackage{soul}
\usepackage{subfigure} 
\usepackage{xcolor}
\usepackage{xspace} 
\usepackage{bm}
\usepackage{utfsym}
\usepackage{cancel}
\usepackage{pifont}
\usepackage{marginnote}
\usepackage{mparhack}

\usepackage[linesnumbered,ruled,vlined]{algorithm2e}
\SetArgSty{textup}
\SetKw{And}{\textbf{and}}
\SetKw{Or}{\textbf{or}}
\SetKw{To}{\textbf{to}}

\newcounter{algoline}
\newcommand\Numberline{\refstepcounter{algoline}\nlset{\thealgoline}}
\AtBeginEnvironment{algorithm}{\setcounter{algoline}{0}}

\usepackage{tikz}


\newcommand{\vldbrevision}[1]{\textcolor{black}{#1}}
\usepackage{xcolor}
\newenvironment{vldbrevisionenv}
  {\begingroup\color{black}}
  {\endgroup}

\theoremstyle{definition}
\newtheorem{definition}{Definition}

\newtheorem{theorem}{Theorem}
\theoremstyle{remark}

\newcommand{\vb}[1]{\mathbf{#1}}

\newcommand\dbname{\ensuremath{\textsf{NeurBench}}\xspace}

\newcommand{\paragraphtitle}[1]{\vspace{0.1em}\noindent\textbf{#1.}}

\renewcommand{\subparagraph}[1]{\vspace{0.1em}
\noindent\textit{\underline{#1.}}}

\usepackage{enumitem}
\setlist[itemize]{leftmargin=1em}
\setlist[enumerate]{leftmargin=1em}


\newif\ifextended\extendedfalse

\newcommand{\extended}[1]{\ifextended#1\else\relax\fi}

\DeclareMathOperator*{\argmax}{arg\,max}
\DeclareMathOperator*{\argmin}{arg\,min}



\usepackage{multicol}




\usepackage{wrapfig}
\usepackage{listings}
\lstset{basicstyle=\small\ttfamily,breaklines=true}
\usepackage[most]{tcolorbox}

\newtcolorbox[auto counter]{mybox}[1][]{%
breakable,
enhanced,
sharp corners,
colback=white,
fonttitle=\bfseries,
enlarge bottom at break by=5mm,
enlarge top at break by=5mm,
overlay first={%
    \draw[black, line width=0.5mm](frame.south west)--(frame.south east);
    \node[anchor=north east] at (frame.south east) {continued on next page};
    },
overlay middle={%
    \draw[black, line width=0.5mm](frame.south west)--(frame.south east);
    \draw[black, line width=0.5mm](frame.north west)--(frame.north east);
    \node[anchor=north east] at (frame.south east) {continued on next page};
    \node[anchor=south west] at (frame.north west) {continued from next page};
    },
overlay last={%
    \draw[black, line width=0.5mm](frame.north west)--(frame.north east);
    \node[anchor=south west] at (frame.north west) {continued from next page};},
#1
}

  {\list{}{\leftmargin=0.15in\rightmargin=0.15in}\item[]}%
  {\endlist}

\usepackage{multicol}
\newtcolorbox{myquote}[1][]{
    colback=black!10,
    colframe=black!10,
    notitle,
    sharp corners,
    enhanced,
    breakable,
    left=2pt,
    right=2pt,
    top=2pt,
    bottom=2pt,
    ignore nobreak,
}

\definecolor{metacolor}{HTML}{0072B2} 
\definecolor{R1color}{HTML}{800080} 
\definecolor{R2color}{HTML}{008866} 
\definecolor{R3color}{HTML}{D55E00} 




\newtcolorbox{myquoteMeta}[1][]{
    colback=black!3,
    colframe=black!3,
    notitle,
    sharp corners,
    borderline west={1.5pt}{0pt}{blue},
    enhanced,
    breakable,
    left=2pt,
    right=2pt,
    top=0pt,
    bottom=0pt,
    ignore nobreak,
}

\newtcolorbox{myquoteR1}[1][]{
    colback=black!3,
    colframe=black!3,
    notitle,
    sharp corners,
    borderline west={1.5pt}{0pt}{R1color},
    enhanced,
    breakable,
    left=2pt,
    right=2pt,
    top=0pt,
    bottom=0pt,
    ignore nobreak,
}

\newtcolorbox{myquoteR2}[1][]{
    colback=black!3,
    colframe=black!3,
    notitle,
    sharp corners,
    borderline west={1.5pt}{0pt}{R2color},
    enhanced,
    breakable,
    left=2pt,
    right=2pt,
    top=0pt,
    bottom=0pt,
    ignore nobreak,
    #1
}

\newtcolorbox{myquoteR3}[1][]{
    colback=black!3,
    colframe=black!3,
    notitle,
    sharp corners,
    borderline west={1.5pt}{0pt}{R3color},
    enhanced,
    breakable,
    left=2pt,
    right=2pt,
    top=0pt,
    bottom=0pt,
    ignore nobreak,
}

\newcounter{expmsgctr}
\newcommand{\expmessage}[1]{%
  \refstepcounter{expmsgctr}%
  \noindent\textbf{Observation:}~#1%
}

\usepackage{tcolorbox}
\definecolor{mycolor}{rgb}{0.122, 0.435, 0.698}

\begin{document}








\title[\textsf{NeurBench}: A Benchmark Suite for Learned Database Components with Drift Modeling]{\textsf{NeurBench}: A Benchmark Suite for Learned Database Components with Drift Modeling}









\author{
Zhanhao Zhao$^1$, Haotian Gao$^1$, Naili Xing$^1$, Lingze Zeng$^1$,  Meihui Zhang$^2$, \\ Gang Chen$^3$, Manuel Rigger$^1$, Beng Chin Ooi$^3$
}

\affiliation{
\fontsize{10}{10}\textit{$^1$ National University of Singapore}
\qquad\fontsize{10}{10}\textit{$^2$ Beijing Institute of Technology}
\qquad	\fontsize{10}{10}\textit{$^3$ Zhejiang University} \\
\fontsize{9}{9}{\{zhzhao, gaohaotian, xingnl, lingze, rigger\}@comp.nus.edu.sg} \qquad
\fontsize{9}{9}{meihui\_zhang@bit.edu.cn}  \qquad 
\fontsize{9}{9}{\{cg, ooibc\}@zju.edu.cn}
\country{}
}

\renewcommand{\shortauthors}{Zhanhao Zhao, Haotian Gao, Naili Xing, Lingze Zeng,  Meihui Zhang, Gang Chen, Manuel Rigger, Beng Chin Ooi}

\keywords{Learned database; data and workload drift; benchmark; AIxDB}

\sloppy




\begin{abstract}
Learned database components, which deeply integrate machine learning into their design, have been extensively studied in recent years.
Given the dynamism of databases, where data and workloads continuously drift, it is crucial for learned database components to remain effective and efficient in the face of data and workload drift.
Robustness, therefore, is
a key factor in assessing their practical applicability.
Although recent works examine learned database components under specific drift, they fail to enable systematic performance evaluations across a broad range of drift or under customized drift as needed.
This paper presents \dbname, a new benchmark suite that supports evaluating learned database components under measurable and controllable data and workload drift.
We 
quantify diverse types of drift by introducing a 
key
concept called the drift factor.
Building on this formulation, we propose a drift-aware data and workload generation framework that effectively simulates real-world drift while preserving inherent correlations.
\vldbrevision{Experimental results demonstrate the effectiveness of \dbname in generating realistic data and workload drift, while providing insights into the performance of representative learned database components under different drift scenarios.}
\end{abstract}
\maketitle

\section{Introduction}
\label{sec:intro}

Database systems are increasingly embracing machine learning (ML) techniques in their design.
Key database components, such as query optimizers, indexes, and concurrency control (CC), are being upgraded with ML models to boost performance.
For example, learned query optimizers~\cite{DBLP:journals/pvldb/ZhuCDCPWZ23,DBLP:conf/sigmod/MarcusNMTAK21,DBLP:journals/pvldb/YuC0L22,DBLP:conf/sigmod/YangC0MLS22,DBLP:journals/pvldb/MarcusNMZAKPT19} treat query optimization as an ML problem to obtain more efficient query plans.
Learned indexes~\cite{DBLP:conf/sigmod/DingMYWDLZCGKLK20,DBLP:conf/sigmod/KraskaBCDP18,DBLP:journals/pvldb/WuZCCWX21,DBLP:conf/ppopp/TangWDHWWC20} leverage ML models to predict the positions of query keys, enabling faster key lookups than traditional indexes.
Similarly, learned CC algorithms~\cite{DBLP:conf/osdi/WangDWCW0021} predict optimal CC actions (\emph{e.g.}, locking strategies and blocking timeouts) for concurrent reads and writes to improve transaction throughput.

One major challenge faced by learned database components is the dynamic nature of databases, commonly referred to as \emph{data and workload drift}~\cite{DBLP:journals/pacmmod/KurmanjiTT24,DBLP:journals/pacmmod/KurmanjiT23,DBLP:journals/pacmmod/WuI24}.
Specifically, the data stored in databases constantly changes over time due to inserts, updates, and deletions, while workloads also evolve, with drift in query patterns~\cite{DBLP:journals/pvldb/NegiWKTMMKA23,DBLP:journals/pacmmod/WuI24} and query/transaction arrival rates~\cite{DBLP:conf/sigmod/LinTLCW21,DBLP:journals/pvldb/RenenL23,DBLP:journals/pvldb/RenenHPVDNLSKK24}.
In contrast, ML models typically derive insights from static datasets.
This fundamental difference between the paradigms of databases and ML can cause learned database components to be ineffective and outdated in the face of data and workload drift.
For instance, as data is updated over time, learned query optimizers may lose precision because the knowledge they initially learned for generating query plans becomes obsolete~\cite{DBLP:journals/pvldb/LehmannSS24,DBLP:journals/pvldb/LiWZDZ023}.
Likewise, learned CC algorithms may choose actions that were once optimal but no longer suit the current workload with drifted transaction arrival rates~\cite{DBLP:conf/osdi/WangDWCW0021}.

Achieving robust learned database components that remain effective regarding data and workload drift is crucial.
In tandem with the advancement of ML techniques, more learned database components are being designed with robustness in mind.
For example, learned query optimizers~\cite{DBLP:conf/sigmod/MarcusNMTAK21,DBLP:journals/pvldb/ZhuCDCPWZ23} adopt techniques such as periodic model retraining and fine-tuning to handle workload drift.
Learned indexes~\cite{DBLP:conf/sigmod/DingMYWDLZCGKLK20,DBLP:journals/pvldb/WuZCCWX21} have also become updatable by supporting structural modifications with model retraining in response to new data.
These rapid developments naturally raise a pertinent question: 
How well do 
existing learned database components
perform under data and workload drift, and to what extent does drift affect their performance?
Answering this question can provide valuable insights to guide the design of more robust learned database components.

\begin{vldbrevisionenv}
Thus far, several benchmarks have been proposed to examine learned database components under specific types of drift.
For example, existing evaluations of learned query optimizers typically model workload drift by training models on some queries and testing on other queries~\cite{DBLP:journals/pvldb/NegiWKTMMKA23,DBLP:journals/pvldb/LehmannSS24}, and model data drift by training models on one data snapshot and testing on other snapshots~\cite{DBLP:conf/sigmod/KimJSHCC22}.
Similarly, existing works~\cite{DBLP:journals/pvldb/SunZL23,DBLP:journals/pvldb/WongkhamLLZLW22} initialize learned indexes on a filtered subset of the data, and reinsert the excluded records at test time to model data drift.
Although effective, these drift treatments are case-specific and therefore only partially reveal how drift affects the performance of learned database components.
Some important insights, such as how their key design choices impact performance under drift, or the threshold at which learned components cease to be effective, remain implicit.
Completely addressing this requires new benchmarks that support comprehensive empirical evaluations across a broad range of data and workload drift.
\end{vldbrevisionenv}
Systematic benchmarks typically use a metric to quantify the variable of interest and adjust it during evaluations to, in principle, cover all possible scenarios.
For example, previous works~\cite{DBLP:conf/cloud/CooperSTRS10} leverage the well-defined skew factor to measure data skewness and assess system performance under varying skew levels.
Following this principle, enabling systematic benchmarks under drift has to address two key requirements:
First, designing a metric to quantify drift.
Given the various types of drift in databases, such as data drift, query pattern drift, \emph{etc.}, defining a unified and effective drift metric is not trivial.
\vldbrevision{Second, supporting meaningful evaluation under controlled drift.
In practice, real-world data and workload drift follow certain patterns, such as data drift often occurring with specific inherent correlation changes~\cite{DBLP:journals/pvldb/NegiWKTMMKA23}.
Controlling drift injection while preserving realistic drift patterns is therefore challenging.}


In this paper, we present \dbname, a novel benchmark suite designed to evaluate learned database components under controllable data and workload drift. 
We first introduce a concept called the {\em drift factor} to quantify both data and workload drift.
In particular, we model data and workloads as underlying probabilistic distributions and measure drift as the distance between these distributions.
\vldbrevision{We formally define the drift factor based on the distribution distance, {\em i.e., Jensen-Shannon (JS) divergence}~\cite{DBLP:books/daglib/0001548}, ensuring
a bounded value range.
Given that the performance of ML models is generally sensitive to the distance between training and testing distributions~\cite{DBLP:conf/iclr/GulrajaniL21,DBLP:journals/ml/Ben-DavidBCKPV10, DBLP:conf/icml/ZeighamiS24}, the drift factor allows us to define and control evaluation scenarios with varying degrees of drift between training and testing data and workloads, thereby enabling the comprehensive robustness evaluation for learned database components.}

Leveraging this concept, we then design \dbname to supply data and workload drift according to a specified drift factor.
Ideally, it would be desirable to adopt real-world data and workload drift directly.
Nonetheless, purely relying on real-world drift constrains the scope of drift that can be evaluated.
Real drift only reflects what has happened, and it cannot cover unseen or more severe drift that may occur in the future. 
Moreover, real data drift may be incomplete or inaccessible.
For example, the public IMDB dataset switched to only providing partial data after 2017~\cite{imdbold}, making real drift not fully observable.
To address these problems, we opt to synthesize data and workloads that capture realistic drift patterns while allowing effective control over the drift factor.
However, real-world data and workloads typically exhibit complex correlations.
For example, movie ratings and the number of votes are highly correlated in the IMDB dataset~\cite{imdb}.
Thus, we must capture the inherent correlations in the drift generation process to approximate real data and workload drift.

We therefore propose a drift-aware data and workload generation framework that enables the synthesis of realistic drift.
Specifically, we first utilize a base generative model, called diffuser, to fit the underlying distribution of the real data and workload.
\vldbrevision{We use the Denoising Diffusion Probabilistic Model (DDPM)~\cite{DBLP:conf/nips/HoJA20} as the base generative model.}
We then train an independent \emph{drifter} module to capture the real-world data and workload drift patterns.
Given a drift factor, the drifter guides the base model, \emph{i.e.}, the denoising process of DDPM, to synthesize data and workloads that approximate real drift with the desired severity.
Further, by decoupling the drifter from the base model, our framework avoids retraining the base model when adapting to new drift patterns, enabling generalization across various types of drift.



In summary, we make the following contributions:

\begin{itemize}[leftmargin=*]

\item We introduce {\dbname}, a practical benchmark suite that facilitates the evaluation of learned database components under controllable data and workload drift.
The suite will help further advance the design of next generation learned components.

\item We formalize a unified concept called the drift factor to model data and workload drift, enabling
the systematic performance evaluation of 
learned 
database components under diverse drift scenarios.
\dbname allows users to configure the settings to match their own environment and applications.

\item We propose a drift-aware data and workload generation framework that synthesizes data and workload drift 
according to a specified drift factor while preserving their inherent correlations.


\item \vldbrevision{We conduct extensive experiments to evaluate \dbname, 
and the results demonstrate the effectiveness of \dbname in generating realistic data and workload drift.}

\item We use {\dbname} to evaluate start-of-the-art learned query optimizers, learned indexes, and learned CC.
The results reveal how their design choices trade off performance under varying degrees and kinds of drift, offering insights into their robustness and generalizability.

\end{itemize}

\dbname is a benchmark suite designed to evaluate the efficiency and robustness of learned database components in NeurDB~\cite{neurdb-scis-24,neurdb2025cidr}.

The remainder of the paper is structured as follows.
Section~\ref{sec:background} provides relevant background.
Section~\ref{sec:design} formally defines the drift factor and describes the overall design of \dbname. 
Section~\ref{sec:modeling} presents the drift-aware data and workload generation framework in detail.
Section~\ref{sec:preparation} presents the implementation of \dbname.
Section~\ref{sec:preparation} describes the experimental setup, and Section~\ref{sec:evaluation} shows the results.
Section~\ref{sec:related} discusses the related work, and Section~\ref{sec:conclusion} concludes.

\section{Background}
\label{sec:background}

In this section, we provide the necessary background on learned database components, data and workload drift in databases, and diffusion models.
\extended{\subsection{Learned Database Components}}
Since learned database components typically rely on ML models, 
we provide a general definition of learned database components by mapping them to learnable ML functions:



\begin{definition}[Learned Database Component]
\label{def:learned_component}
Consider a database that stores data $\vb{R}$ and handles workload $\vb{Q}$.
A learned database component can be formulated as an ML function
$f(\cdot; \theta):\vb{Z} \to \vb{Y}$, where $\theta$ denotes the trainable parameters.
The input space $\vb{Z}$ consists of features derived from the data $\vb{R}$ and workload $\vb{Q}$, such as data distributions and query patterns.
The output space $\vb{Y}$ represents the predictions or decisions, such as estimated query plans, index positions, or concurrency control actions.
\extended{The parameters $\theta$ are trained by a stochastic learning algorithm $\mathcal{A}$ (\emph{e.g.}, Stochastic Gradient Descent), which adjusts $\theta$ to minimize a loss function $\mathcal{L}(f(\vb{Z}; \theta), \vb{Y})$ representing the discrepancy between the predicted output and the ground truth. 
The learning process can be formally expressed as:
\[
\theta = \mathcal{A}(\vb{Z}, \theta^r, \mathcal{L}),
\]
where $\theta^r$ represents the initial random weights of the model, and $\mathcal{A}$ iteratively updates $\theta$ based on the gradients computed \emph{w.r.t.} the loss function $\mathcal{L}$. }
\end{definition}






\extended{
A learned database component may continuously refine its model parameters $\theta$ based on feedback from actual query execution and data/workload drift, ensuring its output $\vb{Y}$ meets the following performance objective:

\begin{definition}[Performance Objective]
\label{def:performance_objective}
An optimal learned database component is expected to maximize system performance (\emph{e.g.}, minimizing latency, execution cost, or resource consumption) over a time window $\vb{T} = (ts_1, ts_2, \dots, ts_N)^{\top}$.
Let $\psi(\vb{R}^{(i)}, \vb{Q}^{(i)})$ denote the performance metric at time $ts_i$, where $\vb{R}^{(i)}$ and $\vb{Q}^{(i)}$ are the data and workload at time $ts_i$, respectively,
the performance objective is then to find optimal model parameters $\theta^*$, such that
\begin{center}
$\theta^* = \argmax_\theta \mathbb{E}_{ts_i \in \vb{T}}[ \psi(\vb{R}^{(i)}, \vb{Q}^{(i)})]$,
\end{center}
\noindent where $\mathbb{E}[\cdot]$ denotes the 
expected performance.
\end{definition}
}

\subsection{Data and Workload Drift}
\label{sec:background:drift}

Data stored in databases, as well as the workloads they handle, are subject to continuous drift.
{\em Data drift} occurs due to operations including inserts, updates, deletes, and schema modifications.
{\em Workload drift} refers to variations in the queries and transactions processed by the database over time, which can be further categorized into, 1) Query pattern drift: changes in query structure, such as join patterns (\texttt{FROM} clauses) and predicates (\texttt{WHERE} clauses);
2) Arrival rate drift: fluctuations in the volume and concurrency of queries and transactions.



The effectiveness of ML models is generally related to the distance between training and testing distributions~\cite{DBLP:conf/iclr/GulrajaniL21,DBLP:journals/ml/Ben-DavidBCKPV10, DBLP:conf/icml/ZeighamiS24}, and therefore, we consider measuring the data and workload drift based on the distance of their underlying distributions.
Drift may occur in the data, the workload, or both, depending on whether the distribution of $\vb{R}$, $\vb{Q}$, or their joint distribution drifts over time.

\begin{figure}[t]
\centering
\includegraphics[width=0.93\linewidth]{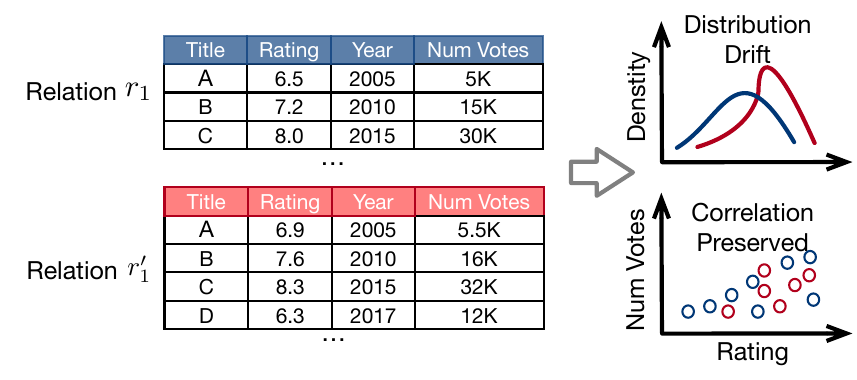}
\vspace{-4mm}
\caption{\vldbrevision{Example of Real-World Data Drift}}
\label{fig:bg:drift_example}
\vspace{-4mm}
\end{figure}

\extended{
\begin{figure}[t]
\centering
\subfigure[Distribution]{
\includegraphics[width=0.7\linewidth]{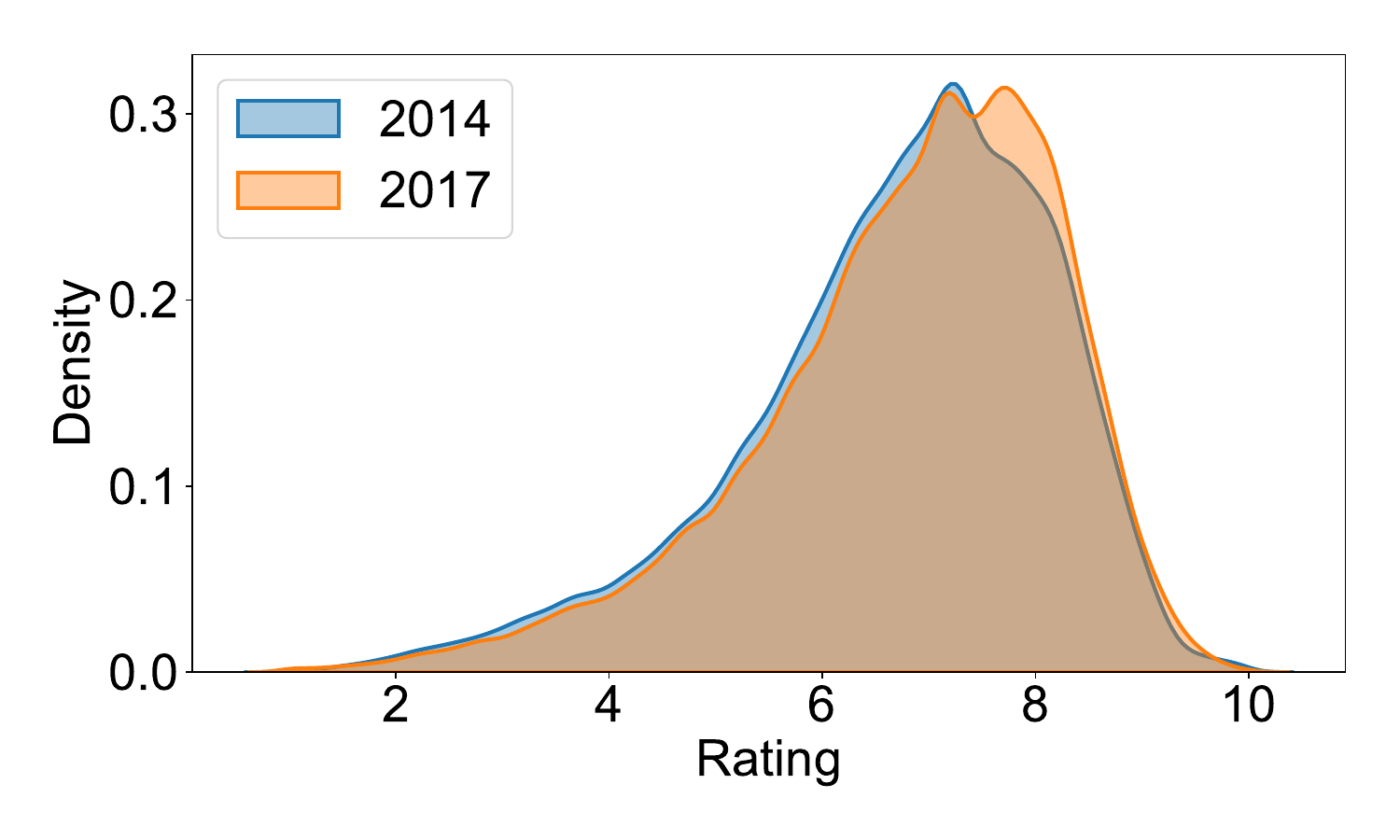}
}
\subfigure[Correlation]{
\includegraphics[width=0.7\linewidth]{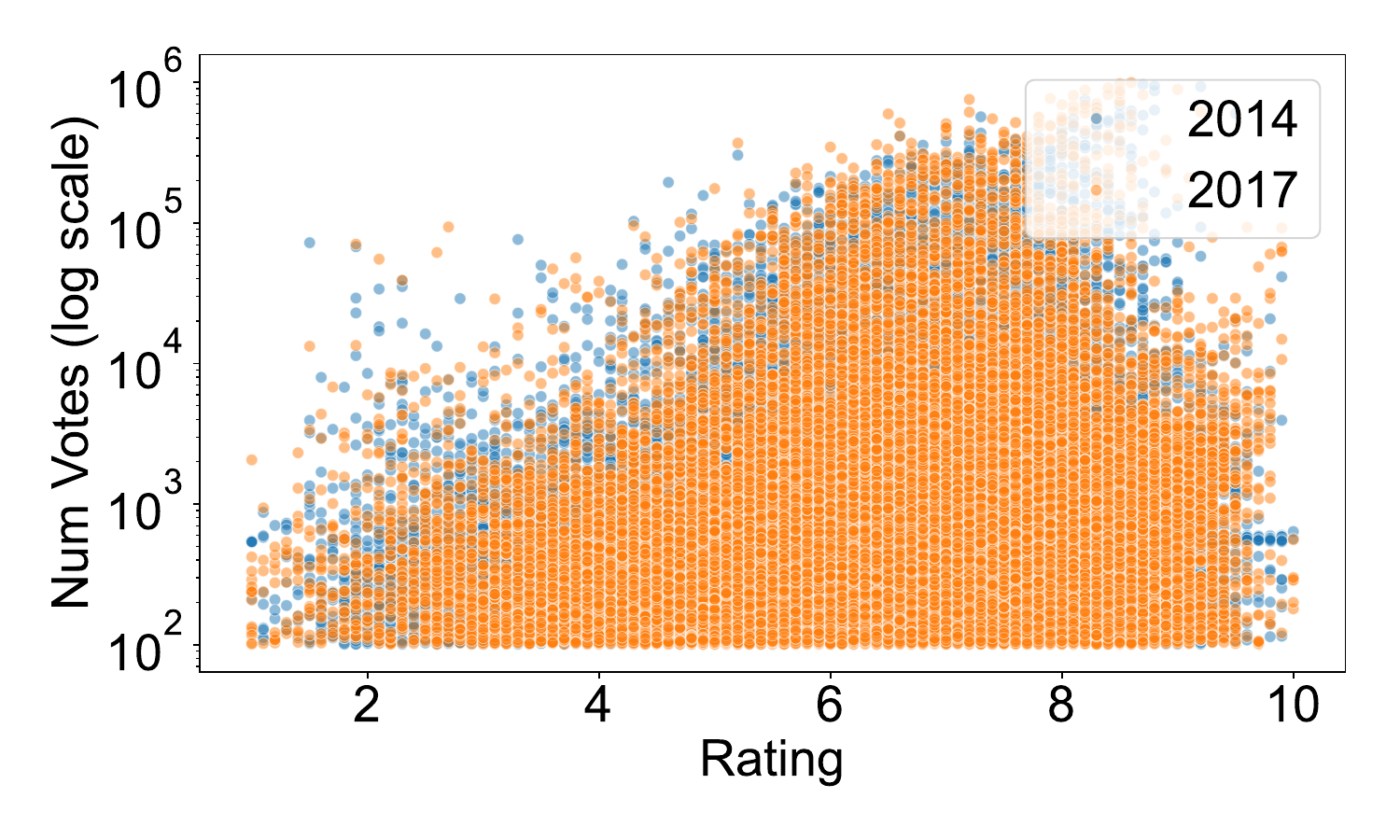}
}
\vspace{-4mm}
\caption{\vldbrevision{Statistics of Real-World Data Drift}}
\label{fig:bg:drift_example_statistics}
\vspace{-2mm}
\end{figure}
}

\begin{definition}[Data and Workload Drift]
\label{def:bg_drift}
Consider a database that operates over a time window $\vb{T}$ consisting of $N$ time slots, \emph{i.e.}, $\vb{T} = (ts_1, ts_2, \dots, ts_N)^{\top}$, where $N \geq 2$.
Let $\vb{R}^{(i)}$ and $\vb{Q}^{(i)}$ be random variables representing the state of the data and the workload, respectively, at time $ts_i$.
Data and workload drift occur when the distributions of data $\vb{R}$ and workload $\vb{Q}$ exhibit temporal changes across the time window $\vb{T}$. 
Let $\mathbb{P}_{\vb{R}^{(i)}, \vb{Q}^{(i)}} (\vb{r}, \vb{q})$ represent the joint probability density function of data $\vb{R}$ and workload $\vb{Q}$ at time $t_i$. 
Drift happens if, for any two timestamps $i, j \in N$, where $i \neq j$,
\[
\mathbb{P}_{\vb{R}^{(i)},\vb{Q}^{(i)}} (\vb{r}, \vb{q}) \neq \mathbb{P}_{\vb{R}^{(j)},\vb{Q}^{(j)}} (\vb{r}, \vb{q}).
\] 
\end{definition}
%
%




Constructing realistic data and workload drift that captures meaningful distribution drift is crucial.
\begin{vldbrevisionenv}
As shown in Figure~\ref{fig:bg:drift_example}, we illustrate using movie ratings and their associated number of ratings (num\_rates) as an example.
Let the relation $r_1$ denote a data snapshot from 2013 and $r_1’$ from 2017.
First, while the schema remains unchanged, the underlying data distribution can drift over time as new data arrives~\cite{DBLP:journals/pvldb/NegiWKTMMKA23,DBLP:journals/pvldb/RenenHPVDNLSKK24}.
Second, inherent data correlations are largely preserved. 
Real-world data drift typically maintains the original correlation structure and distributional shape, with shifts mainly in density.
Here, the movie rating is consistently positively correlated with the num\_rates.
For example, movie ratings and num\_rates consistently exhibit a positive correlation as movies get added. 
Other attribute pairs, such as country and language, or release\_year and genre, also tend to remain highly correlated across time, as observed in prior studies~\cite{DBLP:journals/pvldb/NegiWKTMMKA23, DBLP:journals/pvldb/LehmannSS24}.
Similarly, workload drift typically manifests as changes in the frequency or usage of query templates and predicates, while still preserving structural elements such as join patterns or access paths~\cite{DBLP:journals/pvldb/RenenL23,DBLP:journals/pvldb/RenenHPVDNLSKK24}.
These insights form the foundation for the drift modeling framework in \dbname.
\end{vldbrevisionenv}

In our design, we learn the underlying distribution drift from real data and workloads, and propose a generation framework that enables the controllable synthesis of data and workload drift.


\subsection{Diffusion Model}
\label{bg:diffusion}


Diffusion models are powerful generative models that have been recently adopted to synthesize high-quality data, such as images~\cite{DBLP:conf/nips/DhariwalN21}, audio~\cite{DBLP:conf/icml/LiuCYMLM0P23}, and tabular data~\cite{DBLP:journals/pacmmod/LiuFTLD24,DBLP:conf/icml/KotelnikovBRB23}.
Inspired by the physical process of diffusion, where information (\emph{e.g.}, particles, data) is gradually dispersed over time, 
these models learn to reverse this process so that a target distribution can be generated from a given prior distribution.
In particular, diffusion models operate in two key stages, \emph{i.e.}, forward process and reverse process, with slight variations in the details depending on the specific model.
For illustration purposes, we use the Denoising Diffusion Probabilistic Model (DDPM) as a representative to formally explain these stages:

\paragraphtitle{Diffusion (Forward) Process}
This process incrementally corrupts the data by adding Gaussian noise in discrete timesteps, ultimately transforming it into a simple noise distribution. 
Given a data sample \( \mathbf{x}_0 \sim q(\mathbf{x}) \) from the true data distribution \( q(\mathbf{x}) \), the forward process is formulated as a Markov chain:
\[
q(\mathbf{x}_t | \mathbf{x}_{t-1}) = \mathcal{N}(\mathbf{x}_t; \sqrt{1-\beta_t} \mathbf{x}_{t-1}, \beta_t \mathbf{I}),
\]
where \( \beta_t \) represents the variance schedule controlling the noise added at each step \( t \). 
The data distribution after \( T \) timesteps, \( q(\mathbf{x}_T) \), approximates an isotropic Gaussian distribution \( \mathcal{N}(\mathbf{0}, \mathbf{I}) \).

\paragraphtitle{Denoising (Reverse) Process}
This process reconstructs the target data $\mathbf{x}_0$ by iteratively denoising  $\mathbf{x}_t$ in reverse timesteps, starting from $\mathbf{x}_T \sim \mathcal{N}(\mathbf{0}, \mathbf{I})$.
The reverse process $p(\mathbf{x}_{t-1} | \mathbf{x}_{t})$ can be approximated as a Gaussian distribution:
\[
p(\mathbf{x}_{t-1} | \mathbf{x}_t) = \mathcal{N}(\mathbf{x}_{t-1}; \boldsymbol{\mu}(\mathbf{x}_t, t; \theta), \sigma_t^2 \mathbf{I}),
\]
where \( \boldsymbol{\mu} \) is learned parameters that estimate the mean of the reverse process, while $\sigma_t^2$ is derived from  $\beta_t$.

\paragraphtitle{Guided Diffusion}
The reverse process can incorporate auxiliary information $c$, yielding a conditional reverse process $p(\mathbf{x}_{t-1} | \mathbf{x}_t, c)$,
where the conditioning variable $c$ may represent factors such as correlation-preserving constraints, or contextual information~\cite{DBLP:conf/nips/DhariwalN21,DBLP:journals/pacmmod/LiuFTLD24}.
By applying the Bayes' theorem, $p(\mathbf{x}_{t-1} | \mathbf{x}_t, c)$ can be decomposed as:
\[
p(\mathbf{x}_{t-1} | \mathbf{x}_t, c) \propto p(\mathbf{x}_{t-1} | \mathbf{x}_t) \textrm{Pr}(c |\mathbf{x}_{t-1}, \mathbf{x}_{t}).
\]
In particular, $p(\mathbf{x}_{t-1} | \mathbf{x}_t)$ should perform unconditional generation, while
$\textrm{Pr}(c |\mathbf{x}_{t-1}, \mathbf{x}_{t})$ guides the reverse process to introduce the specified condition $c$.

Inspired by such guided diffusion models, we treat the introduction of drift to data and workloads as a specific condition, and propose a novel drift-aware data and workload generation framework based on DDPM.
Our generation process differs from prior work~\cite{DBLP:journals/sigmod/RGPW24} in that we directly inject drift signals into the denoising steps.
This allows \dbname to generate drifted data controlled by a drift factor with correlation retained.

\section{Overview of \dbname} \label{sec:design}

In this section, we define the drift factor and then present the system overview of \dbname.

\subsection{Drift Factor Modeling}

We define a drift factor that can uniformly and effectively describe the data and workload drift.
In our problem setting, the drift factor must satisfy two key properties:
1) It should effectively capture the differences between data or workloads while providing a bounded value range for users to specify the desired drift. 
2) Its value should provide a consistent representation for all types of drift.

According to Definition~\ref{def:bg_drift}, where drift is mapped to the underlying distribution of data and workloads, we define the drift factor based on the distance of distributions.
To satisfy the properties above, we utilize the {\em Jensen-Shannon (JS) divergence}~\cite{DBLP:books/daglib/0001548}.
Specifically, given a distribution $\mathbb{P}$ and the drifted one $\mathbb{P}'$, their JS divergence $\mathcal{D}_{JS}(\mathbb{P} \parallel \mathbb{P}')$ is defined as:
%
\begin{align*}
\mathcal{D}_{JS}(\mathbb{P} \parallel \mathbb{P}')
= \frac{1}{2} \mathcal{D}_{KL}(\mathbb{P} \parallel M) + \frac{1}{2} \mathcal{D}_{KL}(\mathbb{P}' \parallel \mathbb{M}),
\end{align*}
\noindent where $\mathbb{M} = \frac{1}{2} (\mathbb{P} + \mathbb{P}')$, and $D_{KL}$ is the Kullback–Leibler (KL) divergence, which is defined by:
\[
\mathcal{D}_{KL}(\mathbb{P} \parallel \mathbb{P}') = \sum_x \mathbb{P}(x) \log \frac{\mathbb{P}(x)}{\mathbb{P}'(x)}.
\]
$\mathcal{D}_{JS}$ satisfies $\mathcal{D}_{JS}(\mathbb{P} \parallel \mathbb{P}') = \mathcal{D}_{JS}(\mathbb{P}' \parallel \mathbb{P})$, and is bounded within $[0, \log 2]$.
Normalizing it by $\log 2$, the drift factor $d$ is defined as:
\[
d(\mathbb{P}, \mathbb{P}') = \frac{\mathcal{D}_{JS}(\mathbb{P} \parallel \mathbb{P}')}{\log 2}, \quad d \in [0, 1].
\]

We allow users to specify the desired drift level using a {\em drift factor} $d$, where $d\in[0,1]$.
By default, users can independently indicate drift factors for either data or workload to achieve effective control over each.
A drift factor of $d$=0 indicates no drift, while $d$=1 denotes that the entire data or workload is fully drifted.



\subsection{System Overview}

\begin{vldbrevisionenv}
As shown in Figure~\ref{fig:overview}, we design \dbname with two key modules, a \emph{drift-aware data and workload generator}, and a \emph{performance evaluator}.
In particular, the {\em generator}, based on guided diffusion, consists of:
1) the \emph{diffuser}, a DDPM-based generative model that learns the underlying distribution of the real data and workloads;
2) the \emph{drifter}, a classification model that guides the diffuser to generate data and workloads with specified drift.
The evaluator is responsible for assessing the performance of learned database components under drift.
We organize the overall workflow of \dbname into three stages: training, drift-aware generation, and evaluation.
In the following, we focus on how to use \dbname to conduct experiments under data drift for illustration purposes.
\end{vldbrevisionenv}

\begin{figure}[t]
\centering
\includegraphics[width=0.86\linewidth]{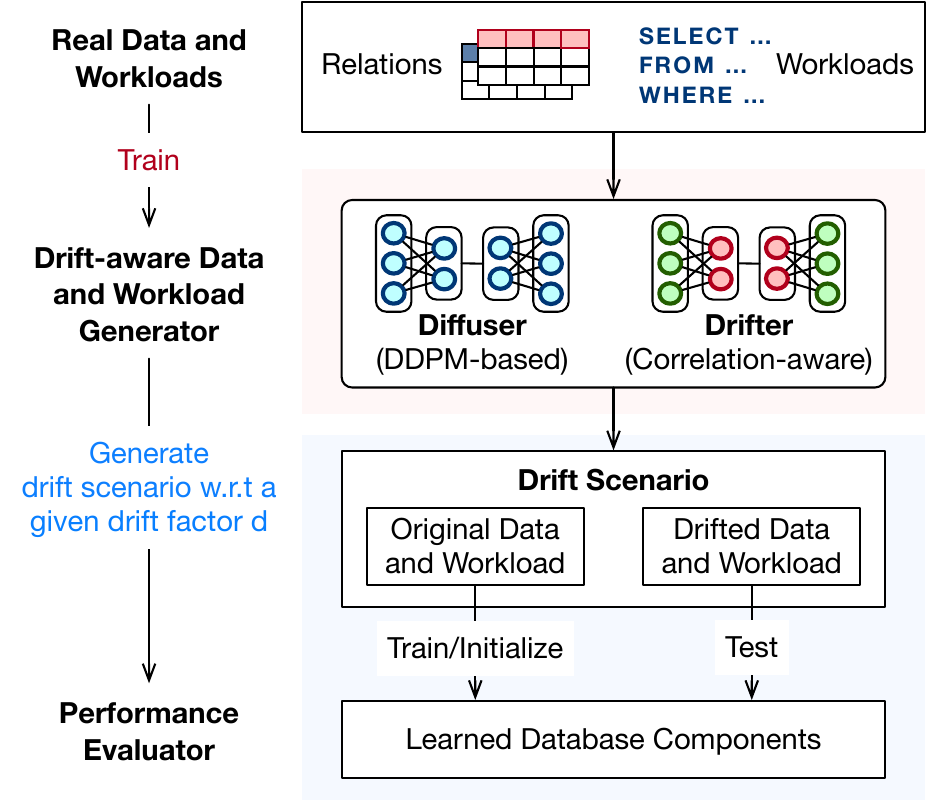}
\vspace{-2mm}
\caption{\vldbrevision{System Overview of \dbname}}
\label{fig:overview}
\vspace{-4mm}
\end{figure}

\begin{vldbrevisionenv}
\paragraphtitle{Diffuser and Drifter Training} 
We input real data and workloads to train the drift-aware data and workload generator, with the diffuser and drifter trained separately.
The diffuser is first trained on a fixed data snapshot (\emph{e.g.}, 2013 data) to understand the underlying distribution $\mathbb{P}_{\text{base}}$, following the DDPM framework.
The drifter is then trained to model real drift patterns using a sequence of real data snapshots (\emph{e.g.}, 2013–2016).
In particular, the drifter maps real drift patterns to modulation vectors that condition the diffuser’s denoising process, enabling the diffuser to generate data that reflects realistic drift while preserving key correlations.
By decoupling the training of the diffuser and drifter, our framework allows the diffuser to remain fixed while adapting to new types of drift by simply learning new drifter mappings, thereby improving generality and reusability.
We will detail the diffuser and drifter training in Section~\ref{sec:gen_construction}.
\end{vldbrevisionenv}

\paragraphtitle{Drift-aware Data and Workload Generation} 
Given a specified drift factor $d$ and original data as input, we use the generator to drift the given data to a target data that aligns with $d$.
In particular, starting from a standard normal distribution, the generator iteratively invokes the diffuser to denoise the distribution, while leveraging the drifter to introduce controlled drift at each denoising step, gradually transforming the distribution into the target distribution.
Section~\ref{sec:ctrl-drift-gen} details how to use the trained diffuser and drifter to generate required data and workload drift.





\begin{vldbrevisionenv}
\paragraphtitle{Performance Evaluation} 
We assess the robustness of learned database components under data and workload drift using the evaluator. 
For a given drift factor $d$, we first use the generator to produce a pair of data: the base and its drifted counterpart.
A target component is then trained on the original data and workload and tested on the drifted data and workload.
By repeating this procedure across a spectrum of drift factors, we can systematically evaluate performance degradation to varying degrees of drift.
\end{vldbrevisionenv}
\section{Controllable Drift Synthesis Framework}
\label{sec:modeling}
In this section, we detail the design of the drift-aware data and workload generation framework.

\subsection{Mapping Data and Workload Drift to Distribution Drift}
\label{subsec:drift_factor}

Here we describe how we map both data and workload drift to the corresponding distribution drift in detail.
For data, we consider a relational database $\vb{R}$ containing multiple tables, each comprising a set of attributes.
The distribution of $\vb{R}$, $\mathbb{P}_{\vb{R}}$, can be approximated as the product 
of its marginal attribute distributions, \emph{i.e.},
$\mathbb{P}_{\vb{R}}(A_1, A_2, \dots, A_m) \approx \prod_{i=1}^m \mathbb{P}(A_i)$,
where $\mathbb{P}(A_i)$ represents the distribution of the $i$-th attribute.
Note that this approximation is used solely for distribution measurement. 
During data generation, we jointly synthesize multiple columns per tuple to preserve attribute correlations, as discussed in Section~\ref{sec:ctrl-drift-gen}.
For each attribute, we employ specific strategies based on its data type to approximate its distribution.
Categorical attributes (\emph{e.g.}, \texttt{CHAR}, \texttt{VARCHAR}) and discrete numerical values are handled by computing distributions over all observed values in the table.
Continuous numerical attributes
are approximated by discretizing values into a fixed number of bins. 
The corresponding distribution is then estimated by computing the probability density of each bin.
This binning strategy is commonly employed in database histograms to estimate attribute distributions and query selectivity.

For a workload consisting of $N$ queries $\vb{Q} = \{\vb{q}_1, \vb{q}_2, \dots, \vb{q}_N\}$, we define the workload distribution 
$\mathbb{P}_\vb{Q}$ as the distribution over {\em query templates}.
Specific features, such as query patterns, query arrival rates, \emph{etc.}, characterize each query template.
To model query patterns and arrival rates in a unified manner, we define a workload table with 3 attributes: Join Pattern, Predicate, and Query Interval.
The Join Pattern and Predicate columns capture the query pattern, representing table join patterns (\emph{e.g.}, $r_1$ $\bowtie$ $r_2$), and predicates (\emph{e.g.}, Bal = `10K'). 
The Query Interval column quantifies the arrival rate by recording the time gap between consecutive queries.
We control the distributions of the workload table to synthesize workloads with query pattern drift, arrival rate drift, or both, while preserving the inherent correlations.
Note that the workload table used here is for illustration purposes; it can be extended with additional attributes for more complex queries.

\subsection{Theoretical Analysis}
\label{sec:theoretical}

To generate data and workload drift that satisfy a specified $d$, we provide a formal problem definition and then theoretically explain how we enable the drift-aware data and workload generation.

\subsubsection{Problem Definition}
\label{sec:problem_definition}

We define a database running in a time window $\vb{T}$ with $N (N \geq 2)$ time slots, \emph{i.e.}, $\vb{T} = (ts_1, ts_2, ..., ts_N)^{\top}$.
Recall that we denote $\vb{R}$ and $\vb{Q}$ as random variables of the data and workload in the database, respectively, and we map them to their underlying distributions $\mathbb{P}_\vb{R}^{(i)}$ and $\mathbb{P}_\vb{Q}^{(i)}$ at $ts_i$.
Formally, we define the drift-aware data and workload generation below.

\begin{definition}[Drift-aware Data and Workload Generation] \label{def:problem_definition}
Given a series of drift factors $\vb{D} = (d_1, d_2, ..., d_N)^{\top}$, where each drift factor $d_i$ quantifies the degree of change in the distribution between adjacent time slots $(ts_j, ts_i)$.
The distribution of either data or workload at time $ts_j$ is obtained by transforming the distribution at time $ts_i$ with a function $g$, according to the drift factor $d_i$:
\begin{gather*}
\mathbb{P}_\vb{R}^{(j)} = g_{\vb{R}}(\mathbb{P}_\vb{R}^{(i)}, d_i, \mathcal{C}), \quad \mathbb{P}_\vb{Q}^{(j)} = g_{\vb{Q}}(\mathbb{P}_\vb{Q}^{(i)}, d_i, \mathcal{C}),
\end{gather*}
where $\mathcal{C}$ is the corresponding inherent correlations, and $g_{\vb{Z}}: \Omega_{\vb{Z}} \times [0, 1] \times \mathcal{C} \rightarrow \Omega_\vb{Z}$ is the drift function applied on input space $\vb{Z}$, transforming its distribution on the same domain.
\end{definition}

Given a distribution $\mathbb{P}^{(i)}$ and the drift factor $d_i$, 
the generated new distribution $\mathbb{P}^{(j)}$ should
1) exhibit drift quantified by $d$ from $\mathbb{P}^{(i)}$; and 2) preserve inherent correlations $\mathcal{C}$.
The quality of $\mathbb{P}^{(j)}$ degrades as either requirement is violated.
Mathematically, this is equivalent to finding a hypersurface in $\Omega_{\vb{Z}}$, the domain of input space $\vb{Z}$ (\emph{i.e.}, either data and workload), which satisfies the equation:
\begin{equation}
\label{eq:find-d}
d(\mathbb{P}_{i}, \mathbb{P}_{j}) \approx d^* \quad \text{subject to} \quad \mathcal{C}(\mathbb{P}_{j}) \approx \mathcal{C}(\mathbb{P}_i).
\end{equation}
\noindent However, there are infinite solutions to Equation~\ref{eq:find-d}, which makes it difficult to interpret and control.
To solve this, we limit the expressiveness of $\mathbb{P}_{j}$ through a DDPM-based generator.

\subsubsection{Theoretical Proof}
Given a desired drift factor $d$, our objective as outlined in Section~\ref{sec:problem_definition} is to generate the drifted data $\mathbf{x}_{\text{drift}}$ by sampling from $p(\mathbf{x}_{t-1} | \mathbf{x}_t, d)$, \emph{i.e.}, a conditional DDPM reverse process that incorporates the desired drift $d$.
As outlined in Section~\ref{bg:diffusion},
$p(\mathbf{x}_{t-1} | \mathbf{x}_t, d)$ can be decomposed as:
\begin{equation}
\label{eq:drift-bayes}
p(\mathbf{x}_{t-1} | \mathbf{x}_t, d) \propto p(\mathbf{x}_{t-1} | \mathbf{x}_t) \textrm{Pr}(d | \mathbf{x}_{t-1}, \mathbf{x}_t),
\end{equation}
where $p(\mathbf{x}_{t-1} | \mathbf{x}_t)$ can be modeled by the denoising process of DDPM, \emph{i.e.}, $p(\mathbf{x}_{t-1} | \mathbf{x}_t) = \mathcal{N}(\mathbf{x}_{t-1}; \boldsymbol{\mu}(\mathbf{x}_t, t; \theta), \sigma_t^2 \mathbf{I})$,
and $\textrm{Pr}(d | \mathbf{x}_{t-1}, \mathbf{x}_t)$ guides the process to introduce the specified drift $d$.
Since the drifted data $\mathbf{x}_{drift}$ is the target output, Equation~\eqref{eq:drift-bayes} can be rewritten as:
\begin{equation}
\label{eq:drift-x}
p(\mathbf{x}_{t-1} | \mathbf{x}_t, \mathbf{x}_{\text{drift}}) 
    \propto 
    p(\mathbf{x}_{t-1} | \mathbf{x}_t) 
    \textrm{Pr}(\mathbf{x}_{\text{drift}} | \mathbf{x}_{t-1}, \mathbf{x}_t).
\end{equation}
\noindent Note that the mean of the Gaussian distribution, $\boldsymbol{\mu}(\mathbf{x}_t, t; \theta)$, determines the trajectory of the denoising process~\cite{DBLP:journals/pacmmod/LiuFTLD24,DBLP:conf/nips/DhariwalN21}.
We therefore leverage $\textrm{Pr}(\mathbf{x}_{\text{drift}} | \mathbf{x}_{t-1}, \mathbf{x}_t)$ 
to introduce a controlled drift
to the mean $\boldsymbol{\mu}$.
%
\begin{theorem}
\label{th:drifter}
The controlled reverse distribution $p(\mathbf{x}_{t-1} | \mathbf{x}_t, \mathbf{x}_{\text{drift}})$ is able to be approximated using $\mathrm{Pr}(\mathbf{x}_{\text{drift}} | \mathbf{x}_t)$ and its gradient at $\mathbf{x}_t$, or $\nabla_{\mathbf{x}_t} \mathrm{Pr}(\mathbf{x}_{\text{drift}} | \mathbf{x}_t)$.
\end{theorem}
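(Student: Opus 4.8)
The plan is to reproduce, in the drift setting, the first-order (classifier-guidance) argument that a Gaussian reverse step conditioned on a target differs from the unconditional step only by a mean shift proportional to the gradient of the guidance log-probability. First I would take the logarithm of Equation~\eqref{eq:drift-x}. Since $p(\mathbf{x}_{t-1} \mid \mathbf{x}_t) = \mathcal{N}(\mathbf{x}_{t-1}; \boldsymbol{\mu}(\mathbf{x}_t, t; \theta), \sigma_t^2 \mathbf{I})$, its logarithm is the quadratic $-\frac{1}{2\sigma_t^2}\|\mathbf{x}_{t-1} - \boldsymbol{\mu}\|^2$ up to an additive constant, and this term carries all the dependence of the unconditional factor on $\mathbf{x}_{t-1}$. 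What remains is the guidance log-probability $\log \textrm{Pr}(\mathbf{x}_{drift} \mid \mathbf{x}_{t-1}, \mathbf{x}_t)$, and the whole argument reduces to approximating this term well enough that the sum stays Gaussian.

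Next I would collapse the guidance term onto a single noisy state. Because the forward diffusion is a Markov chain in which the clean target $\mathbf{x}_{drift}$ precedes $\mathbf{x}_{t-1}$, which in turn precedes the noisier $\mathbf{x}_t$, conditioning on $\mathbf{x}_{t-1}$ renders $\mathbf{x}_{drift}$ independent of $\mathbf{x}_t$; hence $\textrm{Pr}(\mathbf{x}_{drift} \mid \mathbf{x}_{t-1}, \mathbf{x}_t) = \textrm{Pr}(\mathbf{x}_{drift} \mid \mathbf{x}_{t-1})$. I would then Taylor-expand this log-probability to first order about the unconditional mean $\boldsymbol{\mu}$, writing $\log \textrm{Pr}(\mathbf{x}_{drift} \mid \mathbf{x}_{t-1}) \approx \text{const} + (\mathbf{x}_{t-1} - \boldsymbol{\mu})^{\top} \mathbf{g}$ with $\mathbf{g} = \nabla \log \textrm{Pr}(\mathbf{x}_{drift} \mid \mathbf{x}_{t-1})\big|_{\mathbf{x}_{t-1} = \boldsymbol{\mu}}$. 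This linearization is justified because the Gaussian factor has curvature $1/\sigma_t^2$ that concentrates $\mathbf{x}_{t-1}$ tightly around $\boldsymbol{\mu}$, so the comparatively slowly varying guidance term is well-described by its tangent over the effective support.

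I would then add the quadratic and linear terms and complete the square, recovering $\mathcal{N}(\mathbf{x}_{t-1}; \boldsymbol{\mu} + \sigma_t^2 \mathbf{g}, \sigma_t^2 \mathbf{I})$: the controlled reverse step is again Gaussian, with the same covariance and a mean shifted by $\sigma_t^2 \mathbf{g}$. Finally, since $\boldsymbol{\mu}(\mathbf{x}_t, t; \theta)$ is a deterministic function of $\mathbf{x}_t$ that tends to $\mathbf{x}_t$ under fine timestep discretization, I would evaluate the gradient at $\mathbf{x}_t$ itself, giving $\mathbf{g} \approx \nabla_{\mathbf{x}_t} \log \textrm{Pr}(\mathbf{x}_{drift} \mid \mathbf{x}_t) = \nabla_{\mathbf{x}_t}\textrm{Pr}(\mathbf{x}_{drift} \mid \mathbf{x}_t) / \textrm{Pr}(\mathbf{x}_{drift} \mid \mathbf{x}_t)$. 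This expresses the entire controlled reverse distribution through $\textrm{Pr}(\mathbf{x}_{drift} \mid \mathbf{x}_t)$ and its gradient $\nabla_{\mathbf{x}_t}\textrm{Pr}(\mathbf{x}_{drift} \mid \mathbf{x}_t)$, exactly as the statement asserts, and it identifies the quantity the drifter must learn.

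I expect the main obstacle to be making the two approximations quantitative rather than merely formal: (i) that the first-order Taylor term dominates, which requires $\sigma_t^2$ small enough that the Gaussian penalty overwhelms the Hessian of $\log \textrm{Pr}$, and (ii) that the gradient may be read off at $\mathbf{x}_t$ in place of $\boldsymbol{\mu}$, which requires $\boldsymbol{\mu} \approx \mathbf{x}_t$ in the many-step limit. Both are standard in guided diffusion, but a careful version would state the regularity assumptions — a bounded Hessian of $\log \textrm{Pr}$ and a sufficiently fine discretization — under which the neglected remainder terms are $o(\sigma_t^2)$ and the approximation error vanishes.
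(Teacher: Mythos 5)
Your proposal follows essentially the same route as the paper's proof: the same Bayes decomposition of $p(\mathbf{x}_{t-1}\,|\,\mathbf{x}_t,\mathbf{x}_{drift})$, a conditional-independence simplification of the guidance term, a first-order Taylor expansion of its logarithm around the unconditional mean, and completion of the square to obtain the shifted Gaussian $\mathcal{N}(\boldsymbol{\mu} + \mathbf{\Sigma}\cdot g,\, \mathbf{\Sigma})$ — exactly the classifier-guidance argument the paper adapts from Sohl-Dickstein et al. If anything, your write-up is more careful than the paper's, which conflates $\nabla_{\mathbf{x}_t}\mathrm{Pr}(\mathbf{x}_{drift}\,|\,\mathbf{x}_t)$ with the gradient of the \emph{log}-probability required for the linear term in the expansion, and omits the justification (fine discretization, so $\boldsymbol{\mu}\approx\mathbf{x}_t$) for evaluating the gradient at $\mathbf{x}_t$ rather than at $\boldsymbol{\mu}$; these refinements do not change the substance.
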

\begin{proof}
For Equation~\eqref{eq:drift-x}, the first part is approximated by a Gaussian model, while the second part can be simplified using conditional independence, which transforms the equation to:
\begin{equation*}
p(\mathbf{x}_{t-1} | \mathbf{x}_t, \mathbf{x}_{\text{drift}}) 
    \approx 
    z 
    \mathcal{N}(\mathbf{x}_t; \boldsymbol{\mu}, \mathbf{\Sigma}) 
    \textrm{Pr}(\mathbf{x}_{\text{drift}} | \mathbf{x}_t),
\end{equation*}
\noindent where $z$ is the normalizing factor.
%
Next, as proven by Dickstein et al.~\cite{DBLP:conf/icml/Sohl-DicksteinW15}, we can perform Taylor expansion around $\boldsymbol{\mu}$ to further approximate $p(\mathbf{x}_{t-1} | \mathbf{x}_t, \mathbf{x}_{\text{drift}})$ so that a perturbed Gaussian distribution can be used to model it:
\begin{align*}
\log p(\mathbf{x}_{t-1} | \mathbf{x}_t, \mathbf{x}_{\text{drift}}) 
    \approx 
    \log \mathcal{N}(\mathbf{x}_t; \boldsymbol{\mu}, \mathbf{\Sigma}) 
    + \log \textrm{Pr}(\mathbf{x}_{\text{drift}} | \mathbf{x}_t) \\
\approx -\frac{1}{2} 
    (\mathbf{x}_t - \boldsymbol{\mu})^{\top} 
    \mathbf{\Sigma}^{-1} 
    (\mathbf{x}_t - \boldsymbol{\mu})
    +
    (\mathbf{x}_t - \boldsymbol{\mu}) \cdot g
    +
    C\\
= -\frac{1}{2} 
    (\mathbf{x}_t - \boldsymbol{\mu} - \mathbf{\Sigma} \cdot g)^{\top}
    \mathbf{\Sigma}^{-1}
    (\mathbf{x}_t - \boldsymbol{\mu} - \mathbf{\Sigma} \cdot g)
    +
    C\\
= \log \mathcal{N}(\mathbf{x}_t; \boldsymbol{\mu} + \mathbf{\Sigma} \cdot g, \mathbf{\Sigma}),
\end{align*}
\noindent where $g = \nabla_{\mathbf{x}_t} \mathrm{Pr}(\mathbf{x}_{\text{drift}} | \mathbf{x}_t)$, and the constant term $C$ can be ignored.
In other words, $p(\mathbf{x}_{t-1} | \mathbf{x}_t, \mathbf{x}_{\text{drift}})$ can be approximated by $\mathcal{N}(\mathbf{x}_t; \boldsymbol{\mu} + \mathbf{\Sigma} \cdot \nabla_{\mathbf{x}_t} \mathrm{Pr}(\mathbf{x}_{\text{drift}} | \mathbf{x}_t), \mathbf{\Sigma})$.
\end{proof}
Conceptually, Theorem~\ref{th:drifter} ensures the generated data moves toward the target drifted distribution.
Unlike existing data generators~\cite{DBLP:journals/sigmod/RGPW24}, our method directly injects drift signals into the denoising steps, and with the ability of the DDPM to preserve correlations in the latent space, we enable \dbname to generate drifted data controlled by a drift factor with correlation retained.


\subsection{Diffuser and Drifter Training}
\label{sec:gen_construction}
Based on Theorem~\ref{th:drifter}, we design a generation framework comprising two key components: the diffuser and the drifter.
Concretely, the Diffuser learns the reverse denoising distribution to map $\mathcal{N}(0,I)$ to data, while the Drifter conditions each reverse step to match a specified drift factor.
We illustrate the training process of the diffuser and drifter in Figure~\ref{fig:training}.


\begin{figure}
\centering
\includegraphics[width=0.99\linewidth]{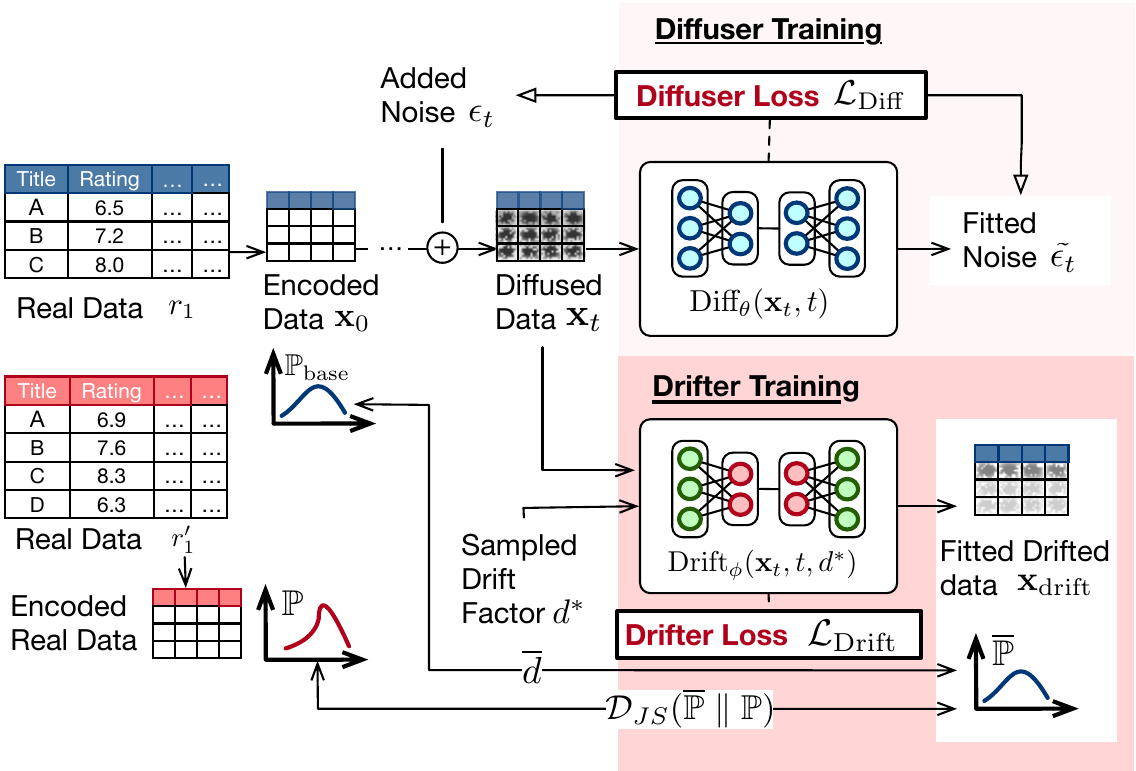}
\vspace{-2mm}
\caption{\vldbrevision{Diffuser and Drifter Training}}
\label{fig:training}
\vspace{-4mm}
\end{figure}

\paragraphtitle{Diffuser Training}
The diffuser \texttt{Diff} serves as the core generative model, learning to approximate the unconditional distribution $p(\mathbf{x}_{t-1} | \mathbf{x}_t)$.
To achieve this, we train \texttt{Diff} to predict the mean  $\boldsymbol{\mu}(\mathbf{x}_t, t; \theta)$ of the reverse denoising process, which is directly related to the noise added during the forward process.
In particular, the reverse process can be rewritten in terms of noise $\epsilon_t$, where:
$\boldsymbol{\mu}(\mathbf{x}_t, t; \theta) = \frac{1}{\sqrt{\alpha_t}} \left( \mathbf{x}_t - \sqrt{1 - \alpha_t} \cdot \epsilon_t \right),
$ and $\alpha_t$ is a parameter derived from the variance schedule $\sigma_t$.
This formulation makes noise reconstruction a central task in training, and thus, 
the noise prediction output of the diffuser at timestep $t$, or \texttt{Diff}$(\mathbf{x}_t, t)$, is trained by minimizing the noise reconstruction error:
\[
\mathcal{L}_{\texttt{Diff}} = \mathbb{E}_{t} \left[ \lVert \epsilon_t - \texttt{Diff}(\mathbf{x}_t, t) \rVert^2 \right].
\]

As shown in Figure~\ref{fig:training}, the diffuser is trained to reconstruct samples from the base data with the distribution $\mathbb{P}_{\text{base}}$ by learning to predict the added noise during the forward diffusion process. 
In each training iteration, a clean encoded sample $\mathbf{x}_0$ is drawn from real data $r_1 \sim \mathbb{P}_{\text{base}}$.
A timestep $t$ and Gaussian noise $\epsilon_t$ are sampled, and noise is added to $\mathbf{x}_0$ to produce the diffused sample $\mathbf{x}_t$. The diffuser \texttt{Diff} takes $\mathbf{x}_t$ and $t$ as inputs, and predicts the noise $\hat{\epsilon}_t$ used in denoising.
We train \texttt{Diff} by minimizing the mean squared error between the predicted noise $\hat{\epsilon}t$ and the true noise $\epsilon_t$, denoted as $\mathcal{L}_{{\texttt{Diff}}}$. 
This enables the diffuser to capture the underlying distribution and latent correlations in $\mathbb{P}_{\text{base}}$.

\paragraphtitle{Drifter training}
The drifter \texttt{Drifter} is used to fit $\textrm{Pr}(\vb{x}_{\text{drift}}|\vb{x}_t)$ to generate $\vb{x}_{\text{drift}}$ with the same size as the current noisy data $\vb{x}_t$, and satisfies any drift factor $d$, \emph{i.e.}, $\vb{x}_{\text{drift}} = \texttt{Drifter}(\mathbf{x}_t, t, d)$.
After that, the expected drift factor $d^*$ is input into \texttt{Drifter} to compute $\nabla_{\mathbf{x}_t} \mathrm{Pr}(\mathbf{x}_{\text{drift}} | \mathbf{x}_t)$ and guide \texttt{Diff} toward the expected distribution.
The drifter is trained to guide the denoising process toward drifted target distributions while preserving key statistical properties of the base data. 
To achieve this, we adopt a composite loss function so that it balances among the accuracy of drift-introduction, preservation of original correlations, and coherence with the real data.


The detailed algorithm to train the drifter is illustrated in Algorithm~\ref{alg:drifter_training}.
It first initializes drifter parameters $\phi$, pre-computes $\alpha_t$ and $\bar{\alpha_t}$ based on $\{\beta_t\}_{t=1}^{T}$, where both the definition of $\{\beta_t\}$ and the computation of $\alpha_t$ and $\bar{\alpha_t}$ (Lines 1-2). 
After that, it starts the training by looping through the training steps.
At each training step, it first samples a clean base sample $x_0$ from $\mathbb{P}_{\text{base}}$ and selects a target real data $\mathbb{P}$ from $\{\mathbb{P}_{\text{target}}\}$.
After that, a random drift factor $d^*$, a timestep $t$, and a noise $\epsilon \sim \mathcal{N}(0, I)$ are sampled (Lines 4-5).
The drifter then constructs a noisy sample $\vb{x}_t$ via the forward diffusion process and applies \text{Drift} to denoise, retrieving the drifted sample $\vb{x}_{\text{drift}}$, and computes its distribution $\overline{\mathbb{P}}$ (Lines 6-8).
The drift factor between distributions of the base and synthesized data $\bar{d}$ is computed afterward (Line 9).
The loss value $\mathcal{L}_{\text{Drift}}$ is constructed by combining three factors (Line 10).
First, a drift loss between $\bar{d}$ and the real drift factor $d^*$ is measured.
Second, a correlation loss is also computed and adjusted by the hyperparameter $\lambda_1$ to preserve statistical dependencies in the base data, such as the strong primary-foreign key correlation between two tables.
Third, a divergence loss between the synthesized and real drifted sample is included and adjusted by $\lambda_2$ to shift the distribution toward real drift.
Finally, $\mathcal{L}_{\text{Drift}}$ is used to update \texttt{Drifter} via gradient descent.

\begin{algorithm}[t]
\small
\caption{Drifter Training Procedure}
\label{alg:drifter_training}
\KwIn{Base distribution $\mathbb{P}_{\text{base}}$, drifted target distribution set $\{\mathbb{P}_{\text{target}}\}$, schedule variables $\{\beta_t\}_{t=1}^{T}$, loss balancer $\lambda_1$, $\lambda_2$}
\KwOut{Trained drifter \texttt{Drifter}$_\phi$}
\def\arraystretch{0.95}
\LinesNumberedHidden

\Numberline Initialize drifter parameters $\phi$\;

\tcc{Pre-compute $\alpha_t$ and $\bar{\alpha_t}$ for all timesteps}
\Numberline $\{\alpha_t\}_{t=1}^{T} \gets \{1 - \beta_t\}_{t=1}^{T}$, $\{\bar{\alpha_t}\}_{t=1}^{T} \gets \{\prod_{i=1}^{t} \alpha_i\}_{t=1}^{T}$\;

\Numberline \While{not converged}{
    \Numberline $\vb{x}_0 \sim \mathbb{P}_{\text{base}}$, $\mathbb{P} \sim \{\mathbb{P}_{\text{target}}\}$\;
    \Numberline $d^* \sim U(0,1)$, $t \sim \{1,\ldots,T\}$, $\boldsymbol{\epsilon}_t \sim \mathcal{N}(0, \vb{I})$\;

    \Numberline $\vb{x}_t \gets \sqrt{\bar{\alpha}_t} \vb{x}_0 + \sqrt{1 - \bar{\alpha}_t} \boldsymbol{\epsilon}_t$\;

    \Numberline $\vb{x}_{\text{drift}} \gets \texttt{Drifter}_\phi(\vb{x}_t, t, d^*)$\;

    \Numberline $\overline{\mathbb{P}} \gets p(\vb{x}_{\text{drift}})$\tcp*{get distribution of $\vb{x}_{\text{drift}}$}

    \Numberline $\overline{d} \gets \frac{1}{\log 2} \mathcal{D}_{JS}(\mathbb{P}_{\text{base}} \parallel \overline{\mathbb{P}})$\;
    
    \Numberline $\mathcal{L}_{{\text{Drift}}} = \left\| \overline{d} - d^* \right\|^2_2 + \lambda_1\left\| \mathcal{C}(\vb{x}_{\text{drift}}) - \mathcal{C}(\vb{x}_t) \right\|^2_2 $\\\quad\quad\quad\,\,\,$+ \lambda_2 \mathcal{D}_{JS}(\overline{\mathbb{P}} \parallel \mathbb{P})$\;
    
    \Numberline Update $\phi$ using gradient descent\;
}
\end{algorithm}
\vspace{-2mm}

\begin{figure}[t]
\centering
\includegraphics[width=0.98\linewidth]{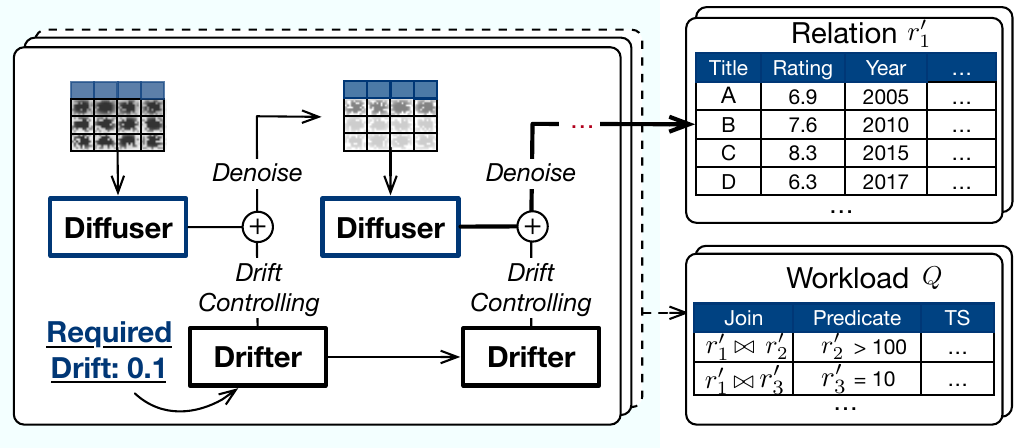}
\vspace{-2mm}
\caption{Data and Workload Generation Process}
\label{fig:generation}
\vspace{-4mm}
\end{figure}

\subsection{Drift-aware Data and Workload Generation}
\label{sec:ctrl-drift-gen}

We synthesize data and workload drift based on the diffuser and the drifter, as shown in Figure~\ref{fig:generation}.
After sampling an initial noise $\mathbf{x}^T \sim \mathcal{N}(0, \mathbf{I})$,
the diffuser transforms $\mathbf{x}^T$ into drifted data $\tilde{\mathbf{x}_0}$ over $T$ discrete timesteps.
At each step $t$, the drifter computes a gradient 
$g_t = \nabla_{\mathbf{x}_t} \mathrm{Pr}(\mathbf{x}_{\text{drift}} | \mathbf{x}_t) = \nabla_{\mathbf{x}_t} \texttt{Drifter}(\mathbf{x}_t, t, d)$,
which adjusts the predicted mean $\tilde{\boldsymbol{\mu}}_t = \boldsymbol{\mu} + \mathbf{\Sigma} \cdot g_t$, and use the adjusted mean $\tilde{\boldsymbol{\mu}}_t$ to sample the next state $\mathbf{x}_{t-1}$.
We repeat this process for timesteps $t = T, T-1, \dots, 1$, gradually refining $\mathbf{x}_t$ toward $\mathbf{x}_{\text{drift}}$.
After decoding $\mathbf{x}_{\text{drift}}$, we obtain the data or workload that reflects the desired drift while maintaining the inherent correlations.

\paragraphtitle{Correlation Preservation}
\dbname generates data as full tuples with multiple columns, thereby preserving inter-column correlations throughout the process.
First, the diffuser learns the joint distribution of columns and captures their correlations in the latent space.
Second, the drifter is trained using a composite loss (Algorithm~\ref{alg:drifter_training}, Line 10) that includes both a drift loss and a correlation preservation loss.
As a result, during generation, the drifter injects drift signals that drift data distributions, while the diffuser ensures that column correlations are retained.
\paragraphtitle{Snapshot-based Generation}
We would like to note that using a single generator trained across a broad temporal span (\emph{e.g.}, 2014–2017) may miss fine-grained drift patterns, especially for a small drift factor $d$.
For example, if a user sets 2013 as the baseline and specifies $d=0.05$ to simulate drift toward 2014, a generator trained on 2014–2017 data may capture overall trends but overlook subtle shifts between 2013 and 2014, due to smoothing effects from large-scale drift exposure.
To alleviate this, we introduce an \emph{agent-based} mechanism that selects the most appropriate generator for a target drift $d^*$.
Specifically, we train multiple snapshot-specific generators $\mathcal{G}_i\in\{\mathcal{G}_{14\text{-}15}, \mathcal{G}_{15\text{-}17}, \dots\}$, where each $\mathcal{G}_i$ consists of a diffuser trained on data up to time $i$, and a drifter guided by subsequent data. 
For example, $\mathcal{G}{14\text{-}15}$ uses data up to 2014 for diffusion and captures drift patterns up to 2015.
At runtime, the agent selects the best-matching generator by computing a distance metric $\mathrm{dist}(d, \alpha_i)$ between the requested drift $d$ and the representative drift $\alpha_i$ of each generator $S_i$.
The agent then selects the generator $\mathcal{G}_{i^*}$ with the minimal distance:
$\mathcal{G}_{i^*}=\argmin_{\mathcal{G}_{i}} \mathrm{dist}(d,\,\alpha_i)$.
\section{Implementation}
\label{sec:impl}
We have made the source code of \dbname available~\cite{neuralbenchimpl}, which includes a drift-aware generator and evaluation suites for learned database components.
We implement the generator based on DDPM~\cite{DBLP:journals/pacmmod/LiuFTLD24,DBLP:conf/icml/KotelnikovBRB23}.
In particular, the diffuser is based on a Multilayer Perceptron (MLP) with hidden layer dimensions of $(512, 1024, 1024, 512)$ and the drifter is a similar MLP with hidden layer dimensions of $(512, 512)$.
We use tables to represent both data and workloads, as specified in Section~\ref{subsec:drift_factor}, and encode the tables using the analog bit encoding~\cite{DBLP:journals/pacmmod/LiuFTLD24}.
For each table, we dynamically adjust the learning rate from $1e^{-4}$ to $2e^{-3}$ to ensure the convergence of the training process.

\dbname provides user commands such as \texttt{gd} for drift-aware data generation and \texttt{tqo} for learned query optimizer evaluations.
Users can easily reconfigure these settings to match their own environments and workloads for customized benchmarking.
Since no existing database integrates all tested learned database components, we adopt and extend existing evaluation suites~\cite{DBLP:journals/pvldb/WongkhamLLZLW22,DBLP:conf/osdi/WangDWCW0021,DBLP:journals/pvldb/LehmannSS24} to support benchmarking under data and workload drift.
\dbname is designed to be extensible to support additional evaluators.

\dbname is functionally adequate for our use case by enabling the generation of both numerical and categorical attributes.
However, as a diffusion-based method, \dbname introduces training and inference overhead.
Its generation efficiency can be constrained by available resources, and the long generation time required for TB-scale data is a potential limitation.

\section{Experimental Setup} 
\label{sec:preparation}
\label{sec:exp-setup}

In this section, we specify the experimental setup.



\subsection{Datasets and Workloads}
\label{subsec:benchmarks}
\label{subsec:workloaddataset}

We use two real-world datasets and generate synthetic datasets based on them.
1) \noindent\textbf{IMDB~\cite{imdb}} is a real-world dataset containing 21 tables with complex correlations.
By default, we use the IMDB version collected by Leis et al.~\cite{DBLP:journals/pvldb/LeisGMBK015}, containing real data up to 2013.
In addition, we collected snapshots from the IMDB archives~\cite{imdbarchive}, which contain IMDB data up to 2017.
\textbf{STACK~\cite{DBLP:conf/sigmod/MarcusNMTAK21}} consists of questions and answers from StackExchange websites~\cite{stackoverflow}.
We split the table by year to construct multiple yearly snapshots.

We execute JOB queries~\cite{DBLP:journals/pvldb/LeisGMBK015} on IMDB, which consists of 113 queries derived from 33 real-world query templates, and execute another 113 queries sampled by~\cite{DBLP:journals/pvldb/LehmannSS24} on STACK.



\begin{vldbrevisionenv}
\end{vldbrevisionenv}

\subsection{Evaluated Data Generators}
\label{subsec:eval_data_genertor}

We compare the drift-aware data and workload generation framework of \dbname with two data generation techniques that can be extended to support drift data generation:
1) \textbf{RelDDPM~\cite{DBLP:journals/pacmmod/LiuFTLD24}} is a state-of-the-art data synthesis framework that generates correlation-preserving tabular data using DDPM.
RelDDPM originally supported only non-drift data generation.
We adjust the strength of the guidance in the denoising process of RelDDPM to generate drifted data, and use it as our baseline.
2) \textbf{CTGAN~\cite{DBLP:conf/nips/XuSCV19}} is a widely used framework for modeling tabular data distributions via conditional GANs. As CTGAN lacks explicit drift control, we vary its hyperparameters to generate drifted datasets.





\subsection{Evaluated Learned Components}
\label{sec:revisit}

We select representative learned database components, as listed in Table~\ref{tab:baselines}, to conduct our evaluations.
We ensure the evaluated components cover as many key design choices as possible. 
Due to space constraints, we opt not to conduct exhaustive evaluations of all existing learned components.

\extended{
\begin{table}
\renewcommand\arraystretch{1.1}
\caption{Training Time (hours) of End-to-end Learned Query Optimizers}
\label{tab:train}
\vspace{-2mm}
\resizebox{0.95\linewidth}{!}{%
\begin{tabular}{llllll}
\toprule
 &
\textbf{Bao~\cite{DBLP:conf/sigmod/MarcusNMTAK21}} &
  \textbf{HybridQO~\cite{DBLP:journals/pvldb/YuC0L22}} &
  \textbf{Lero~\cite{DBLP:journals/pvldb/MarcusNMZAKPT19}} &
  \textbf{Balsa~\cite{DBLP:conf/sigmod/YangC0MLS22}}
  \\ \midrule
Train on JOB & 2.5 & 18.3 & 10.1 & 10.5 \\ 
Train on JOB-light & 5.3 & 40.2 & 15.2 & 19.3 \\ 
\bottomrule
\end{tabular}%
}
\vspace{-3mm}
\end{table}

\begin{figure}
    \centering
    \includegraphics[width=0.44\linewidth]{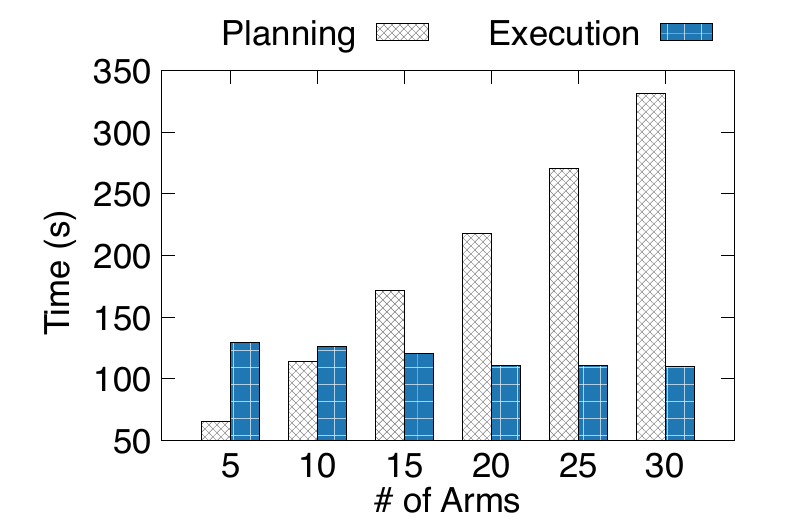}
    \vspace{-4mm}
    \caption{Performance of Bao with Varying Number of Arms}
    \label{fig:revision.baoarm}
    \vspace{-4mm}
\end{figure}

\begin{figure}
    \centering
    \subfigure[Bao]{
        \includegraphics[width=0.44\linewidth]{figures/learning curve/loss_curve_bao.pdf}
    }
    \subfigure[Balsa]{
        \includegraphics[width=0.44\linewidth]{figures/learning curve/loss_curve_balsa.pdf}
    }
    \vskip -4mm
    \subfigure[HybridQO]{
        \includegraphics[width=0.44\linewidth]{figures/learning curve/loss_curve_hybridqo.pdf}
    }
    \subfigure[Lero]{
        \includegraphics[width=0.44\linewidth]{figures/learning curve/loss_curve_lero.pdf}
    }
    \vspace{-4mm}
    \caption{Learning Curves of Evaluated End-to-end Learned Query Optimizers}
    \label{fig:revision.learningcurve}
    \vspace{-4mm}
\end{figure}
}

\subsubsection{End-to-end Learned Query Optimizer} \label{sec:overview:optimzer}




Given a query $\vb{q}$, the optimizer first performs \textit{plan search} to obtain a set of candidate query plans $\vb{P}(\vb{q})=\{\vb{p}_0,\vb{p}_1,...,\vb{p}_N\}$, and then select the best one from $\vb{P}(\vb{q})$ for real execution.
During the plan search and selection, it interacts with an \textit{ML model} to assess the plan quality.
Existing approaches~\cite{DBLP:journals/pvldb/YuC0L22,DBLP:journals/pvldb/MarcusNMZAKPT19,DBLP:conf/sigmod/YangC0MLS22,DBLP:journals/pvldb/ZhuCDCPWZ23,DBLP:conf/sigmod/MarcusNMTAK21} typically repeat the above steps to construct an experience set, which is then used to guide learning.

\paragraphtitle{Model}
There are two main types of models for assessing plan quality:
1) \textit{\underline{Regression model}}
predicts the execution cost or latency $\hat{L}(\vb{p})$ of a given plan $\vb{p}$, optionally providing a confidence score $C(\vb{p})$ that reflects the model's certainty, \emph{i.e.},
$f_{qo}(\vb{p}) = (\hat{L}(\vb{p}), C(\vb{p}))$~\cite{DBLP:conf/sigmod/MarcusNMTAK21,DBLP:journals/pvldb/MarcusNMZAKPT19,DBLP:journals/pvldb/YuC0L22}.
2) \textit{\underline{Ranking model}}
predicts a binary ranking value $r$ for two given plans $\vb{p}_1$ and $\vb{p}_2$, where $r=1$ if $\vb{p}_1$ is expected to outperform $\vb{p}_2$, and $r=0$ otherwise, \emph{i.e.},
$f_{qo}(\vb{p}_1,\vb{p}_2) = r$,
$r \in \{0,1\}$~\cite{DBLP:journals/pvldb/ZhuCDCPWZ23}.
\extended{We note that since a query plan can be represented as a tree, these models are typically implemented with one of two structures: Tree-CNN~\cite{DBLP:conf/aaai/MouLZWJ16} and Tree-LSTM~\cite{DBLP:conf/acl/TaiSM15}.
Both structures are deep neural networks designed for tree structures to capture bottom-up sequential information.
}

\paragraphtitle{Plan Search}
There are two paradigms in plan search, which differ in whether they rely on traditional query optimizers:
1) \textit{\underline{White-box search}} provides auxiliary information to traditional query optimizers for generating candidate plans.
Based on the type of information provided, it can be further categorized into:
(a) \textit{\underline{Hint-based.}} It supplies hints (parameters that control physical operators, such as join or scan types) to the traditional optimizer,
guiding it to generate corresponding query plans~\cite{DBLP:conf/sigmod/MarcusNMTAK21,DBLP:journals/pvldb/YuC0L22}.
\extended{Selecting proper hints is therefore important for the quality of the candidate plan set.
Early approaches~\cite{DBLP:conf/sigmod/MarcusNMTAK21} determine static hint sets in advance by database experts with in-depth knowledge of query optimization.
Some recent methods are proposed to dynamically select hints based on search algorithms~\cite{DBLP:journals/pvldb/YuC0L22,DBLP:journals/pvldb/AnneserTCXPLM23}.
For example, 
HybridQO~\cite{DBLP:journals/pvldb/YuC0L22} uses a Monte Carlo tree search combined with a join order prediction model to generate a set of potential join orders as hints.
AutoSteer~\cite{DBLP:journals/pvldb/AnneserTCXPLM23} automatically discovers hints that can impact a given query through a greedy search of the hints supported by the database.}
(b) \textit{\underline{Cardinality-based.}}
It adjusts (by magnifying or reducing) estimated cardinalities multiple times to generate a variety of candidate plans~\cite{DBLP:journals/pvldb/ZhuCDCPWZ23}.
2) \textit{\underline{Black-box search}} embeds the model directly into the plan search process without relying on the traditional optimizers~\cite{DBLP:journals/pvldb/MarcusNMZAKPT19}.
It starts with an empty or partial plan and iteratively constructs a complete query plan until a predefined number of candidate plans is reached.
At each step, multiple subplans are generated.
After using an ML model to evaluate their quality, ineffective subplans are pruned while the remaining ones are expanded in subsequent iterations.

\begin{table}[t]
\renewcommand\arraystretch{1.1}
\caption{A Taxonomy of Evaluated Learned Components}
\label{tab:baselines}
\vspace{-2mm}
\resizebox{0.85\linewidth}{!}{%
\begin{tabular}{llll}
\toprule
\multirow{4}{*}{\begin{tabular}[c]{@{}l@{}}End-to-end \\ Learned Query \\ Optimizer\end{tabular}}     & \textbf{Method}                                                      & \textbf{Model} & \textbf{Plan Search}\\ \cline{2-4}
 &
  Bao~\cite{DBLP:conf/sigmod/MarcusNMTAK21} &
  Regression &
  Hint-based \\
& Balsa~\cite{DBLP:conf/sigmod/YangC0MLS22} & Regression     & Black-box         \\
& Lero~\cite{DBLP:journals/pvldb/ZhuCDCPWZ23}    & Ranking        & Cardinality-based      \\
\toprule
\multirow{4}{*}{\begin{tabular}[c]{@{}l@{}}Updatable\\ Learned Index\end{tabular}}     & \textbf{Method}                                                      & \textbf{Structure} & \textbf{Index Insertion}  \\ \cline{2-4}
& ALEX~\cite{DBLP:conf/sigmod/DingMYWDLZCGKLK20} &
  Heterogeneous &
  In-place \\
& LIPP~\cite{DBLP:journals/pvldb/WuZCCWX21} &
  Homogeneous &
  In-place \\
& XIndex~\cite{DBLP:conf/ppopp/TangWDHWWC20} & 
  Heterogeneous & 
  Delta-buffer \\
\hline
Learned CC    & PolyJuice~\cite{DBLP:conf/osdi/WangDWCW0021}                                                            &                &                                       \\
\bottomrule
\end{tabular}
}
\vspace{-4mm}
\end{table}

\subsubsection{Updatable Learned Index} \label{sec:overview:index}
Given a search key $\vb{k}$, the learned index $f_{idx}$ predicts its position $\hat{PS}(\vb{k})$ in the data table, \emph{i.e.}, $f_{idx}(\vb{k}) = \hat{PS}(\vb{k})$.
Learned indexes organize a set of ML models in a layered structure, where each node contains a model for position prediction.

\paragraphtitle{Structure}
Existing learned indexes can be classified into two representative structures, based on whether they have different node types:
1) \textit{\underline{Heterogeneous structure.}}
It consists of two node types: inner nodes and leaf nodes.
Inner nodes do not store data but predict the next step in the index.
Leaf nodes store the actual data, and a local search is performed to locate the exact key position.
2) \textit{\underline{Homogeneous structure.}}
Each node is responsible for storing part of the data and also predicting whether the key falls within its range or should be passed to a corresponding child node.

\paragraphtitle{Index Insertion}
A learned index is regarded as updatable if it supports key insertions.
There are two main key insertion mechanisms:
1) \textit{\underline{In-place.}}
Index nodes reserve gaps (empty slots), and newly inserted keys are directly inserted into an appropriate node.
\extended{
In such cases, conflict resolution is required, typically following one of these two methods:
(a) \textit{\underline{Shift method}}, sequentially moving the existing key to the nearest available gap to create space at the target location for successful insertion~\cite{DBLP:conf/sigmod/DingMYWDLZCGKLK20};
(b) \textit{\underline{Chain method}}, creating a new node for the conflicting target position and inserting the key into this new node~\cite{DBLP:journals/pvldb/WuZCCWX21}.
}
2) \textit{\underline{Delta-buffer.}}
Dedicated buffers are maintained, and new keys are first inserted into the buffer and later merged into the main structure periodically.
\extended{The buffer can be organized at two different levels of granularity:
(a) \textit{\underline{Tree-level buffer.}} A single buffer is maintained for the entire index~\cite{DBLP:journals/pvldb/FerraginaV20};
(b) \textit{\underline{Node-level buffer.}} Each node has its own dedicated buffer~\cite{DBLP:conf/ppopp/TangWDHWWC20}.}


\subsubsection{Learned Concurrency Control (CC)} \label{sec:overview:cc}
Learned CC algorithm dynamically selects the most suitable CC actions, such as locking, optimistic validation, \emph{etc.}, to gain higher transaction throughput.
It takes a transaction operation $op$ (\emph{e.g.}, read/write) and the current system condition $C(op)$ (\emph{e.g.}, contention levels) as inputs, and leverages an ML model $f_{cc}$ to predict the most appropriate CC action for the operation $op$.
Formally, let $\vb{A}$ be the set of available CC actions, $f_{cc}$ can be defined as
$f_{cc}(op,C(op)) = a$,
where $a \in \vb{A}$.






\subsection{Evaluation Metrics} 
\label{subsec:metrics}


We assess the quality of synthesized drift by comparing its \textbf{drift factor} and \textbf{correlation difference (corr. diff.)} to those observed in real data. 
A drift factor closer to the real drift indicates more realistic distributional changes. 
Corr. diff., measured via Pearson correlation coefficient~\cite{pearson}, quantifies how much attribute-level correlation deviates from the original data. 
Lower correlation difference, i.e., similar to that under real drift, suggests better correlation preservation.
We define a good drift as one that achieves the target drift factor while preserving correlations as much as possible.


We use \textbf{execution time} to assess learned query optimizers, which is the actual time to execute a query.
We do not include query planning time in the execution time.
We use \textbf{throughput}, the number of operations/transactions processed per second, to evaluate learned indexes and learned CC.

\begin{figure*}[t]
\centering
\subfigure[Drift Factor (IMDB)]{
\label{subfig.drift_imdb_17}
\includegraphics[height=0.12\textwidth]{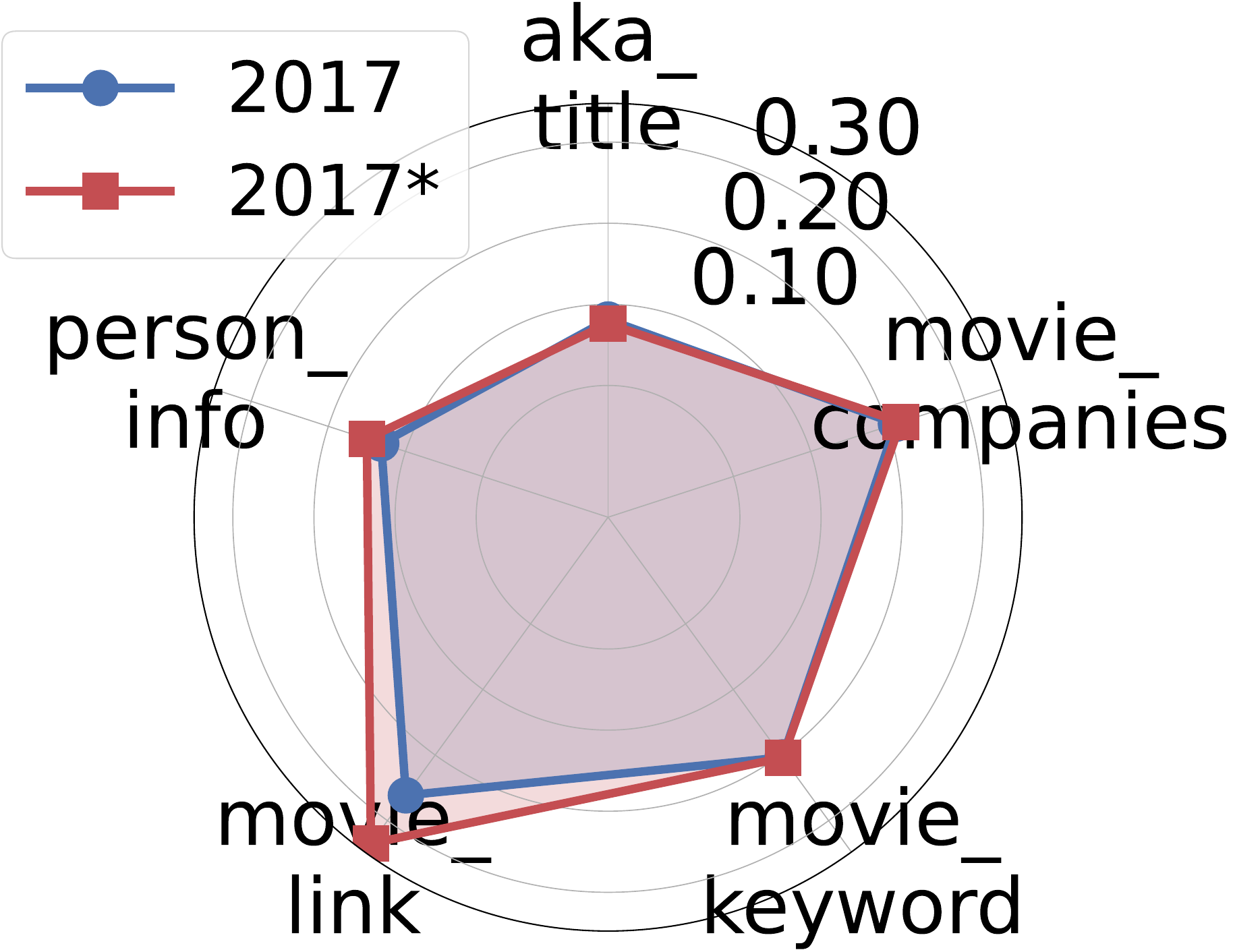}
}
\subfigure[Corr. Diff. (IMDB)]{
\label{subfig.corr_imdb_17}
\includegraphics[height=0.12\textwidth]{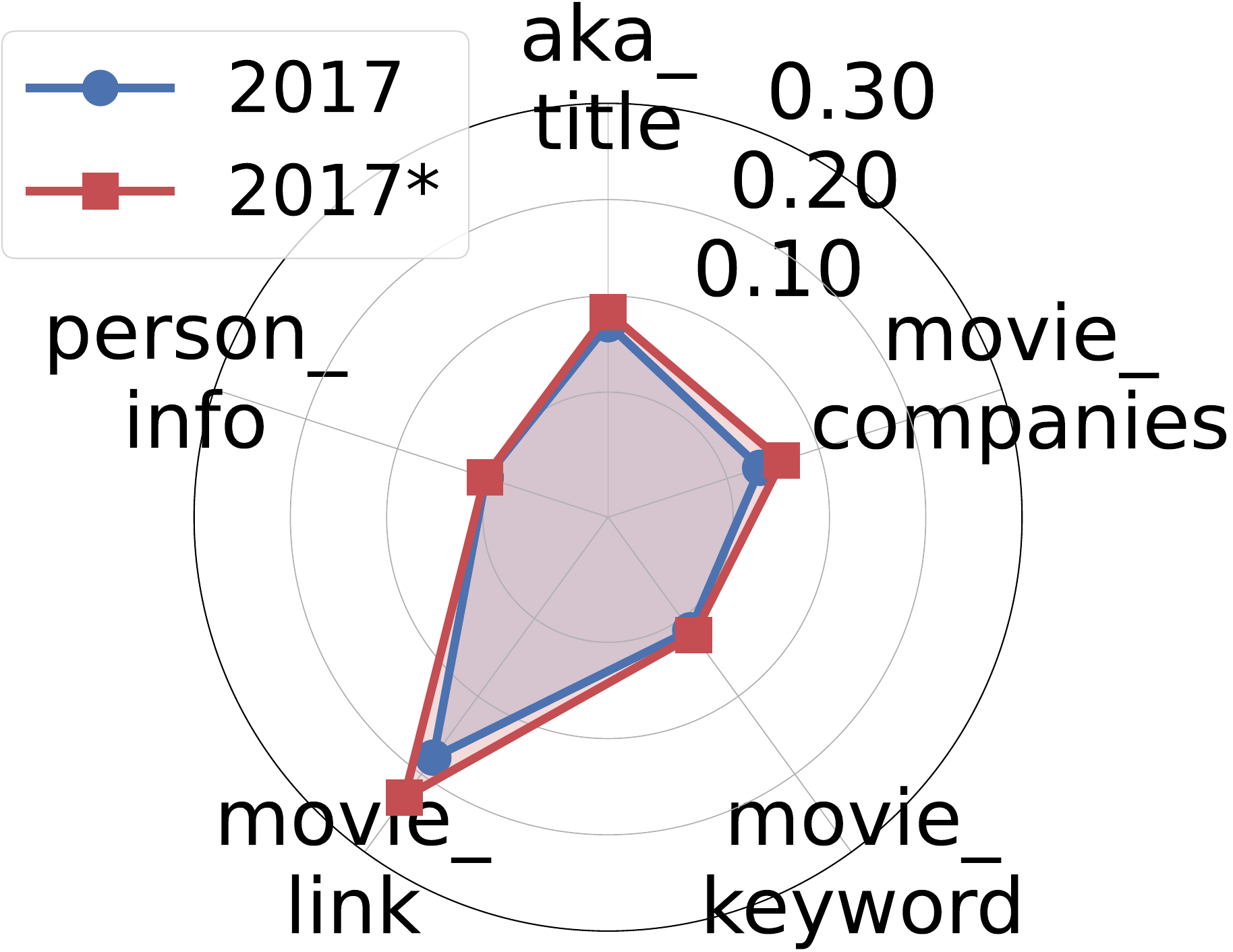}
}
\subfigure[Exec. Time (IMDB)]{
\label{subfig.exec_imdb_17}
\includegraphics[height=0.12\textwidth]{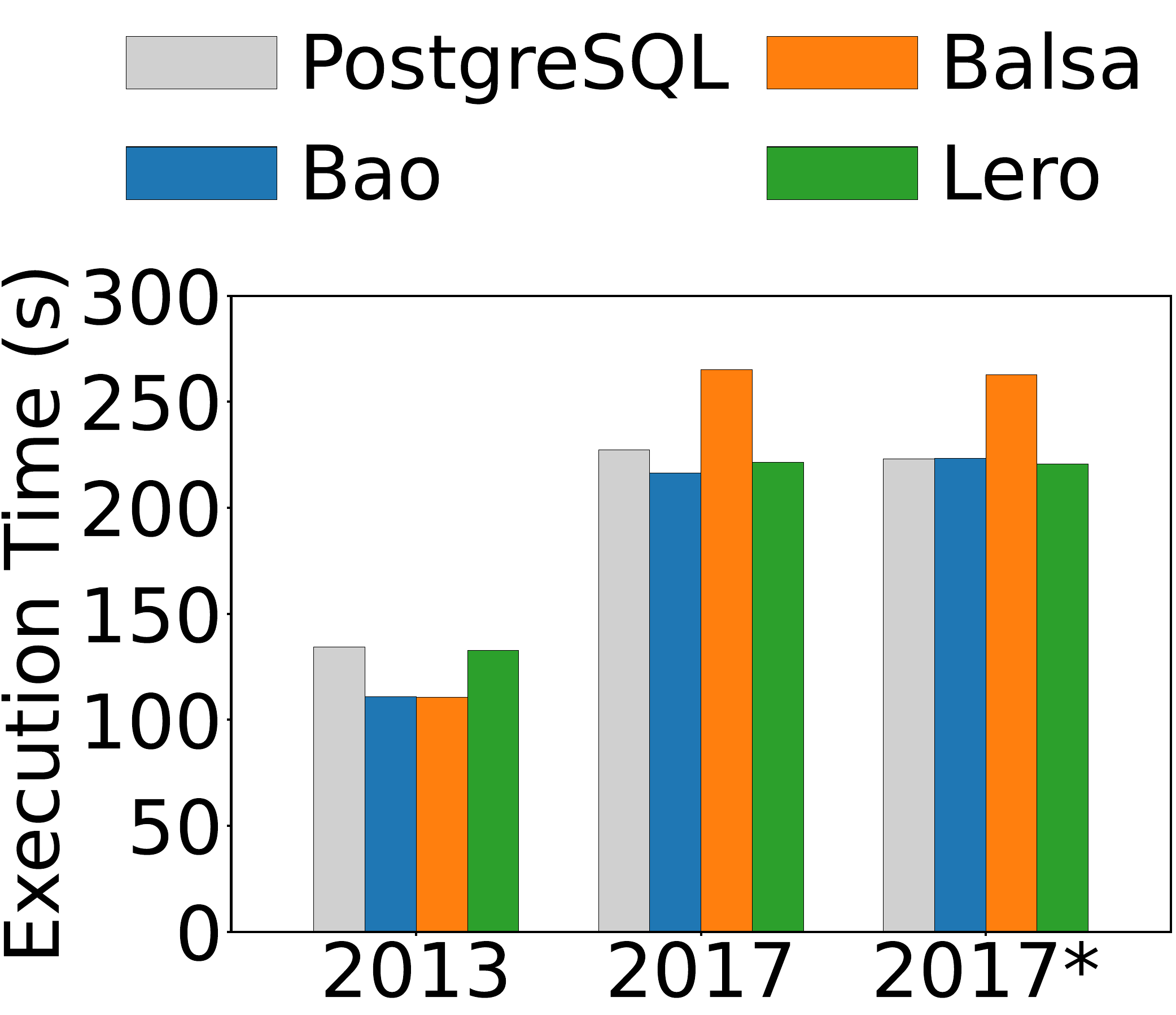}
}
\subfigure[Drift Factor (STACK)]{
\label{subfig.drift_stack_10}
\includegraphics[height=0.12\textwidth]{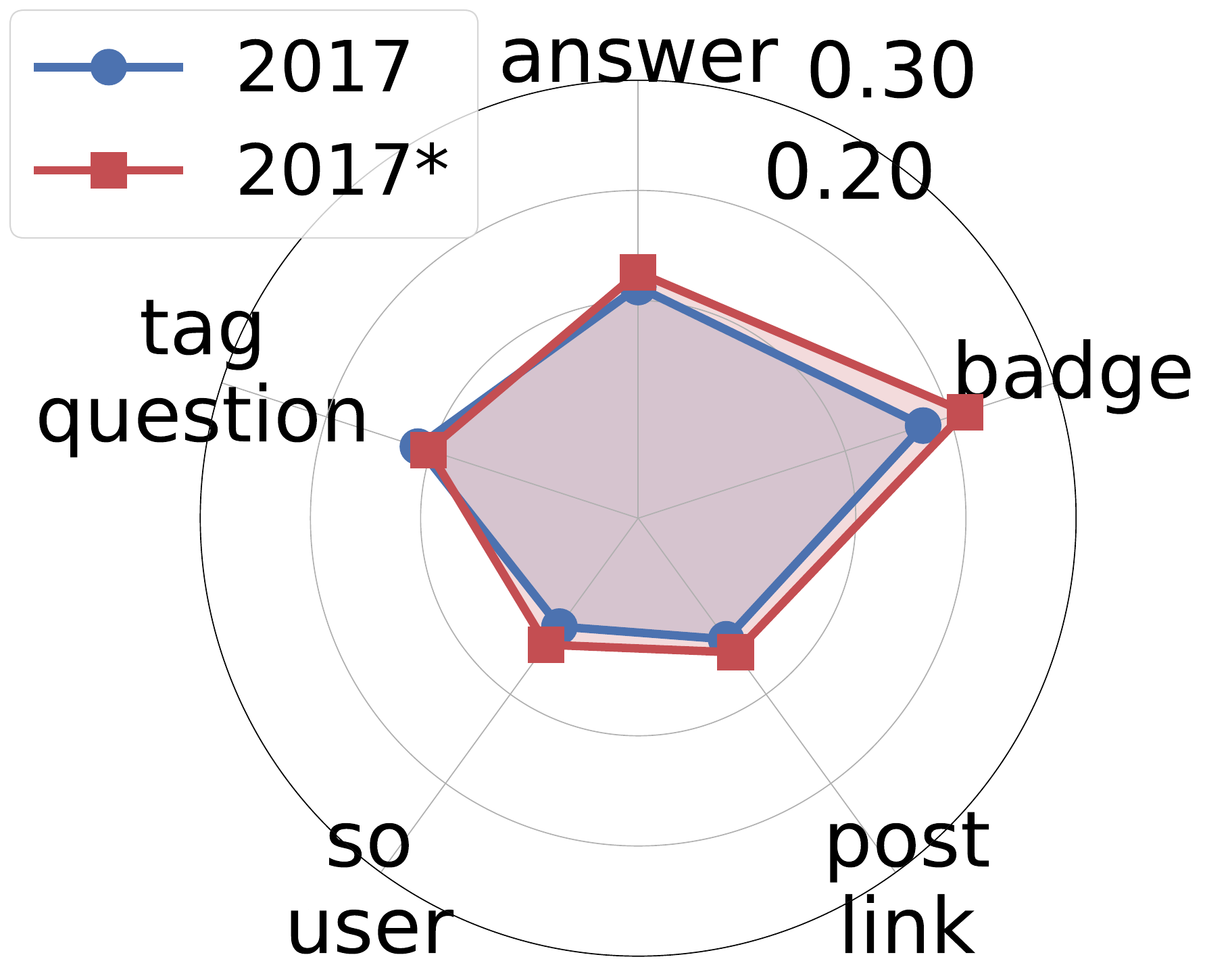}
}
\subfigure[Corr. Diff. (STACK)]{
\label{subfig.corr_stack_10}
\includegraphics[height=0.12\textwidth]{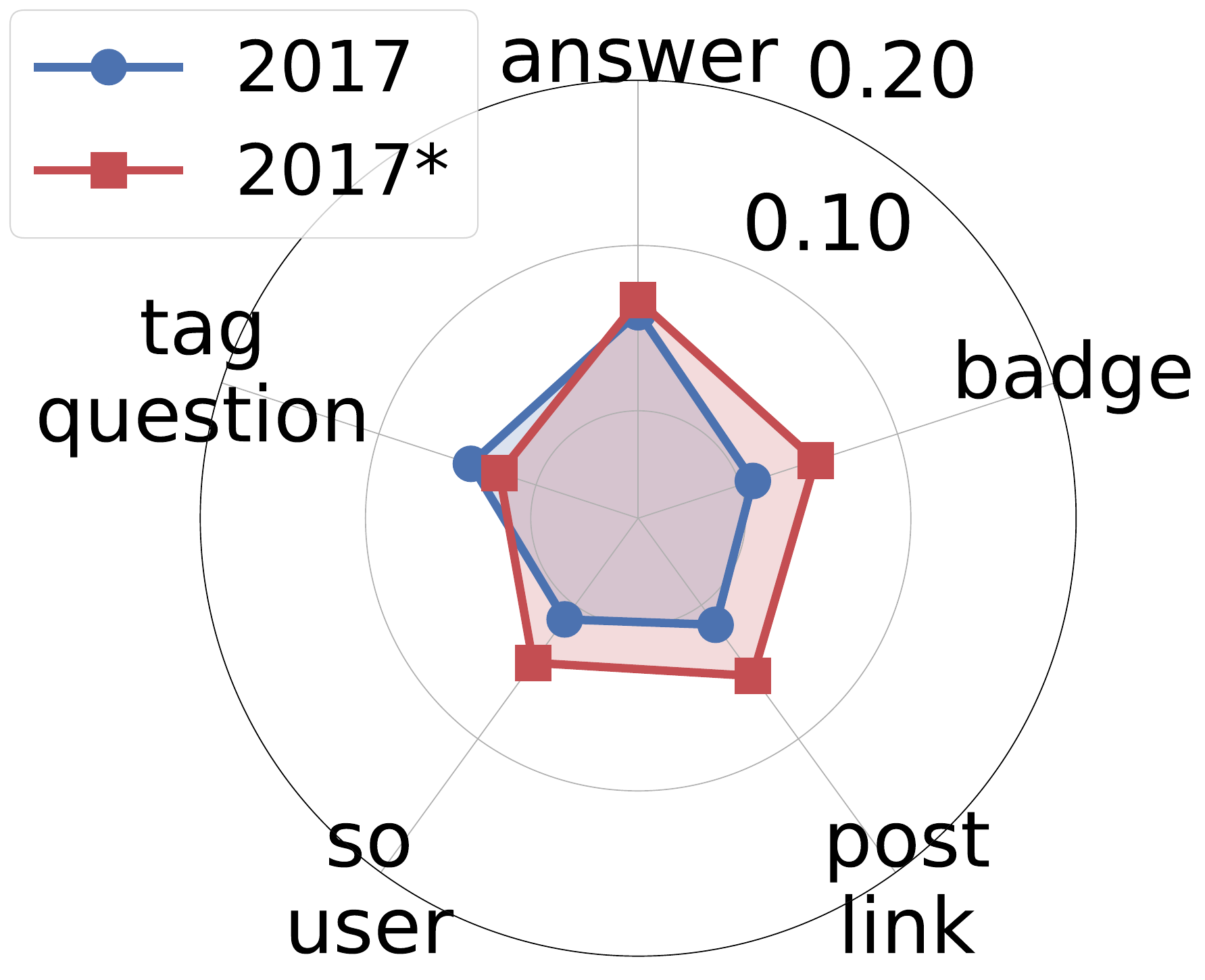}
}
\subfigure[Exec. Time (STACK)]{
\label{subfig.exec_stack_10}
\includegraphics[height=0.12\textwidth]{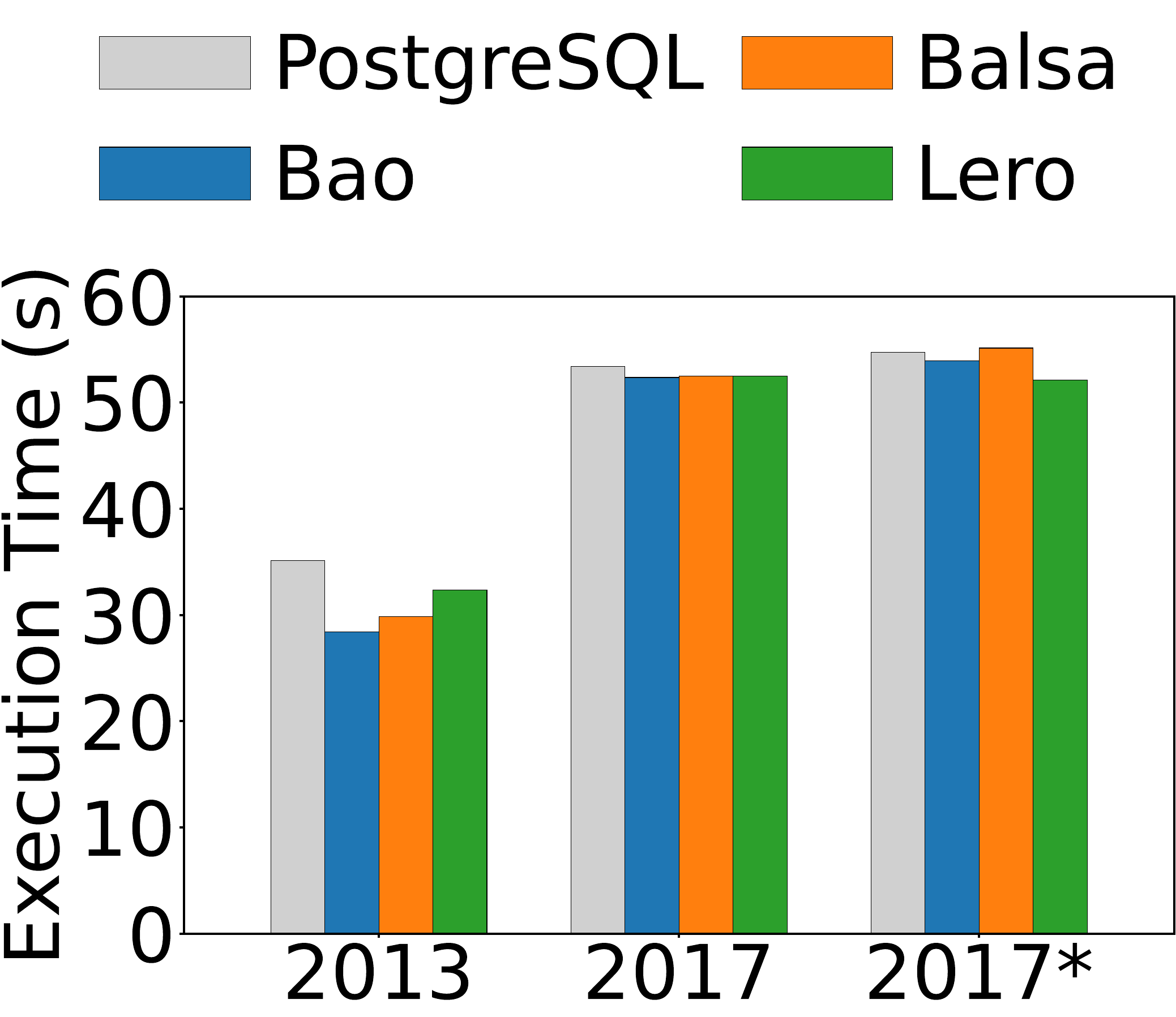}
}
\vspace{-4mm}
\captionsetup{justification=raggedright}
\caption{Extrapolation Performance of Drift-aware Generator}
\label{fig.generation_extrapolating}
\vspace{-4mm}
\end{figure*}

\begin{figure*}[t]
\centering
\subfigure[Drift Factor (IMDB)]{
\label{subfig.drift_imdb_15}
\includegraphics[height=0.12\textwidth]{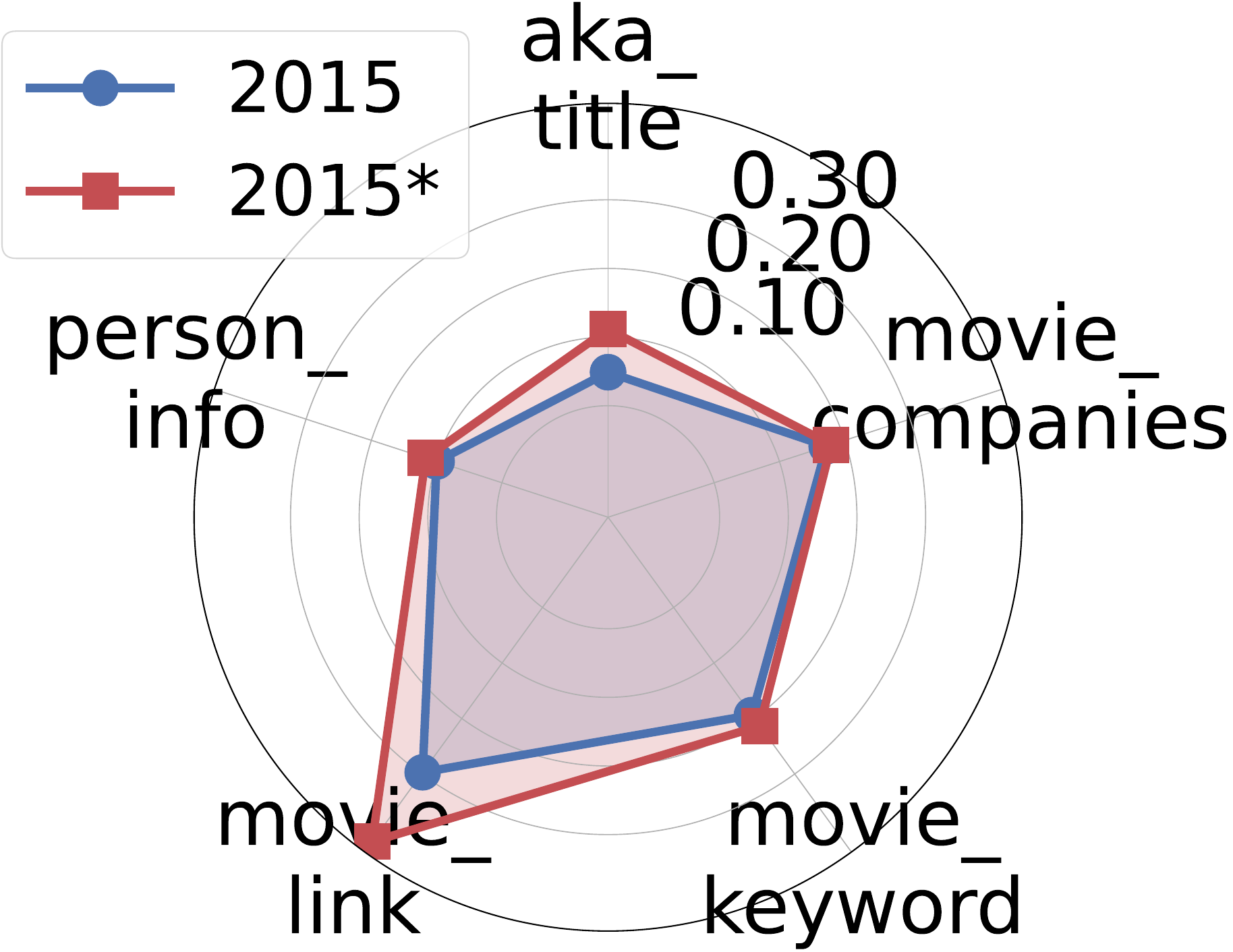}
}
\subfigure[Corr. Diff. (IMDB)]{
\label{subfig.corr_imdb_15}
\includegraphics[height=0.12\textwidth]{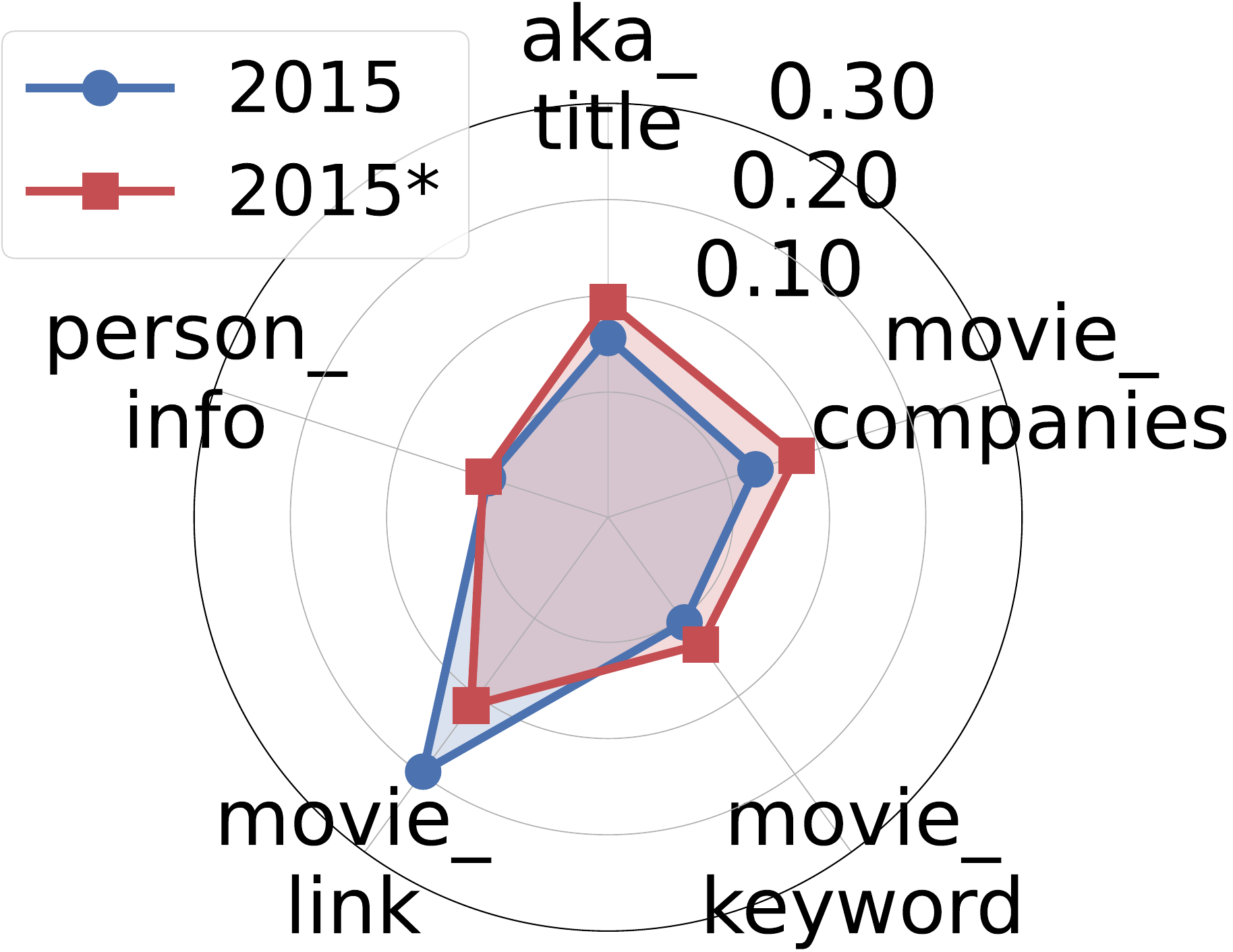}
}
\subfigure[Exec. Time (IMDB)]{
\label{subfig.exec_imdb_15}
\includegraphics[height=0.12\textwidth]{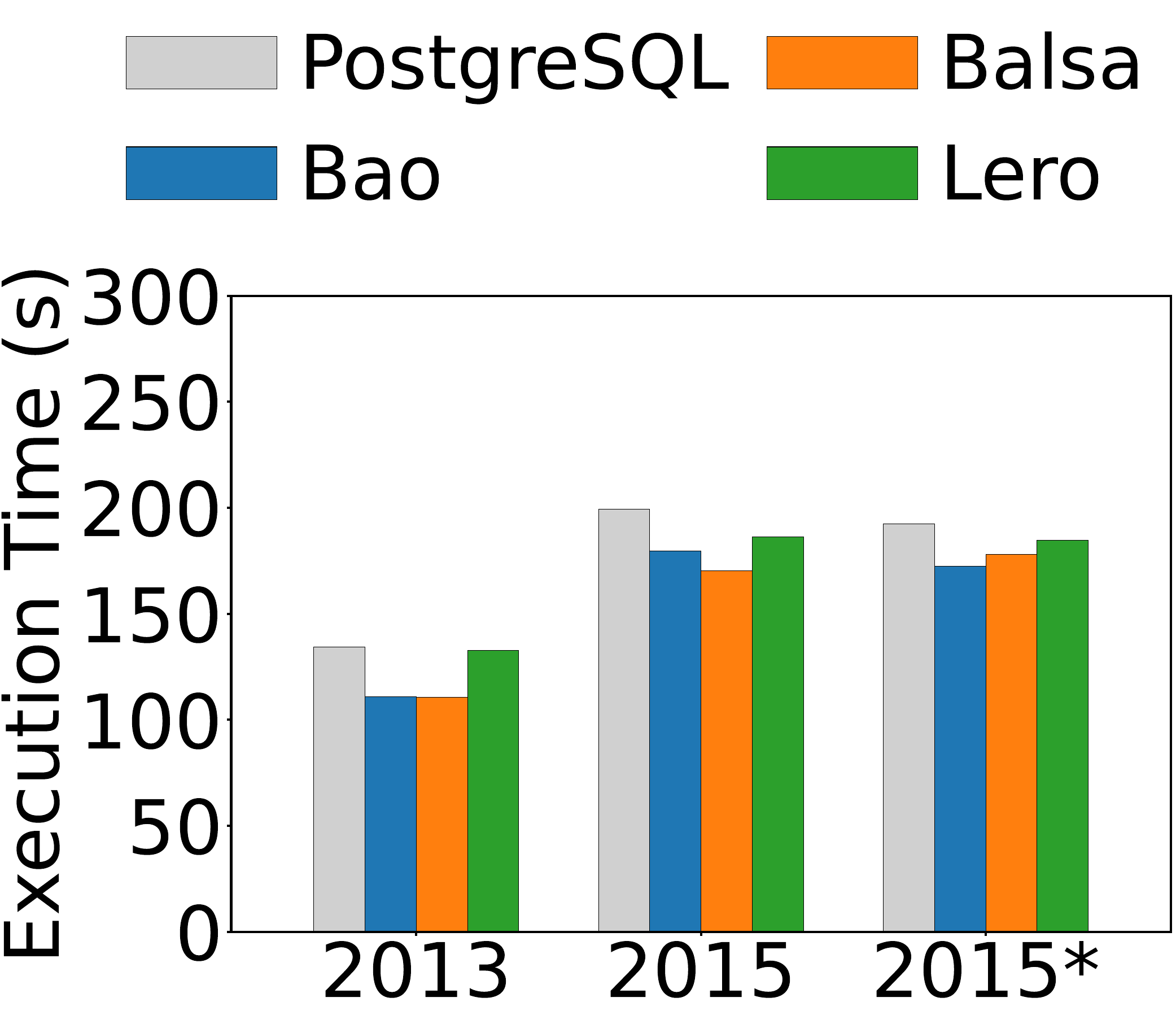}
}
\subfigure[Drift Factor (STACK)]{
\label{subfig.drift_stack_15}
\includegraphics[height=0.12\textwidth]{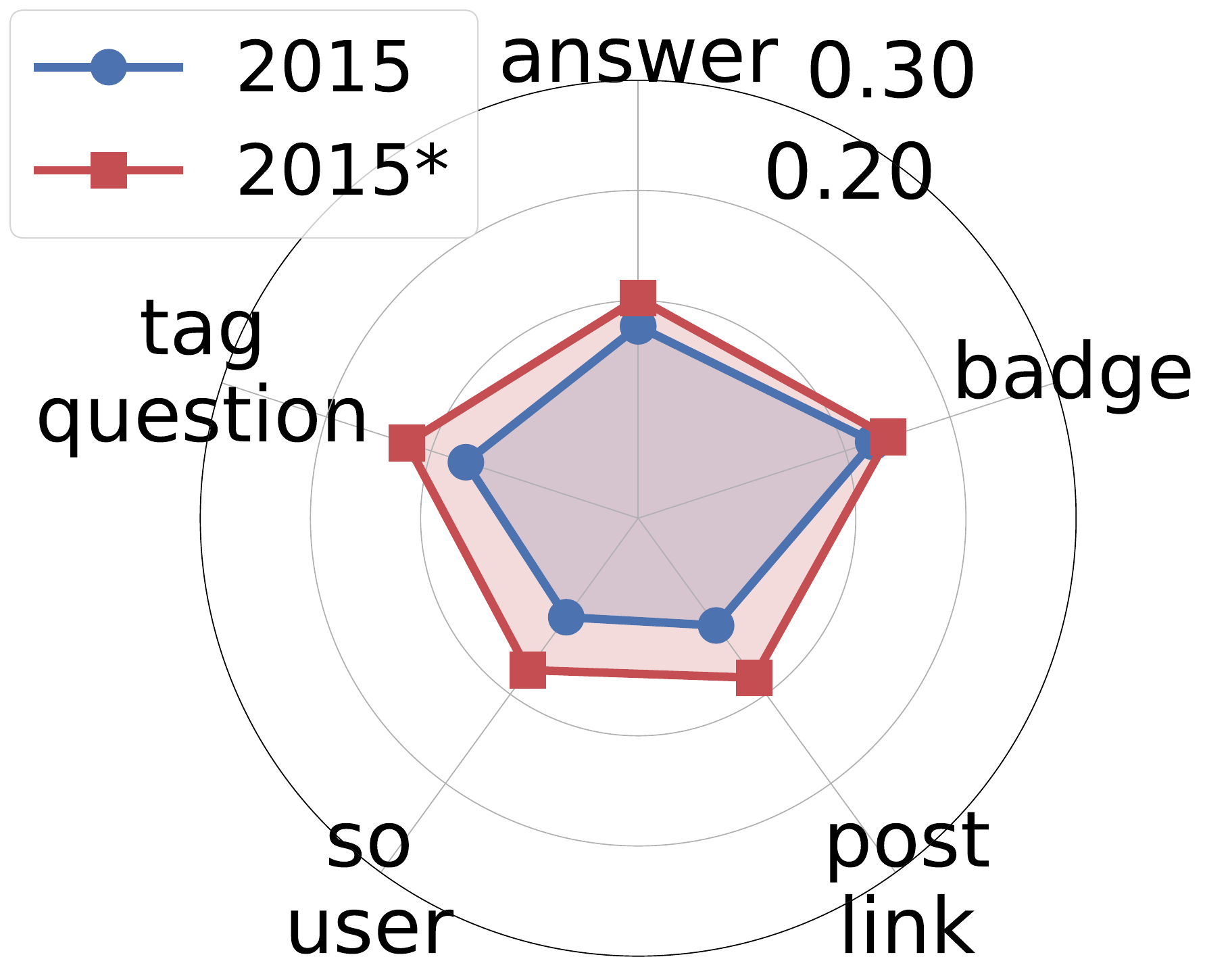}
}
\subfigure[Corr. Diff. (STACK)]{
\label{subfig.corr_stack_15}
\includegraphics[height=0.12\textwidth]{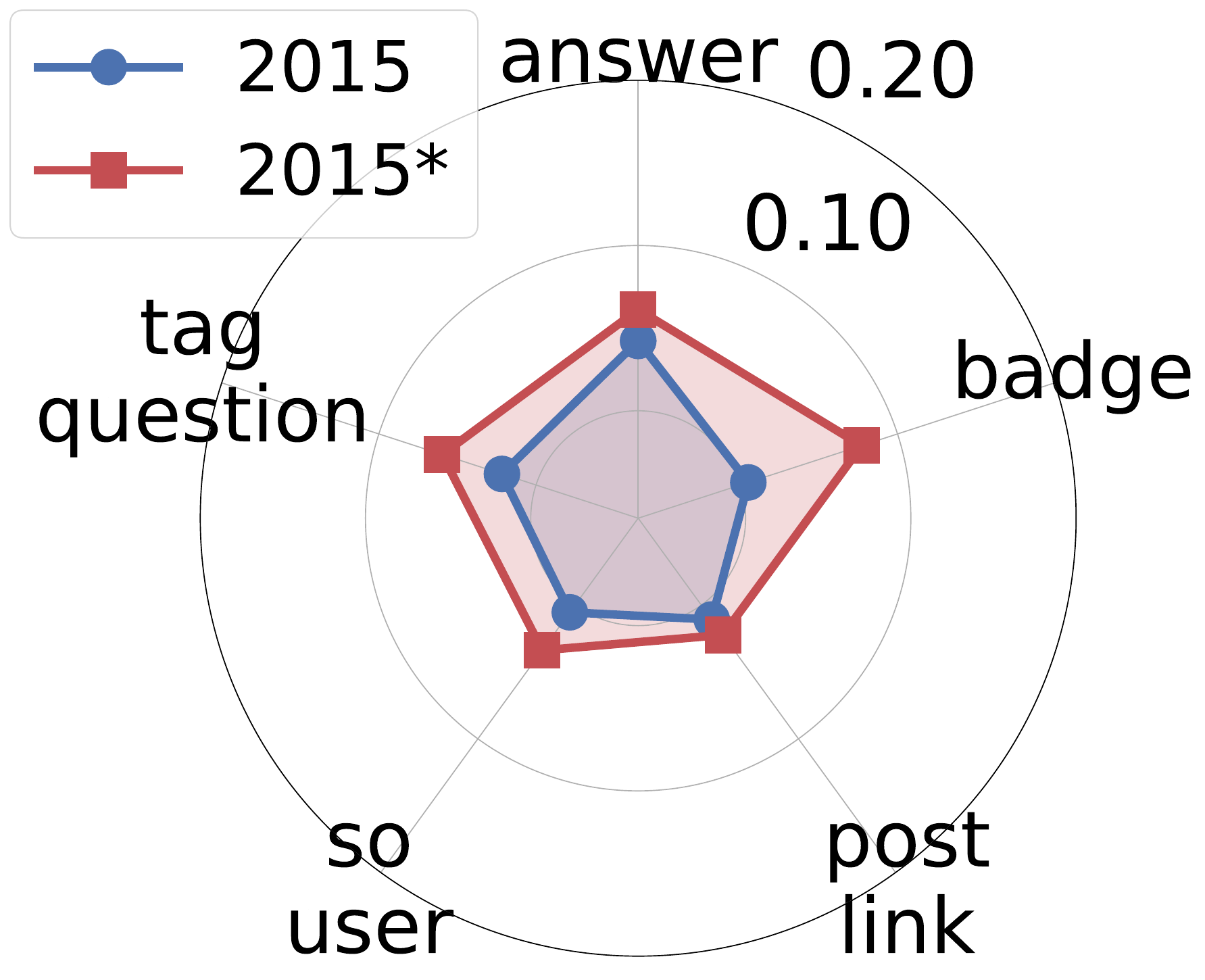}
}
\subfigure[Exec. Time (STACK)]{
\label{subfig.exec_stack_15}
\includegraphics[height=0.12\textwidth]{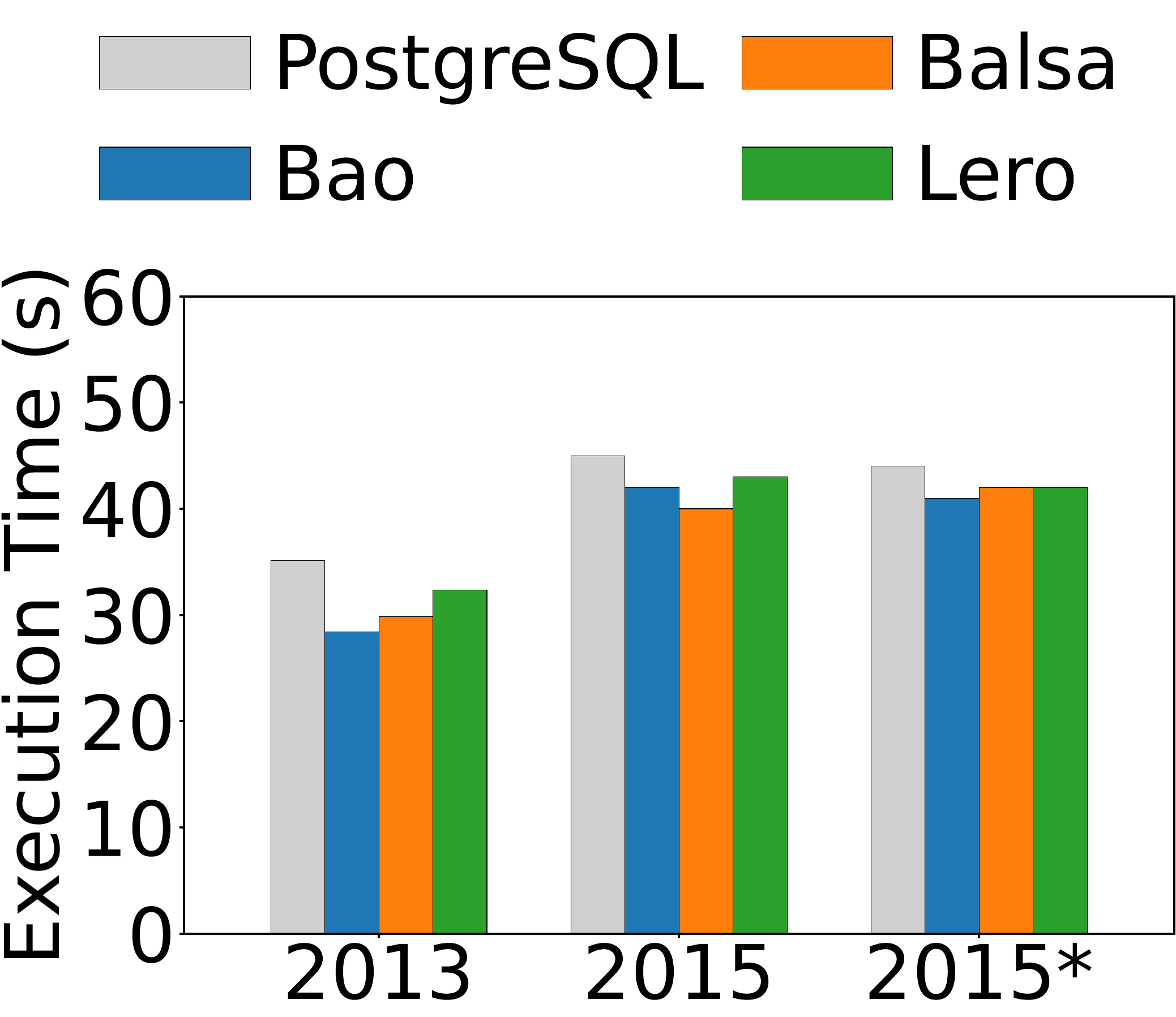}
}
\vspace{-4mm}
\captionsetup{justification=raggedright}
\caption{Interpolation Performance of Drift-aware Generator}
\label{fig.generation_interpolating}
\vspace{-4mm}
\end{figure*}

\begin{figure*}[t]
\centering
\subfigure[aka\_title]{
\label{subfig.imdb_kind_id}
\includegraphics[width=0.18\linewidth]{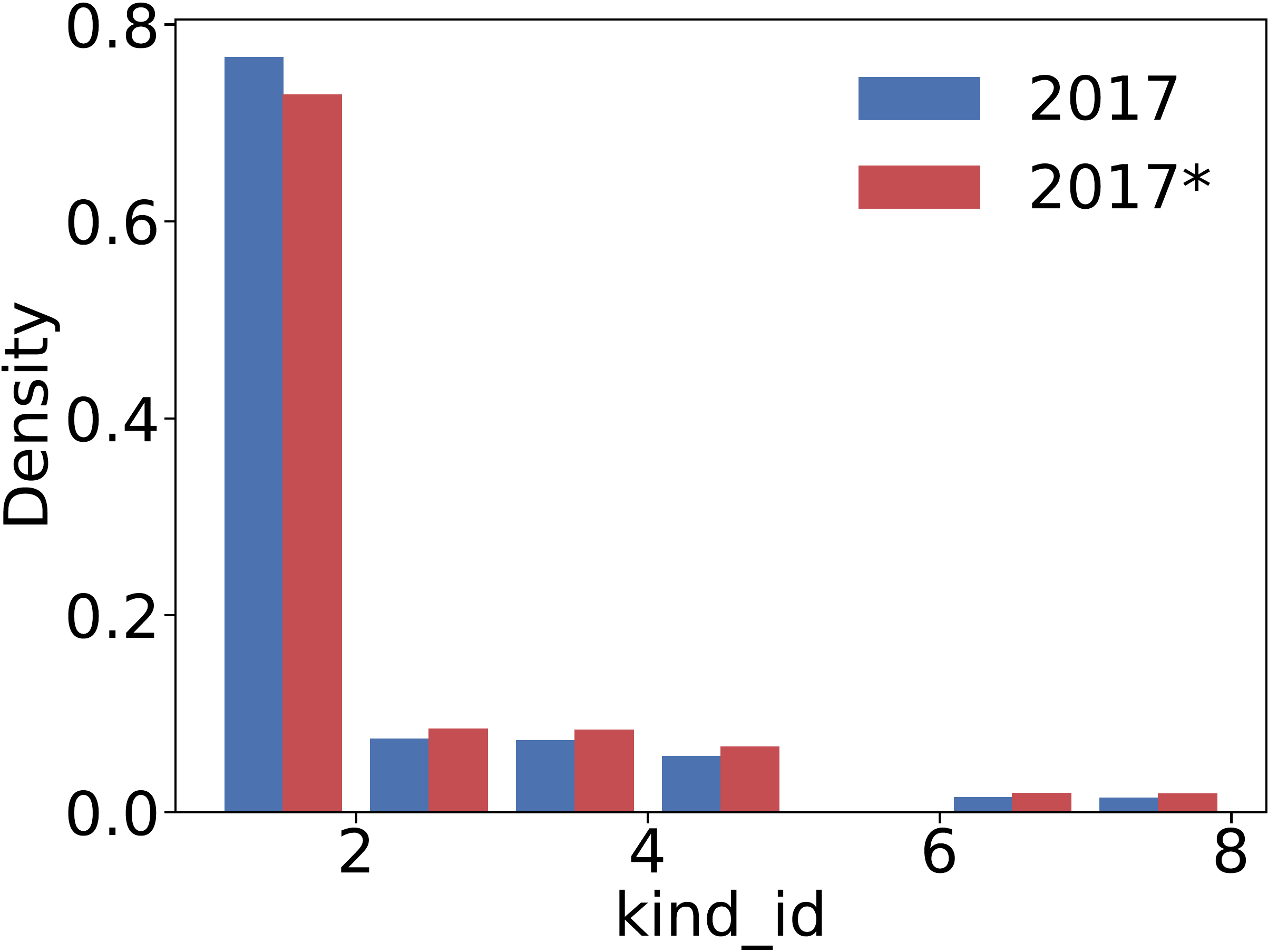}
}
\subfigure[movie\_companies]{
\label{subfig.imdb_movie_companies}
\includegraphics[width=0.18\linewidth]{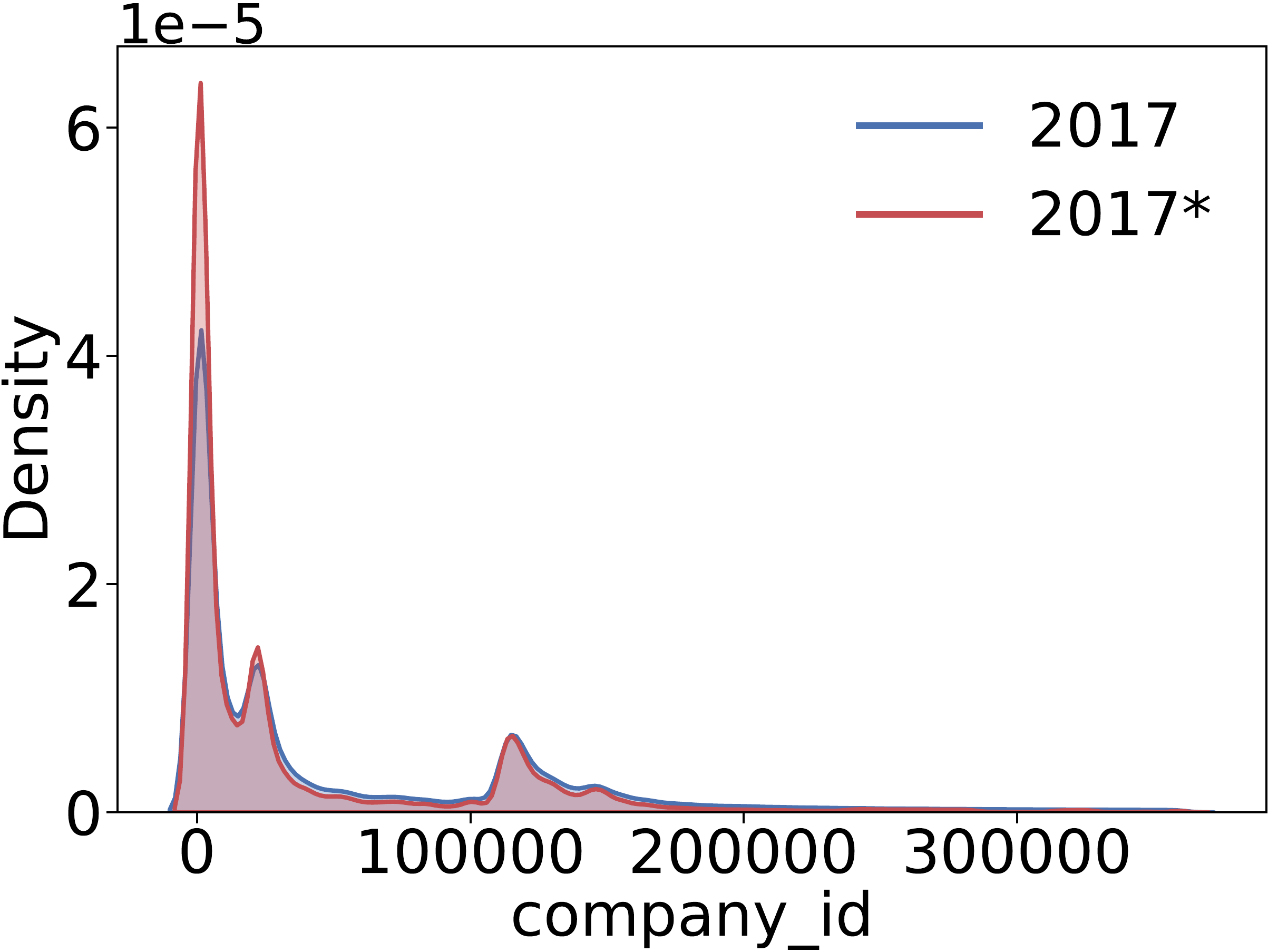}
}
\subfigure[movie\_keyword]{
\label{subfig.imdb_movie_keyword}
\includegraphics[width=0.18\linewidth]{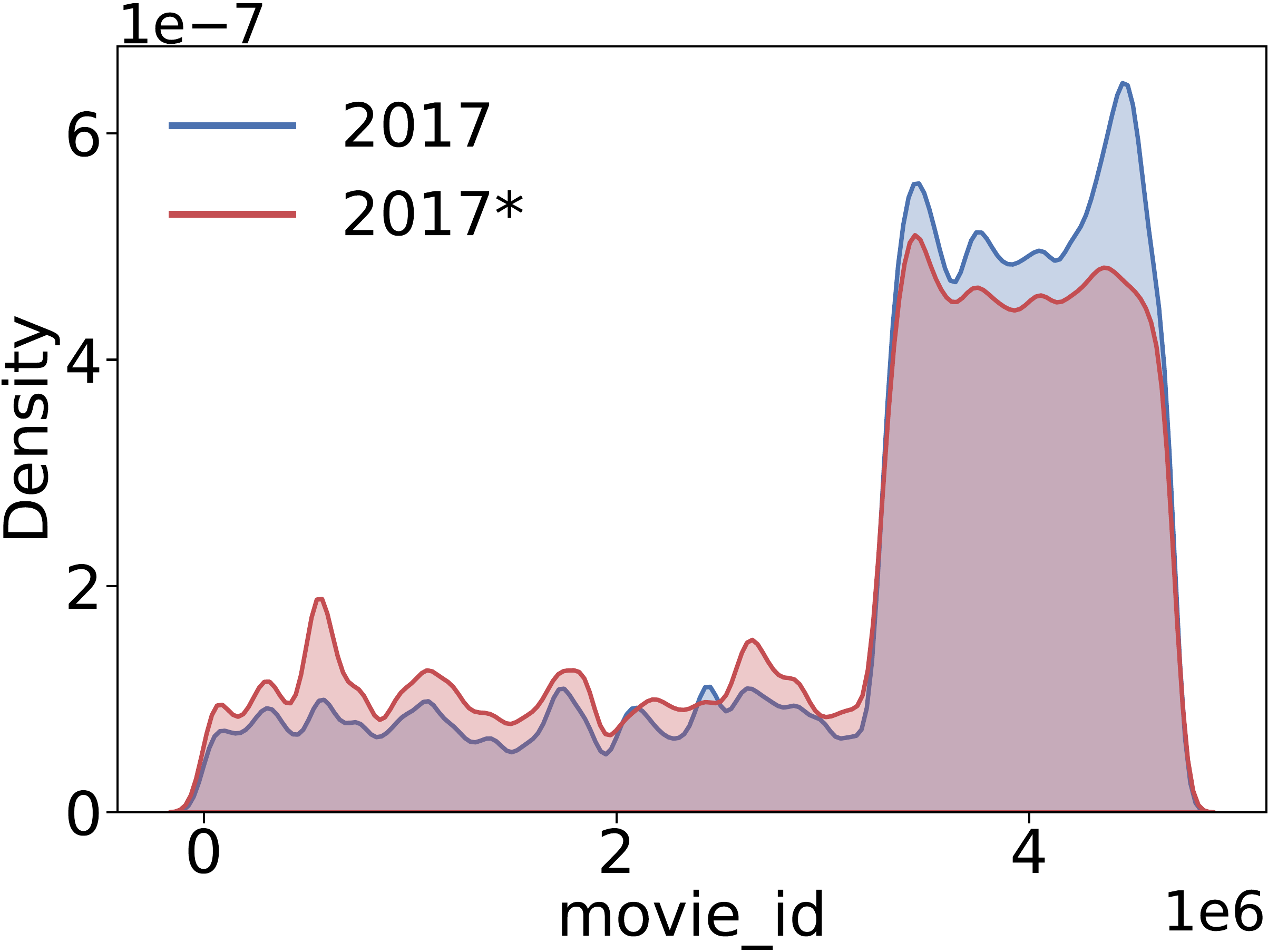}
}
\subfigure[movie\_link]{
\label{subfig.imdb_movie_link}
\includegraphics[width=0.18\linewidth]{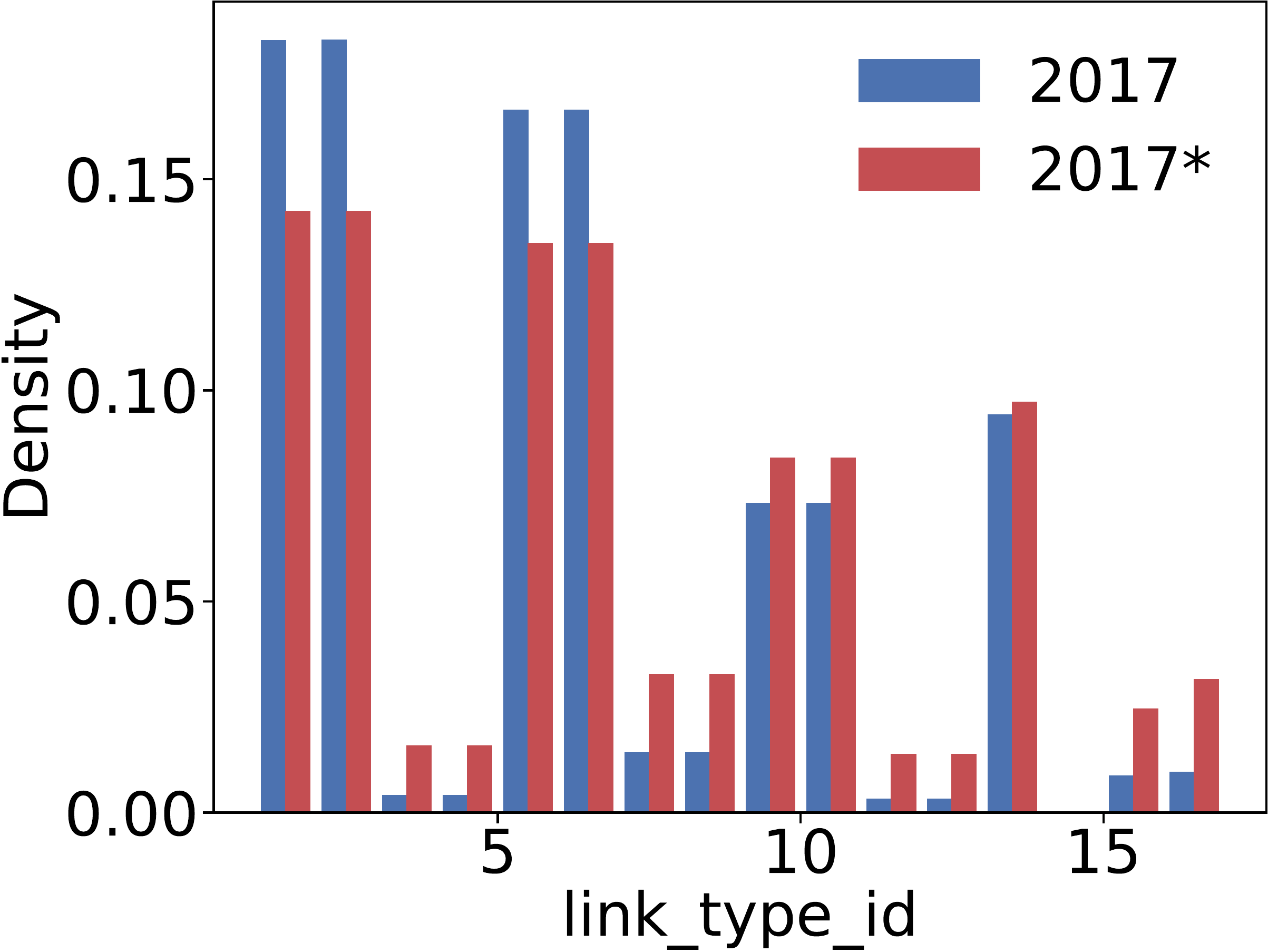}
}
\subfigure[person\_info]{
\label{subfig.imdb_person_info}
\includegraphics[width=0.18\linewidth]{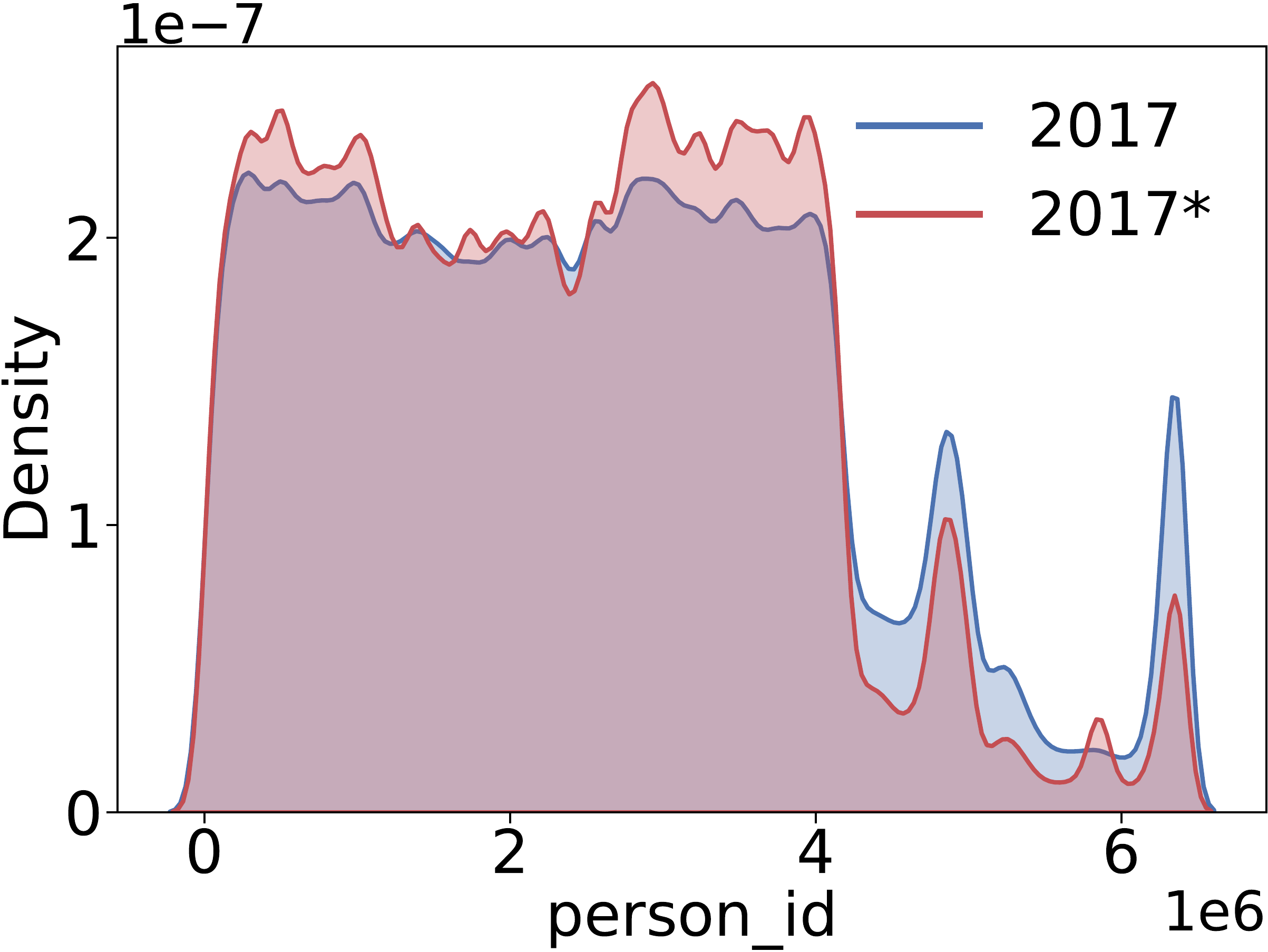}
}
\vspace{-4mm}
\captionsetup{justification=raggedright}
\caption{Column-level Distribution of Generated Data (IMDB)}
\label{fig.generation_distribution}
\vspace{-4mm}
\end{figure*}

\subsection{Default Configuration}
\label{subsec:default_setting}
We conduct all experiments on a server with 2 Intel Xeon Silver 4214R @ 2.40GHz CPUs, a total of 24 cores and 48 threads, and 128 GB of memory. 
The server is equipped with 8 NVIDIA RTX 3090 GPUs.
All experiments are run within Docker containers (Ubuntu 22.04 with CUDA 11.8.0).



Unless otherwise specified, we train the evaluated components on one dataset and workload, and test on other datasets and workloads to rigorously evaluate drift~\cite{DBLP:journals/pvldb/NegiWKTMMKA23}.
We randomly sample and repeatedly execute the queries provided in Section~\ref{subsec:workloaddataset} to train the evaluated learned query optimizers, following current practices~\cite{DBLP:journals/pvldb/LehmannSS24,DBLP:conf/sigmod/YangC0MLS22}.
By default, we execute 1500/33900/1500 queries to train Bao/Balsa/Lero, respectively.
Note that Lero may execute additional queries during its exploratory process.
We only report execution time in our experiments, while excluding planning time.
We do not implement parallel planning for Bao, and use the planning code provided in the original Bao repository~\cite{DBLP:conf/sigmod/MarcusNMTAK21}.
For data drift experiments, we compare the delta between the drifted data and the original data, and apply updates or inserts accordingly.
The statistics are updated in PostgreSQL.
We standardize the PostgreSQL version to v12.20 and apply consistent configurations to ensure fair comparisons.
Following~\cite{DBLP:journals/pvldb/LehmannSS24}, we set the shard buffer size of PostgreSQL to 32GB.
Consequently, the absolute performance levels of the evaluated learned query optimizers may differ from those reported in their original papers.

\section{Experimental Results} \label{sec:evaluation}
In this section, we first evaluate the effectiveness of \dbname in generating realistic drift, and then conduct benchmarks on selected learned database components using \dbname.



\extended{
\begin{figure}
    \centering
    \includegraphics[width=0.42\linewidth]{figures/learning curve/loss_curve_drifter.pdf}
    \vspace{-4mm}
    \caption{Learning Curve of the Drifter}
    \label{fig:revision.drifter_learning}
    \vspace{-4mm}
\end{figure}
}

\subsection{Experiments on Drift Generator}
\label{subsec:drift_generator_eval}

\subsubsection{Experiments on the quality of synthesized data drift}
\label{subsec:eval:abl:data_quality}


We evaluate the effectiveness of \dbname in drift data generation by comparing synthesized drift with real drift across two settings: extrapolation and interpolation. 
In the extrapolation setting, we train \dbname using data from 2013–2016 and generate synthetic data for 2017 (denoted as 2017*). 
In the interpolation setting, we train using 2013–2017 excluding 2015, and generate 2015* to fill the gap. 
For illustrative purposes, we first analyze the results on the IMDB dataset, followed by those on the STACK dataset.

\paragraphtitle{Drift Analysis}
Figure~\ref{subfig.drift_imdb_17} and Figure~\ref{subfig.drift_imdb_15} show that the synthesized snapshots exhibit drift factors highly similar to real snapshots, confirming that the generated data induces realistic distributional drift.
We further report the per-column distribution of 2017* in Figure~\ref{fig.generation_distribution}, which shows that the generated data exhibits comparable distribution patterns and similar density to the real data.

\paragraphtitle{Correlation Preservation}
Figure~\ref{subfig.corr_imdb_17} and Figure~\ref{subfig.corr_imdb_15} report the correlation difference between generated data and real data.
We observe that the generated snapshots maintain low correlation differences, comparable to those of real data, indicating that \dbname effectively preserves attribute dependencies during generation.



\paragraphtitle{Downstream Performance}
To evaluate the realism of the generated drift from a workload perspective, we train learned query optimizers (Bao, Balsa, Lero) on IMDB and evaluate them on both real (e.g., 2017) and generated (e.g., 2017*) datasets. 
We report the accumulated query execution time in Figure~\ref{subfig.exec_imdb_17} and Figure~\ref{subfig.exec_imdb_15}.
We can observe that all evaluated systems exhibit performance degradation on both real and synthetic drifted data. 
This is expected, as learned query optimizers may suffer from suboptimal plan choices due to outdated knowledge about data distributions, which we will further evaluate in Section~\ref{subsec:eval:lqo_drift}.
Further, the degradation patterns are highly similar, indicating that the generated data successfully simulates the impact of real-world drift on query performance.




We also perform experiments in generating drifted STACK data.
As shown in Figure~\ref{subfig.drift_stack_10} and Figure~\ref{subfig.corr_stack_10}, \dbname is able to synthesize drifted data with a relatively low correlation difference.
In addition, Figure~\ref{subfig.exec_stack_10} demonstrates similar performance degradation for learned query optimizers under both synthesized and real drift.

\noindent\expmessage{\dbname can synthesize data with realistic distribution drift and strong correlation fidelity, which in turn enables faithful downstream performance evaluation across multiple datasets.}

\subsubsection{Experiments on generating drifted datasets under varying drift factors}
\label{subsec:drift_severity}
We compare the data generation quality of \dbname against RelDDPM and CTGAN.
Following RelDDPM~\cite{DBLP:journals/pacmmod/LiuFTLD24}, we evaluate on joined IMDB tables and measure the correlation difference between the generated and real tables to quantify drift quality.
Specifically, we construct two joined tables:
1) \texttt{Movie} $\bowtie$ \texttt{Director}, where movies are linked to directors via director ID; and 2) \texttt{Movie} $\bowtie$ \texttt{Actor}, where movies are linked to actors via actor ID.
We plot the results of these two joint tables in Figure~\ref{subfig.generation1} and Figure~\ref{subfig.generation2}, respectively.
We can observe that \dbname consistently achieves lower average correlation differences across all drift factors.
RelDDPM fails to generate data for drift factors greater than 0.4, and the correlation difference fluctuates unpredictably, showing no clear relationship with the drift factor.
CTGAN produces data with less consistent correlation control. Its conditional generation mechanism is not tailored for modeling data drift, making it difficult to produce a controlled spectrum of drifted distributions.

\noindent{\expmessage{ \dbname effectively synthesizes data across a wide range of drift factors with relatively low correlation differences.}
\vspace{-1mm}

\subsubsection{Generation Quality Analysis}
\label{subsec:snapshot_quality}
We test the factors that can affect the data generation quality.

We first evaluate the effects of drift severity.
We generate a series of synthetic datasets from IMDB with drift factors $d \in {0.1, 0.3, 0.5}$, representing mild, medium, and severe drift.
We keep the data size the same as the original IMDB.
As shown in Figure~\ref{subfig.drift_severity}, the correlation difference increases with drift factor, indicating that preserving correlations becomes more challenging under more severe drift.
However, as shown in Figure~\ref{subfig.generation1} and Figure~\ref{subfig.generation2}, the average correlation difference rises nearly linearly with increased drift.
The results confirm that
\dbname can effectively synthesize drifted data while preserving correlations in a controlled manner.

\begin{figure}[t]
\centering
\subfigure[\texttt{Movie} $\bowtie$ \texttt{Director}]{
\includegraphics[height=0.12\textwidth]{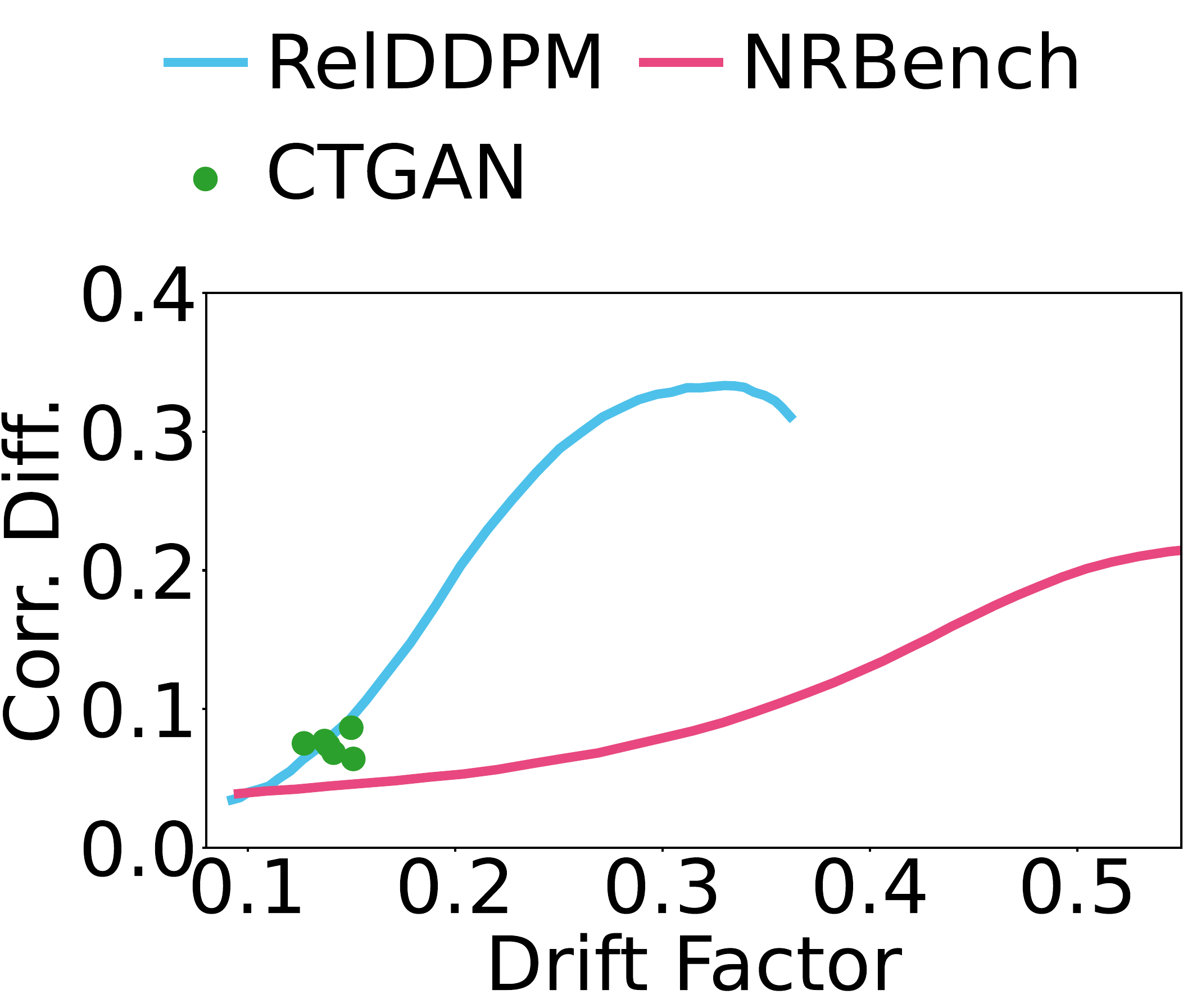}
\label{subfig.generation1}
}
\subfigure[\texttt{Movie} $\bowtie$ \texttt{Actor}]{
\includegraphics[height=0.12\textwidth]{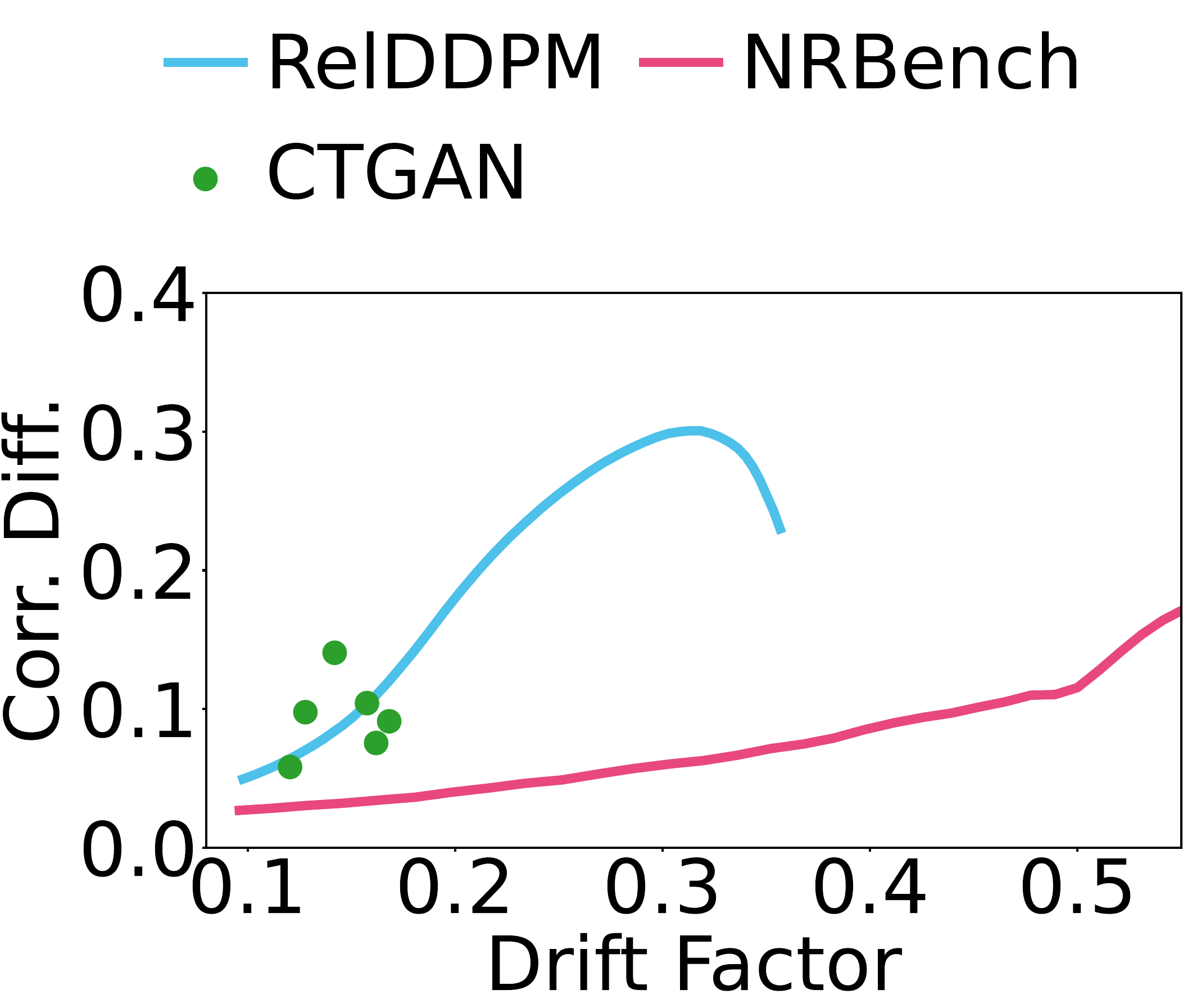}
\label{subfig.generation2}
}
\subfigure[Drift Severity]{
\includegraphics[height=0.12\textwidth]{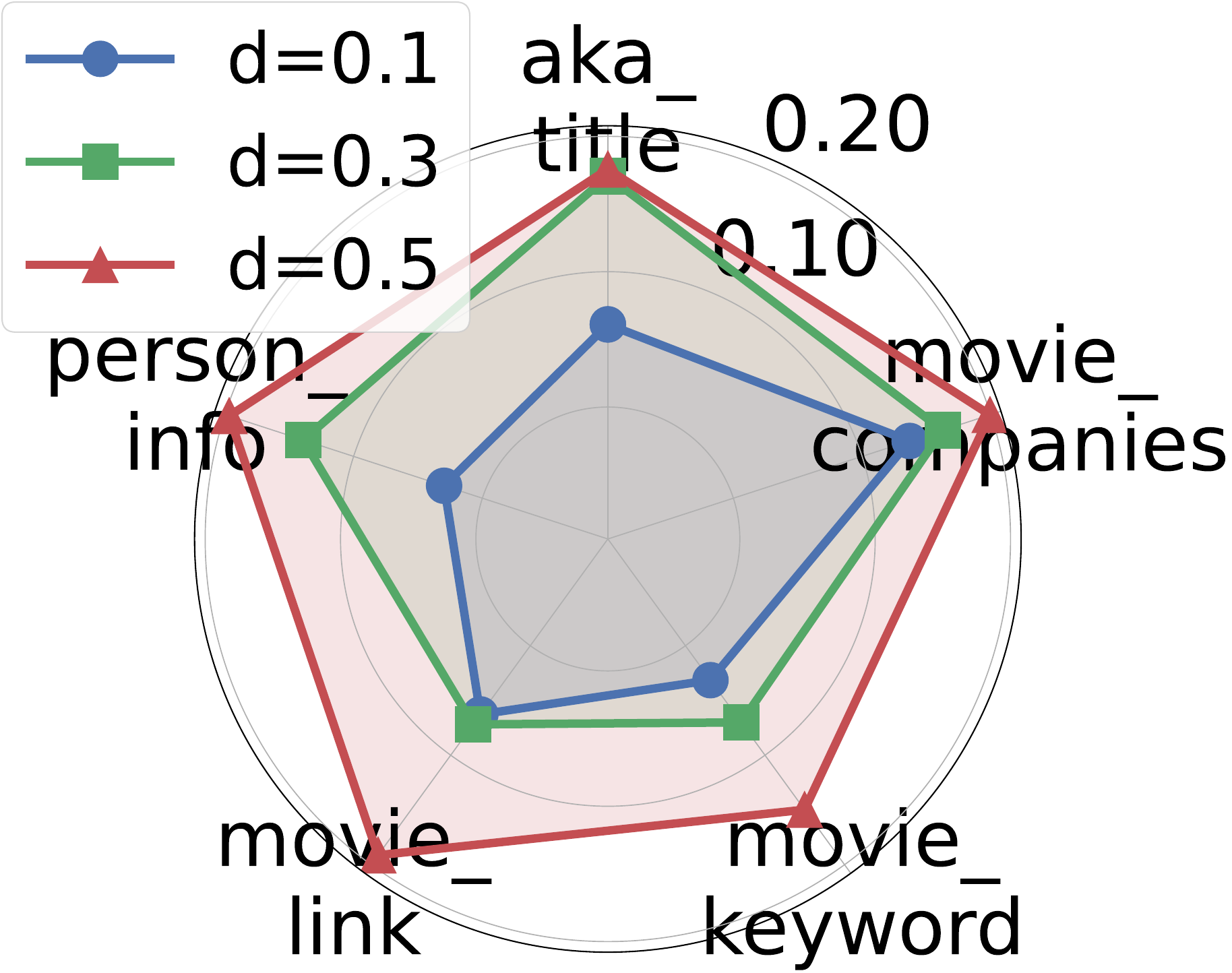}
\label{subfig.drift_severity}
}
\subfigure[Snapshot Quality]{
\label{subfig.snapshot_quality}
\includegraphics[height=0.12\textwidth]{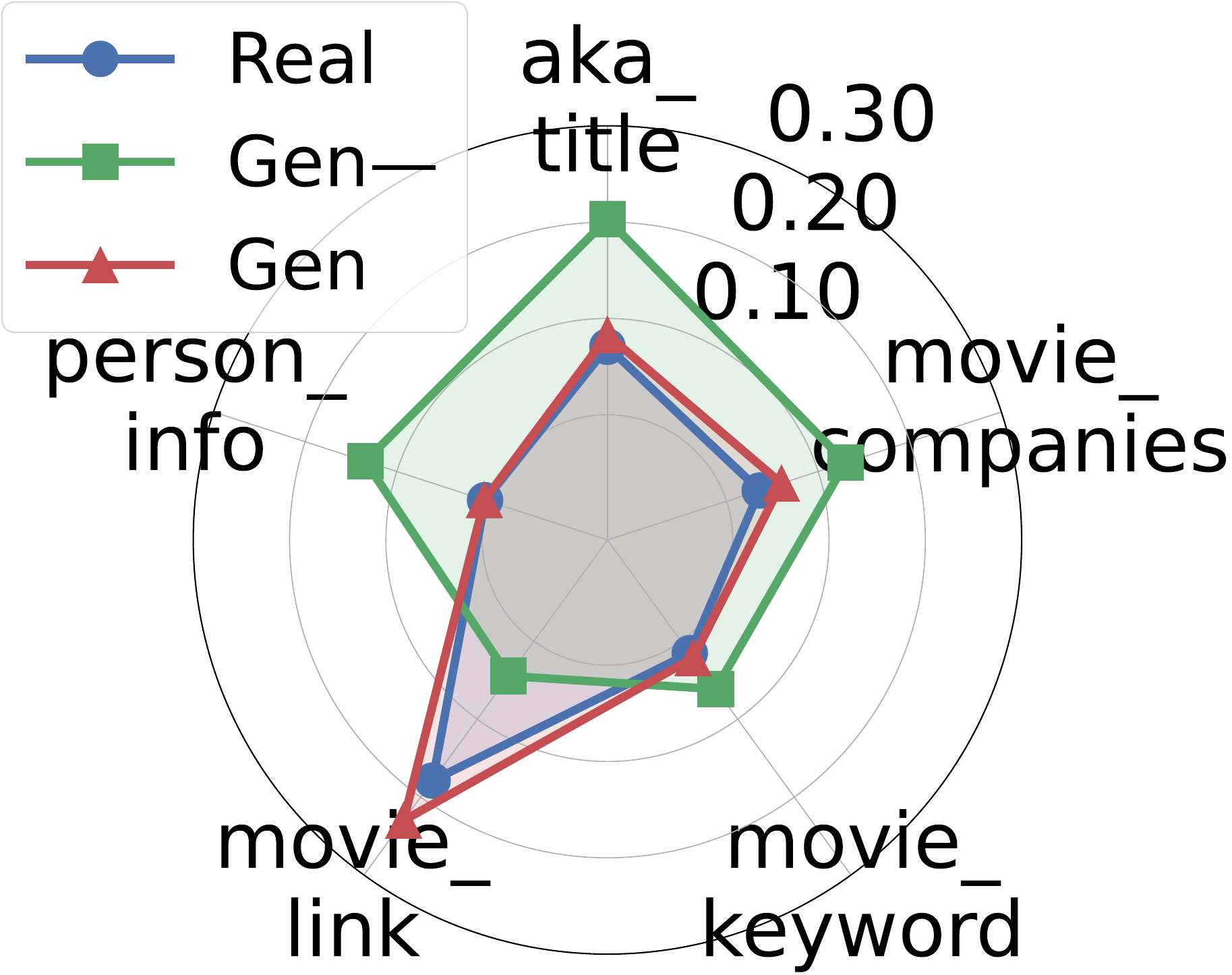}
}
\subfigure[Overhead]{
\label{subfig.overhead_data_size}
\includegraphics[height=0.12\textwidth]{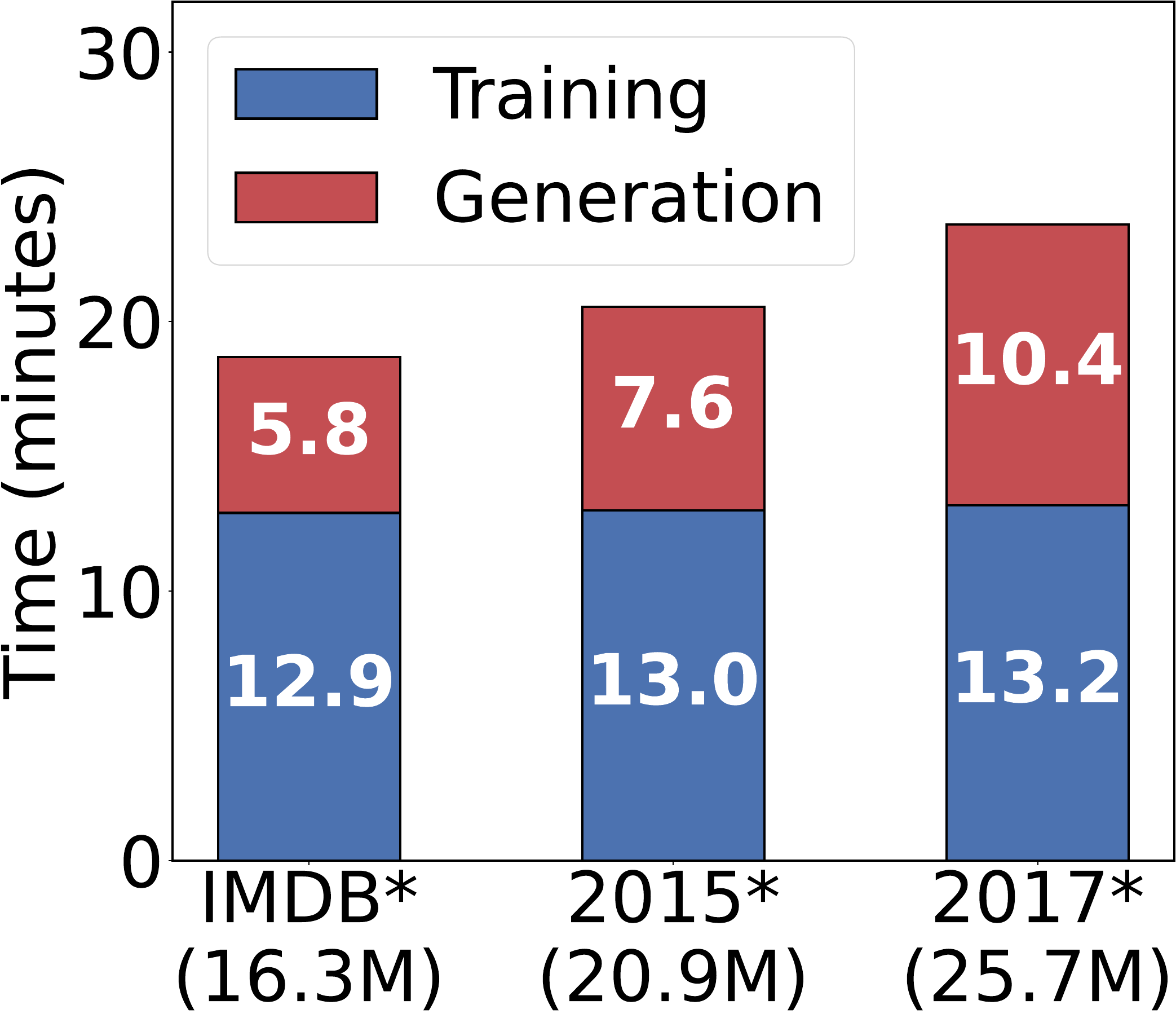}
}
\subfigure[Overhead]{
\label{subfig.overhead_gpu}
\includegraphics[height=0.12\textwidth]{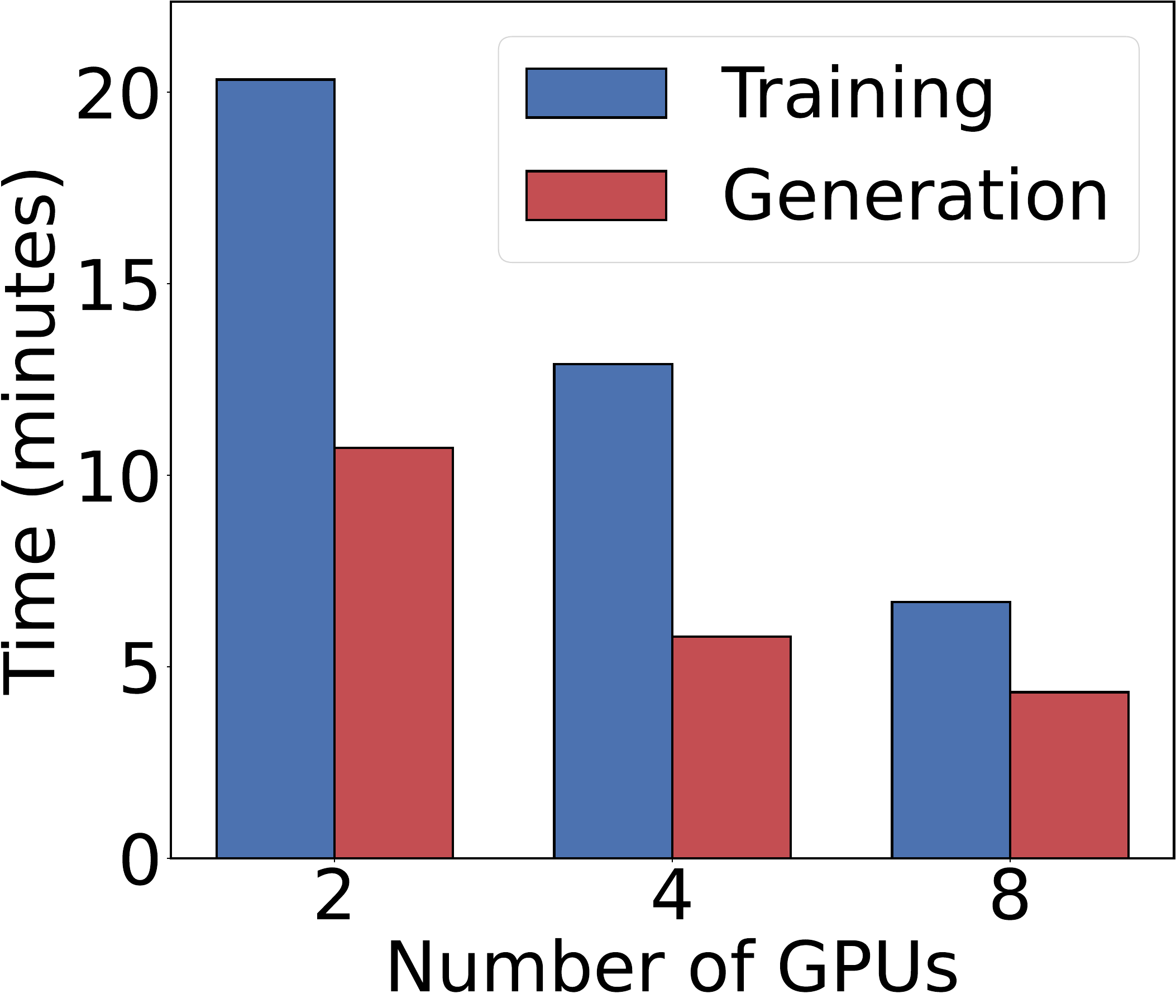}
}
\vspace{-4mm}
\captionsetup{justification=raggedright}
\caption{Overhead and In-depth Analysis}
\label{fig.generation_analysis}
\vspace{-4mm}
\end{figure}

We then evaluate the effects of snapshot quality.
With the same setup as in Figure~\ref{subfig.corr_imdb_17}, we additionally include a variant (Gen--) that omits the 2016 snapshot, representing a case without a high-quality snapshot.
As shown in Figure~\ref{subfig.snapshot_quality}, Gen-- exhibits higher correlation differences compared to Gen, confirming that high-quality snapshots improve the correlation preservation ability.

\noindent\expmessage{The drift data generation quality can be affected by the drift severity and snapshot quality.}

\subsubsection{Overhead Analysis}
\label{subsec:overhead_analysis}
We evaluate the time overhead incurred by drift generator training and drift data generation.
We first study the overhead under varying data sizes.
IMDB* refers to the IMDB dataset generated with different drift factors.
As shown in Figure~\ref{subfig.overhead_data_size}, generation time increases from 5.8 minutes on IMDB* (16.3M tuples) to 10.4 minutes on IMDB 2017* (25.7M tuples).
This is due to the fact that the generator produces one batch of tuples at a time, and as the dataset grows, more batches are required, leading to a near-linear increase in generation time.
In contrast, the training time remains stable ($\sim$13 minutes), as it depends solely on the number of training steps, which we fix to 500 across all experiments.
We then evaluate the overhead under varying GPU resources, and plot the results in Figure~\ref{subfig.overhead_gpu}.
As observed, both training and generation times decrease as the number of GPUs increases from 2 to 8.
This is expected, as batch-based generation can be effectively parallelized across multiple GPUs, demonstrating good scalability of our approach.

\noindent\expmessage{The factors that determine the generation overhead are data sizes and GPU resources.}


\subsubsection{Data Drift Generator Extensibility}
\label{subsec:gen_extended_drift}
We study whether a drifter trained on one dataset can be transferred to another dataset. 
Specifically, we apply a drifter trained on the STACK dataset to an IMDB diffuser to generate drifted IMDB data, denoted as IMDB(S)*.
As shown in Figure~\ref{subfig.drift_genertor_entended}, this transfer is feasible because the drifter predicts noise at each denoising step of the diffuser. 
Since the diffusion process starts from a Gaussian distribution, injecting such noise still induces meaningful distributional changes on the target dataset.
We further examine whether IMDB data can be extrapolated to approximate a snapshot with incomplete information. 
We collect partial IMDB data for 2019 from~\cite{imdb} and use it as a snapshot to train the data generator, synthesizing 2019*. 
As shown in Figure~\ref{subfig.drift_genertor_entended}, 2019* leads to more severe performance degradation than 2017*, indicating that \dbname can synthesize stronger drift over a larger temporal span.


\noindent\expmessage{A drifter trained on one dataset can be transferred to another dataset to enable drift generation.}

\begin{figure}[t]
\centering
\subfigure[{Data Drift Generator}]{
\includegraphics[width=0.45\linewidth]{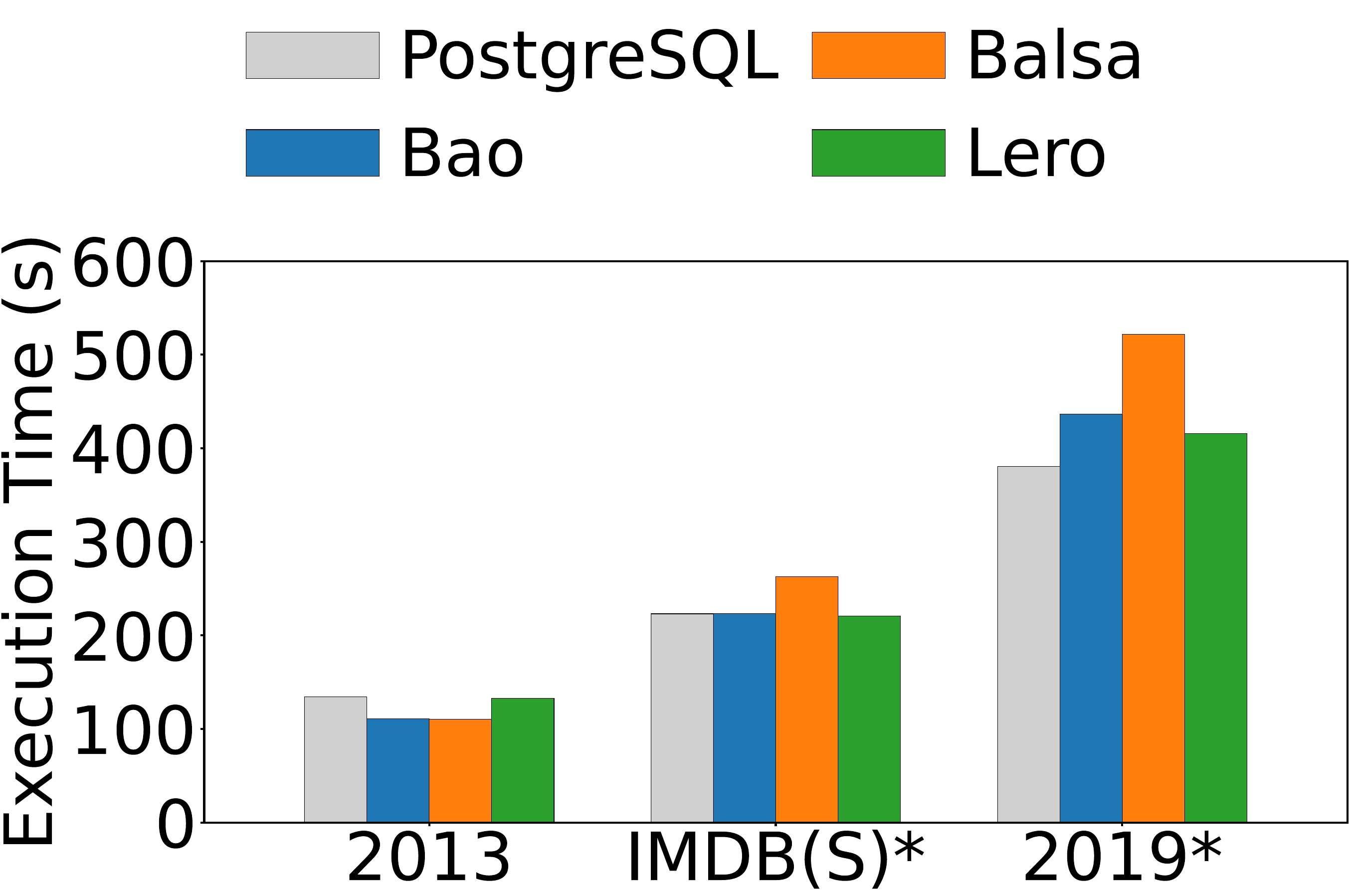}
\label{subfig.drift_genertor_entended}
}\hspace{3mm}
\subfigure[Workload Drift]{
\includegraphics[width=0.45\linewidth]{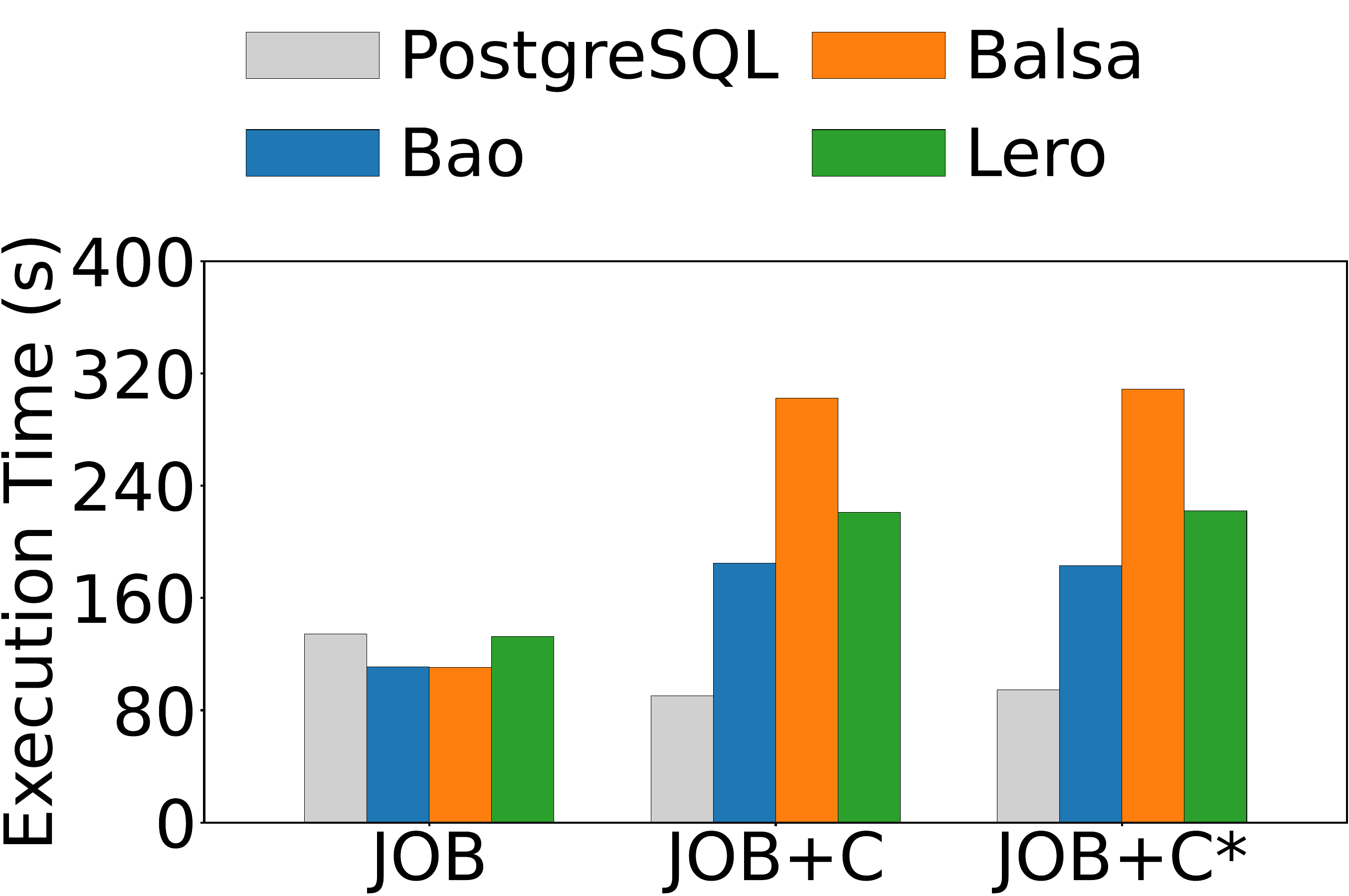}
\label{subfig.real_lqo_wl_ceb}
}
\vspace{-4mm}
\captionsetup{justification=raggedright}
\caption{Extended Data Drift and Workload Drift Analysis}
\label{fig.extended_drift_gen}
\vspace{-4mm}
\end{figure}

\subsubsection{Experiments on the quality of synthesized workload drift}

We validate the effectiveness of \dbname in generating realistic workload drift.
Following Krid et al.\cite{redbenchimpl}, we adopt JOB+C, a hybrid workload composed of 46 queries from JOB and 66 from CEB~\cite{DBLP:journals/pvldb/NegiMKMTKA21}, constructed based on workload pattern mappings derived from real Redshift traces~\cite{DBLP:journals/pvldb/RenenL23}.
We use \dbname to synthesize JOB+C*, aiming to mimic the drift from JOB to JOB+C.
The workload generator of \dbname is trained with a query set collected from~\cite{DBLP:journals/pvldb/NegiMKMTKA21,joblightimpl,DBLP:journals/pvldb/YuC0L22}, excluding those used in JOB+C.
We fix the dataset to the default IMDB, train each system on JOB, and test on both JOB+C and JOB+C*. 
As shown in Figure~\ref{subfig.real_lqo_wl_ceb}, the performance on both workloads is highly comparable, indicating that \dbname effectively captures realistic workload drift.
Moreover, the effectiveness of learned query optimizers reduces under drifted workloads, which we will further investigate in Section~\ref{subsec:eval:ablation_workload_drift}.
 

\begin{figure*}[t]
\centering
\subfigure[Drift Factor]{
\label{subfig.lqo_drift_factor}
\includegraphics[height=0.163\textwidth]{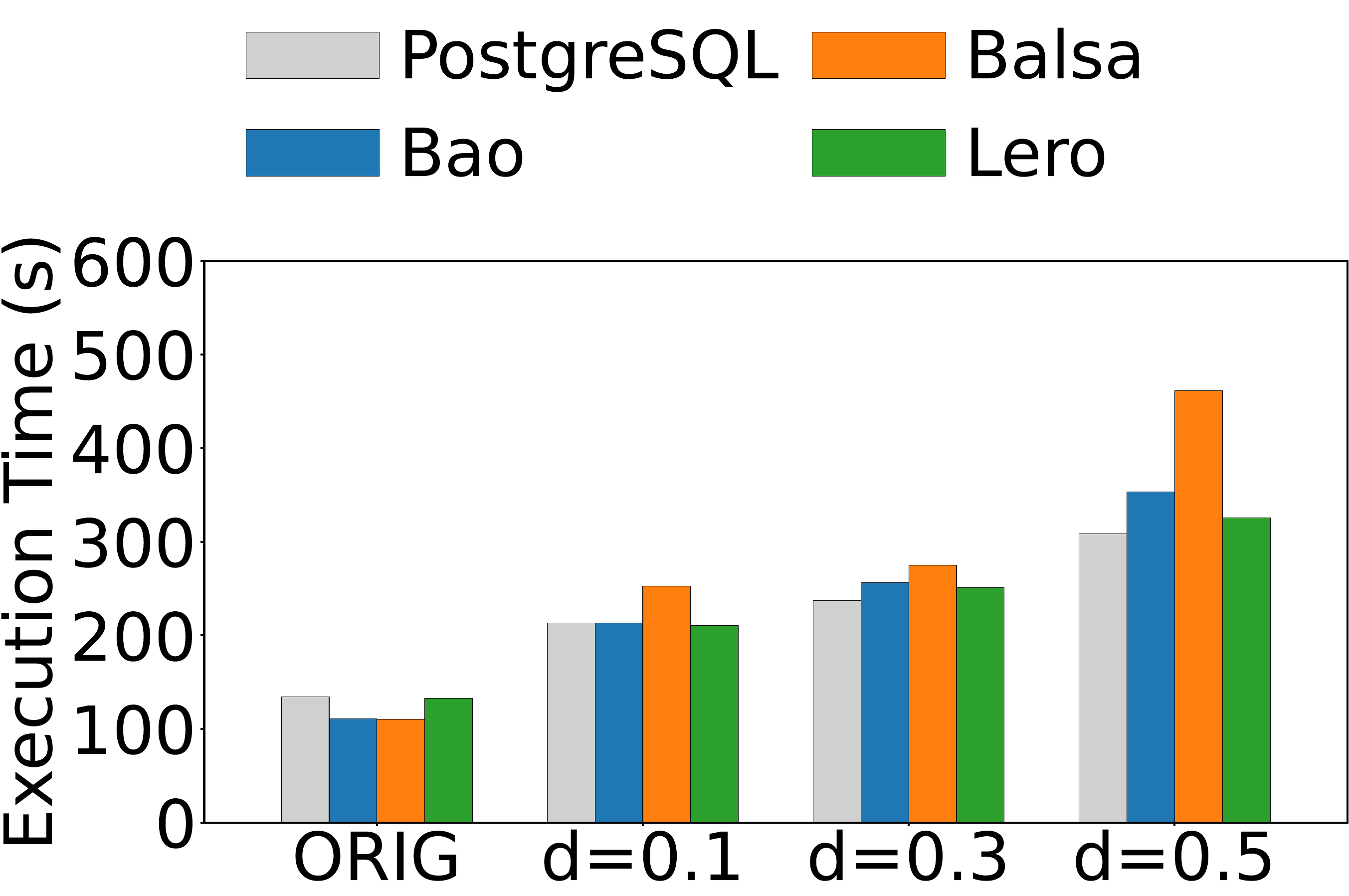}
}
\begin{minipage}[b]{0.7\textwidth}
    \centering
    \includegraphics[height=0.63cm]{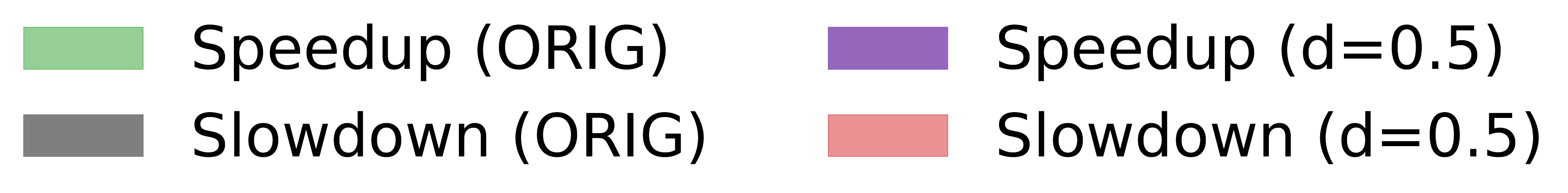}\\
    \subfigure[Bao]{
        \includegraphics[height=0.17\textwidth]{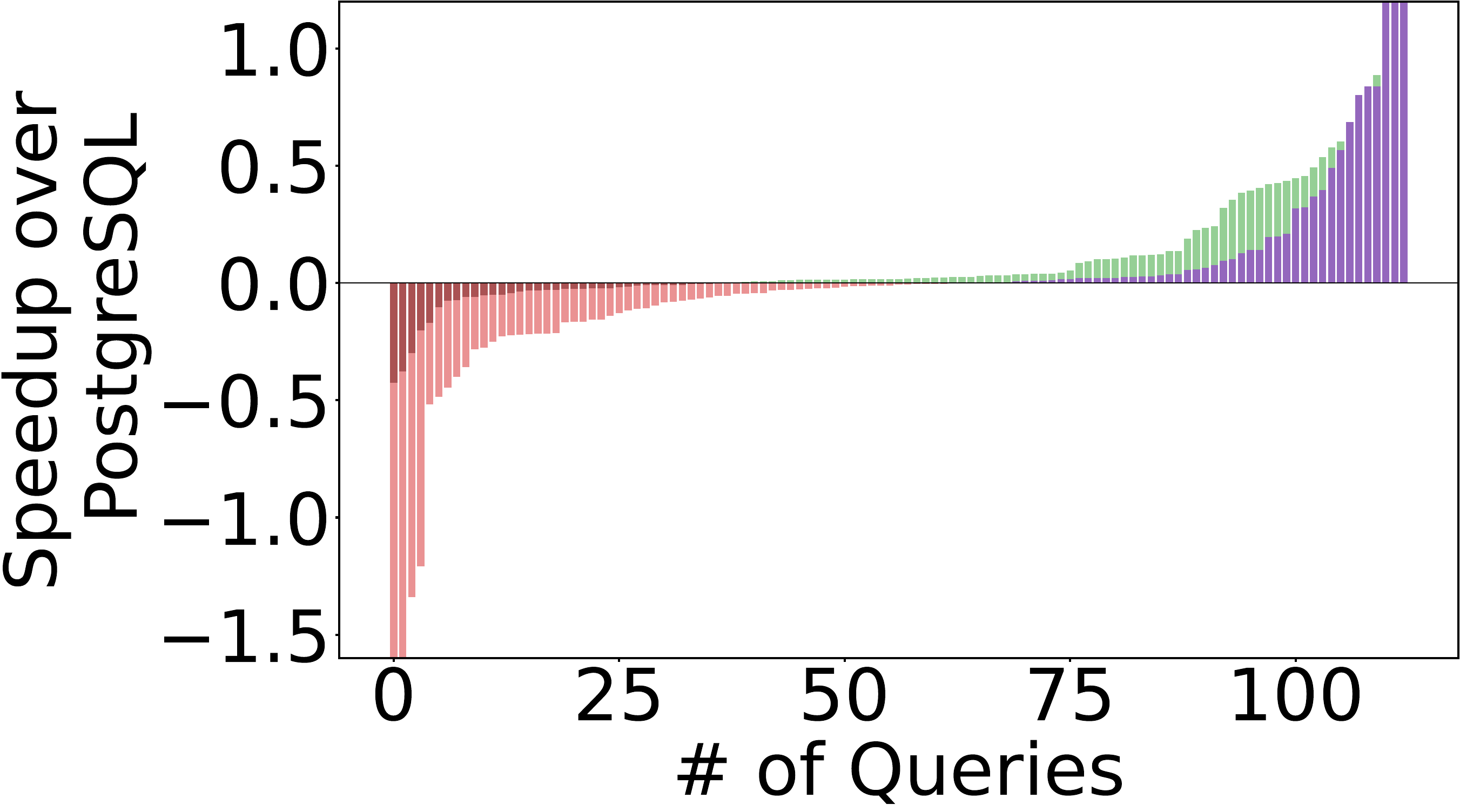}
        \label{subfig.ablation_bao}
    }
    \subfigure[Balsa]{
        \includegraphics[height=0.17\textwidth]{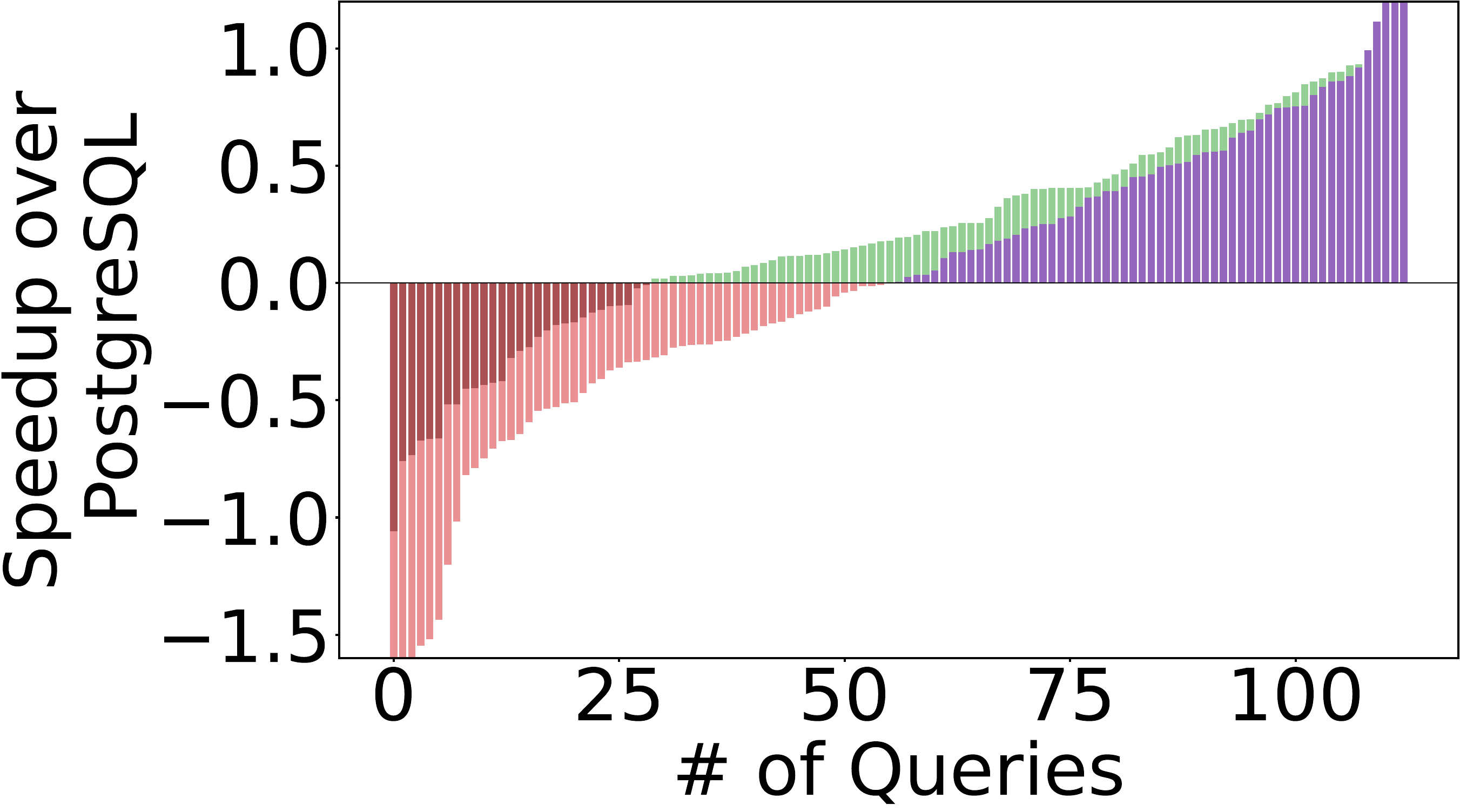}
        \label{subfig.ablation_balsa}
    }
    \subfigure[Lero]{
        \includegraphics[height=0.17\textwidth]{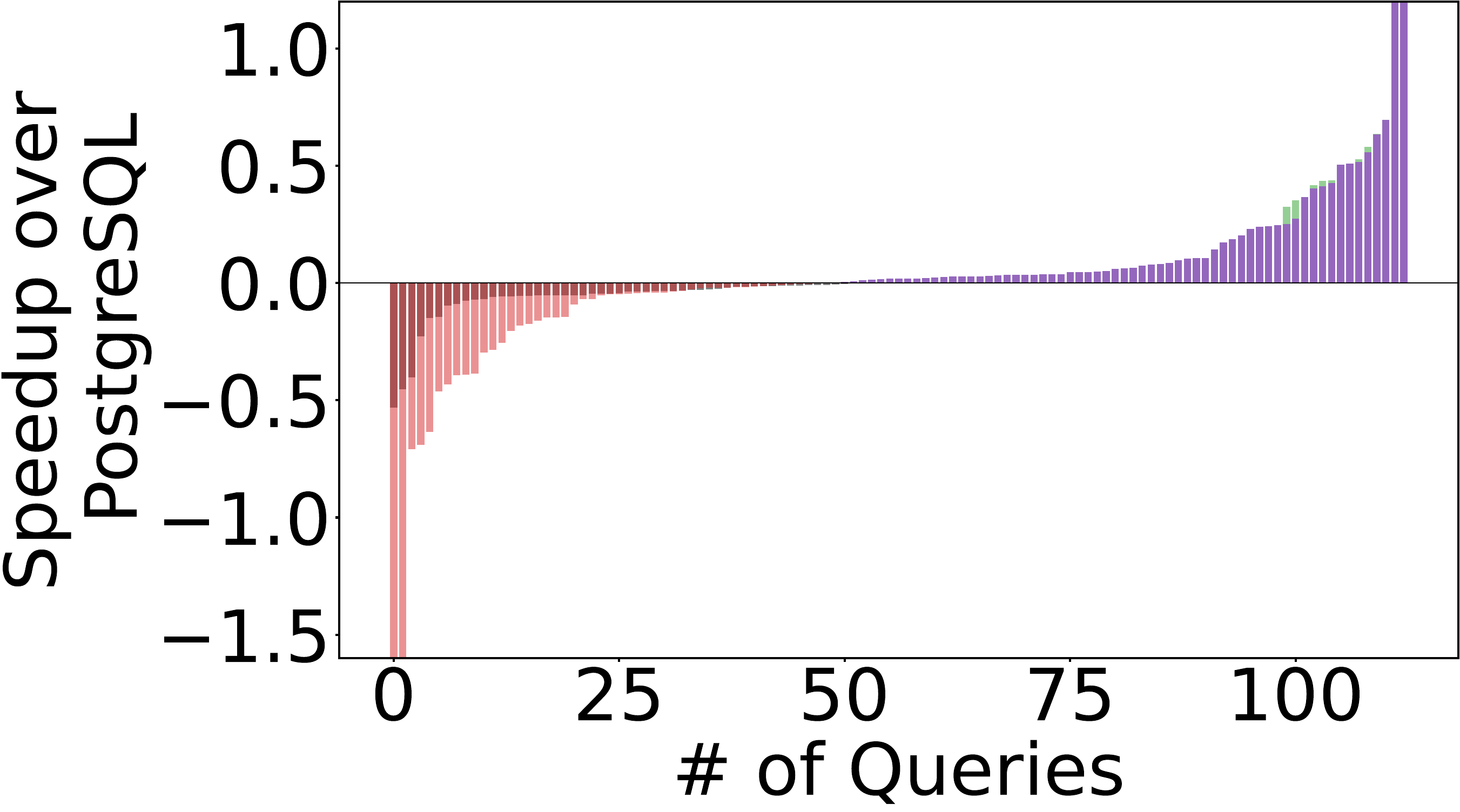}
        \label{subfig.ablation_lero}
    }
\end{minipage}
\begin{minipage}[b]{\linewidth} 
    \centering
    \includegraphics[height=0.021\textwidth]{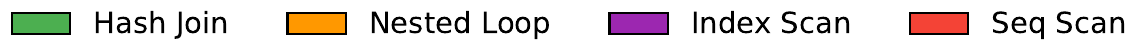} \\
    \subfigure[PostgreSQL (Pre-Drift)]{
        \label{subfig.pg_plan_pre_drift}
        \includegraphics[height=0.16\textwidth]{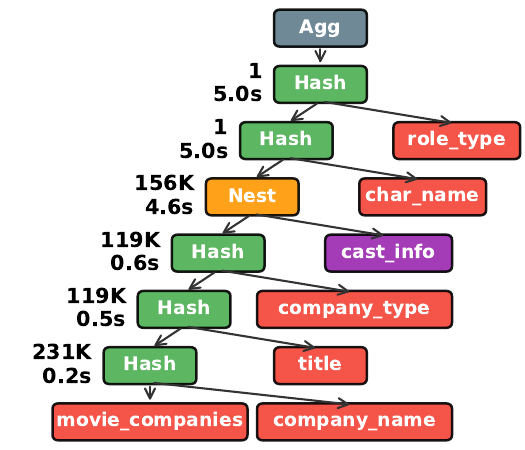}
    }
    \subfigure[Bao (Pre-Drift)]{
        \label{subfig.bao_plan_pre_drift}
        \includegraphics[height=0.16\textwidth]{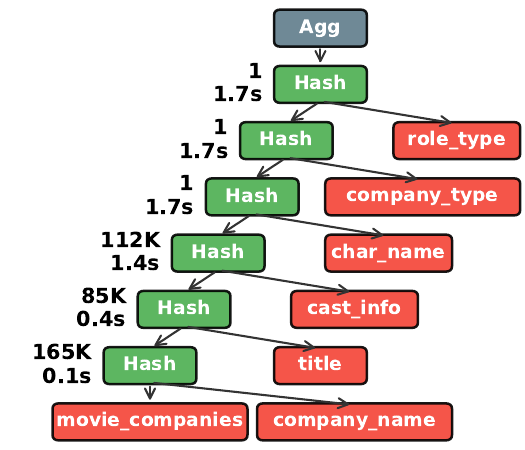}
    }
    \subfigure[PostgreSQL (Post-Drift)]{
        \label{subfig.pg_plan_post_drift}
        \includegraphics[height=0.16\textwidth]{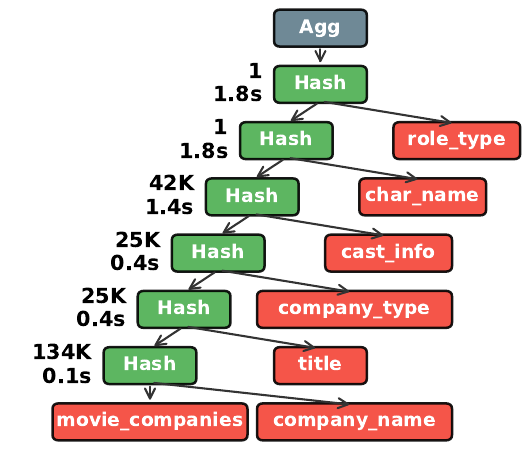}
    }
    \subfigure[Bao (Post-Drift)]{
        \label{subfig.bao_plan_post_drift}
        \includegraphics[height=0.16\textwidth]{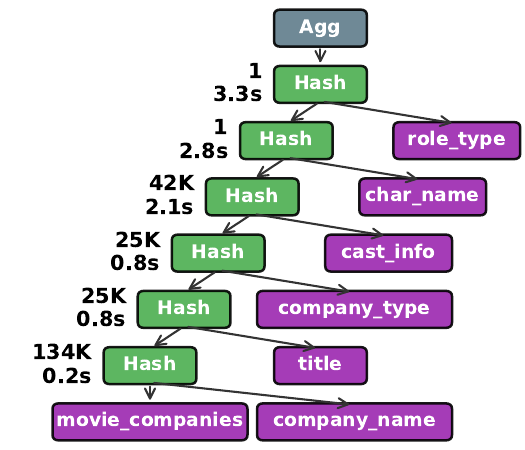}
    }
    \subfigure[Balsa (Pre- and Post-Drift)]{
        \label{subfig.balsa_plan}
        \includegraphics[height=0.16\textwidth]{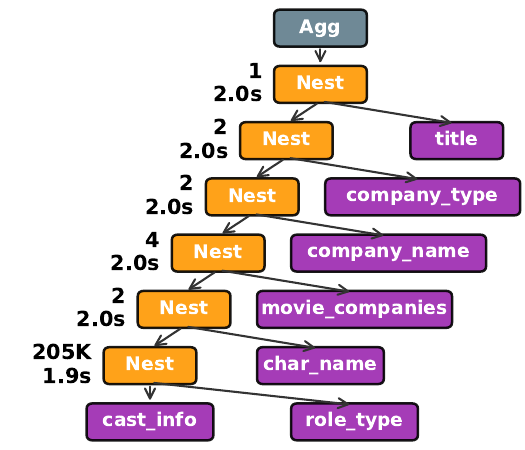}
    }
\end{minipage}
\vspace{-4mm}
\captionsetup{justification=raggedright}
\caption{{Evaluation of Learned Query Optimizers under Synthetic Data Drift with Various Drift Factors}}
\vspace{-4mm}
\label{fig.eval.lqo_data_drift}
\end{figure*}

\subsection{Experiments on Learned Query Optimizers under Synthetic Data Drift}
\label{subsec:eval:data_drift}
\label{subsec:eval:lqo_drift}

In this set of experiments, we evaluate the selected learned query optimizers under the drift synthesized by \dbname.


\subsubsection{Experiments under synthesized data drift with various drift factors}
\label{subsec:eval:lqo_drift_factor}
We construct synthesized datasets with gradually increasing drift factors to evaluate the impact of data drift. 
By default, we use the drifted datasets generated in Section~\ref{subsec:snapshot_quality}.
We train each system on the original IMDB (denoted as ORIG) and evaluate on datasets with increasing $d$.
As shown in Figure~\ref{subfig.lqo_drift_factor}, the execution time increases with larger drift values.
This aligns with our expectation that models trained on less similar data distributions experience more severe staleness.

\label{subsec:eval:ablation_drift}

\label{subsubsec:eval:acc}
To better understand the performance degradation caused by data drift (Figure~\ref{subfig.lqo_drift_factor}), we compute model accuracy metrics and report the results in Table~\ref{tab:lqo_eval:acc_synthesized}.
We use the median Q-error to measure the ratio between the estimated and true execution time~\cite{DBLP:conf/sigmod/KimJSHCC22,DBLP:journals/pvldb/HanWWZYTZCQPQZL21}.
Top-3 AUC measures how often the best plan appears in the top 3 results ranked by the model.
We observe that learned query optimizers generally suffer from reduced model accuracy under data drift, which aligns with their overall performance degradation. 
Among them, Balsa shows the most significant decline, as it follows a black-box design that fully depends on the model's predictions to generate query plans.
In contrast, white-box approaches such as Bao embed learned hints into the optimizer, and therefore, they still partially leverage the DBMS's native cardinality estimates.

\begin{table}
\renewcommand\arraystretch{1.1}
\caption{Model Accuracy of Learned Query Optimizers}
\label{tab:lqo_eval:acc_synthesized}
\vspace{-2mm}
\resizebox{0.86\linewidth}{!}{
\begin{tabular}{llllll}
\toprule
Competitors & Metrics    & ORIG & $d$=0.1 & $d$=0.3 & $d$=0.5 \\
\midrule
Bao        & \multirow{2}{*}{Median Q-error} & 1.184 & 1.489 & 2.057 & 2.671 \\
Balsa      &                      & 1.723 & 1.974 & 2.156 & 3.512 \\
\hline
Lero       & Top-3 AUC  & 0.646 & 0.478 & 0.407 & 0.389 \\
\bottomrule
\end{tabular}
}
\vspace{-4mm}
\end{table}

To further validate this, we analyze the speedup of query execution times under data drift, and plot the speedup in Figure~\ref{subfig.ablation_bao}, Figure~\ref{subfig.ablation_balsa} and Figure~\ref{subfig.ablation_lero}. 
We observe that Balsa exhibits much higher variance in speedup than Bao, confirming its greater sensitivity to drift.
Interestingly, for certain queries, Balsa even outperforms PostgreSQL by a larger margin, suggesting that fine-grained black-box models can still identify better plans when they generalize well.
We take a closer look at the plans generated by each system for a representative query, q10c. Before the drift, as shown in Figure~\ref{subfig.pg_plan_pre_drift} and Figure~\ref{subfig.bao_plan_pre_drift}, Bao improves performance by applying the hint `SET enable\_indexscan TO off'. 
However, after the drift (Figure~\ref{subfig.pg_plan_post_drift} and Figure~\ref{subfig.bao_plan_post_drift}), Bao switches to disabling nested loop joins with `SET enable\_nestloop TO off'. 
Despite this change, the join order selected by Bao remains largely similar to that of PostgreSQL because Bao relies on PostgreSQL's initial plan structure.
In contrast, Balsa directly predicts join orders via its model, which allows it to produce plans that differ significantly from both PostgreSQL and Bao. 
Notably, as shown in Figure~\ref{subfig.balsa_plan}, Balsa generates the same plan before and after the drift, suggesting that the model failed to adapt to distributional changes. 
This static behavior under drift likely leads to mispredicted plans and performance degradation.

\noindent
\expmessage{White-box approaches can achieve better robustness. Black-box approaches, with larger search spaces, offer greater potential robustness but are often limited by current model capacity.}
\subsubsection{Regression model vs. Ranking model}
\label{subsec:eval:abl:ranking}
We further study the effect of the regression model and the ranking model on the performance of white-box approaches.
We implement Bao+R, a variant of Bao that employs Lero's ranking model while retaining Bao's hint-based plan search strategy.
Therefore, comparing the performance degradation of Bao+R and Bao can directly evaluate the robustness of the ranking model.
As shown in Figure~\ref{subfig.ablation_ranking}, Bao+R suffers less performance degradation than Bao, suggesting that the ranking model improves robustness under drift.
However, because the ranking model does not explicitly estimate predicate costs, its performance under non-drift scenarios is less competitive compared to Bao's regression-based model.

\noindent\expmessage{Ranking-based models tend to be slightly more robust under data drift but may struggle to achieve optimal performance in stable settings.}




\begin{figure}[t]
\centering
\subfigure[Ranking Model]{
\includegraphics[height=0.13\textwidth]{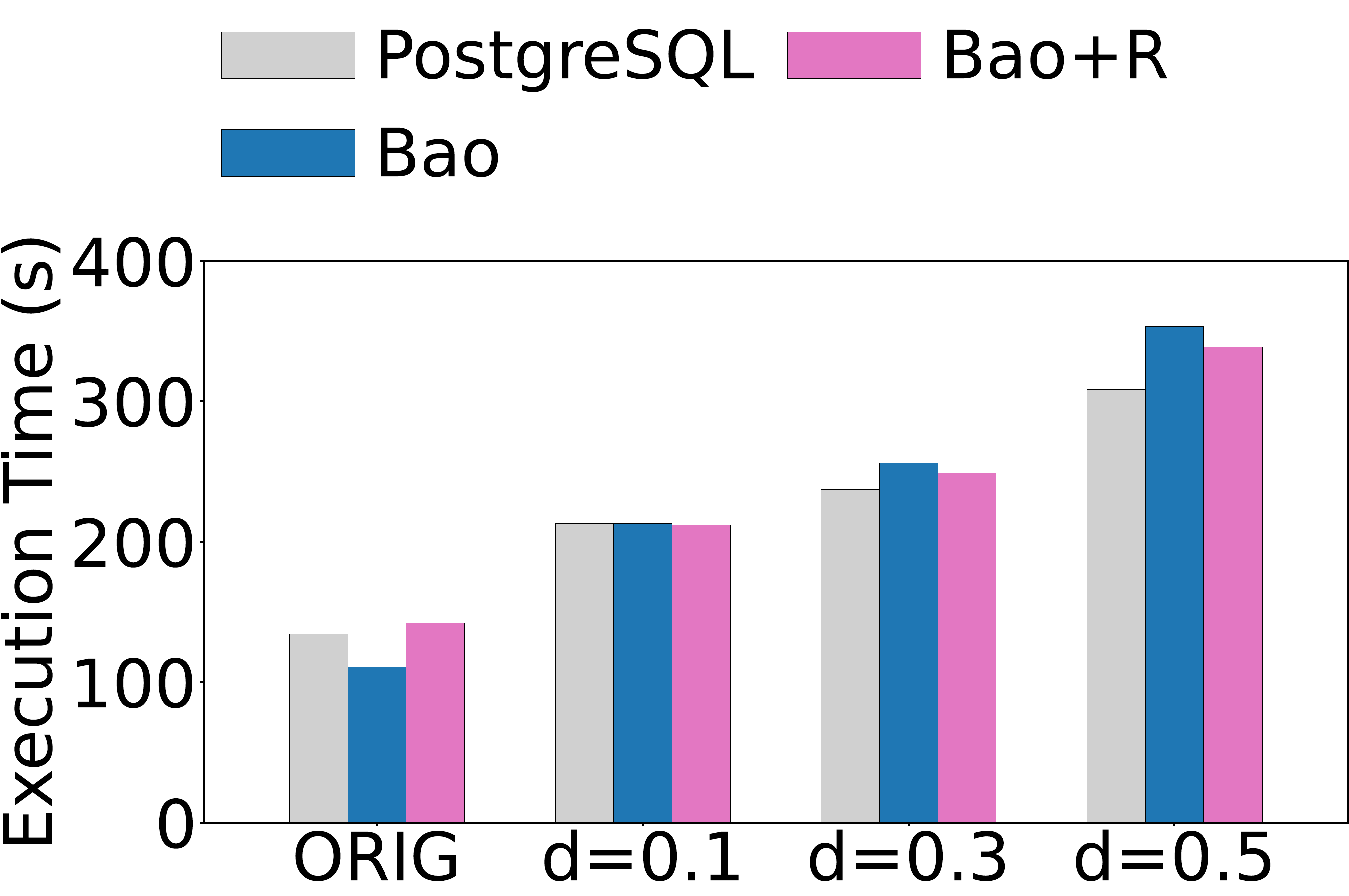}
\label{subfig.ablation_ranking}
}
\subfigure[Data Size]{
\label{subfig.lqo_data_size}
\includegraphics[height=0.13\textwidth]{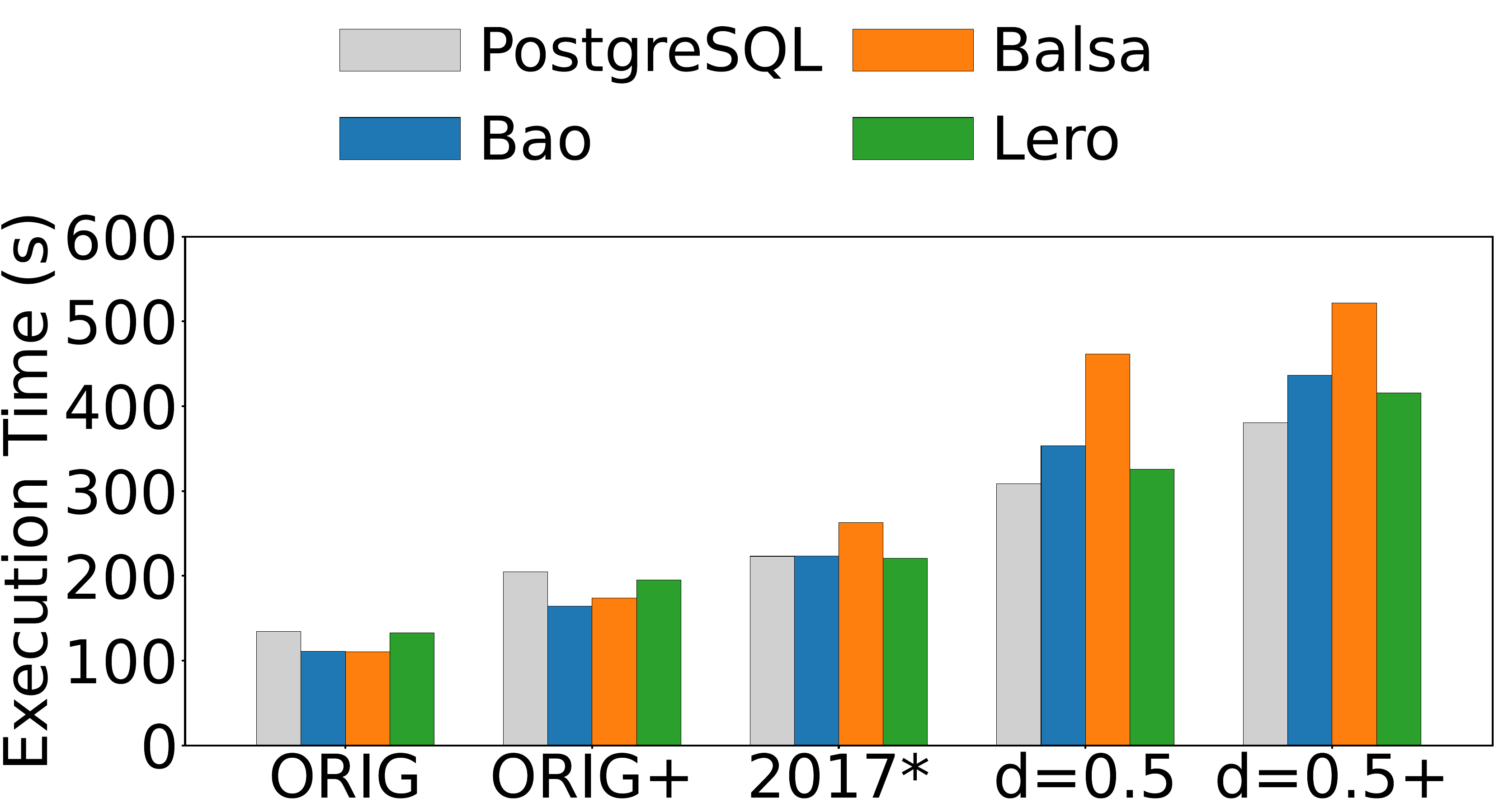}
}
\subfigure[Comparison with D-LQO]{
\label{subfig.lqo_extended}
\includegraphics[height=0.13\textwidth]{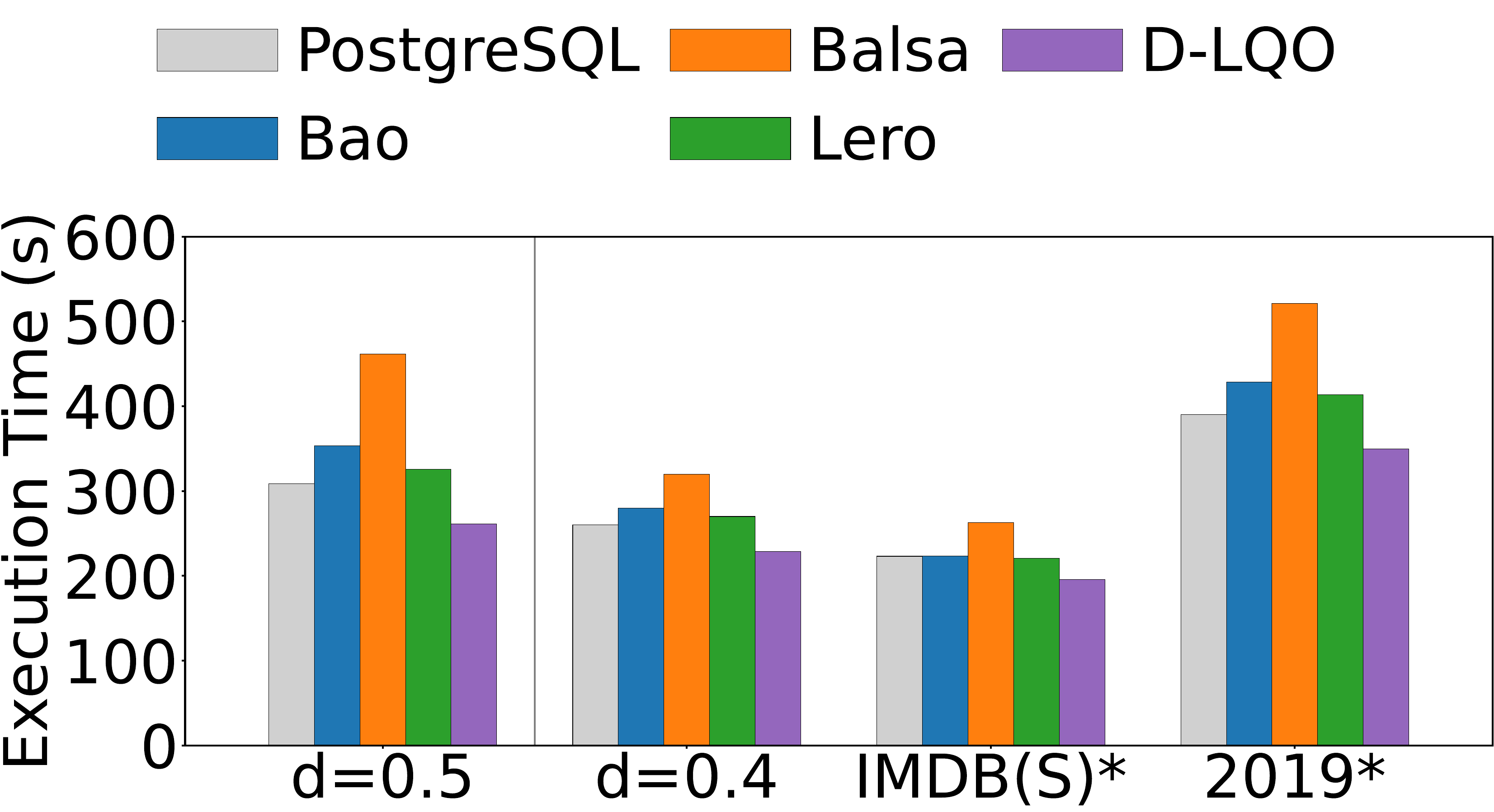}
}
\subfigure[Periodical Retraining]{
\label{subfig.lqo_retrain}
\includegraphics[height=0.13\textwidth]{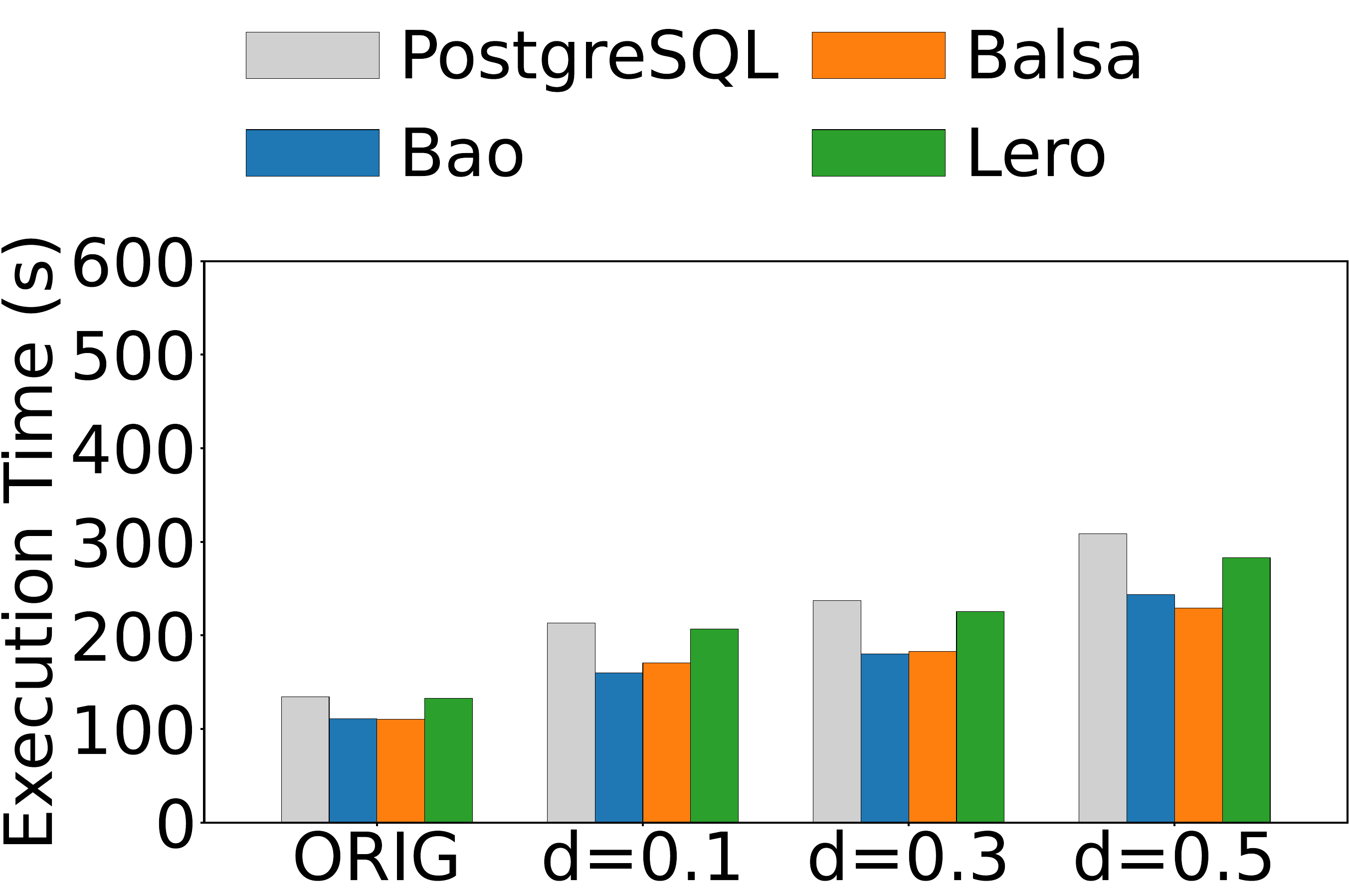}
}
\vspace{-4mm}
\captionsetup{justification=raggedright}
\caption{Analysis of Learned Query Optimizers under Various Synthetic Data Drift}
\vspace{-4mm}
\label{fig.eval.lqo_data_drift_others}
\end{figure}

\subsubsection{Experiments under synthesized data drift with varying data sizes}
\label{subsec:data_size}
We also evaluate the individual effects of data distribution drift and dataset size on learned query optimizers.
We construct ORIG+ by increasing the dataset size of ORIG to match that of 2017, while minimizing distribution drift.
Comparing ORIG+ to ORIG isolates the effect of data size, while comparing 2017* to ORIG+ indicates the impact of distribution drift alone.
As shown in Figure~\ref{subfig.lqo_data_size}, both data scale and distribution drift affect the performance of learned query optimizers.
In particular, under pure distribution drift (\emph{i.e.}, $d$=0.5), we observe clear performance degradation, indicating that learned optimizers are sensitive to distributional changes.
We further introduce $d$=0.5+, a variant of drifted IMDB data $d$=0.5 with the data size matching 2017*.
Compared to the standard distribution drift $d$=0.5, $d$=0.5+ leads to more pronounced performance degradation, suggesting that distributional drift is the dominant factor and its negative effects can be amplified by larger data sizes.

\noindent\expmessage{While data size contributes to performance variance, distribution drift more directly affects the models' robustness.}

\subsubsection{Experiments for learned query optimizers with dynamic structures}
\label{subsec:extended_drift}
The results presented thus far indicate that
both the model type and the plan search strategy have an effect on learned query optimizer robustness.
Therefore, we investigate the possibility of a learned query optimizer that is allowed to dynamically select a model based on the system conditions. 
We
implement a prototype of a dynamic learned query optimizer, D-LQO, which uses a coarse-grained router to select the optimizer (trained on non-drift data) expected to yield the lowest execution time for a given query under a specific drift.
The router is trained using query encodings as inputs and their respective execution times under different drift levels ($d$=0.1, 0.3, 0.5) in Figure~\ref{subfig.lqo_drift_factor}, as supervision. 
As shown in Figure~\ref{subfig.lqo_extended}, D-LQO achieves an 15.01\% performance improvement over PostgreSQL without retraining under $d$=0.5. 
To further study the robustness, we evaluate D-LQO on three unseen drift scenarios, $d$=0.4, IMDB(S)*, and 2019*, as specified in Section~\ref{subsec:gen_extended_drift}.
As observed in Figure~\ref{subfig.lqo_extended}, D-LQO still maintains up to 12.13\% improvement in these cases.
This robustness stems from the fact that learned query optimizers respond differently to drift. 
By learning these patterns, D-LQO selects the best-performing optimizer under a given drift.

\noindent\expmessage{A learned query optimizer with dynamic structures can deliver robustness without retraining.}

\subsubsection{Performance with periodical retraining}
\label{subsec:lqo_retraining}
We evaluate the effectiveness of periodic retraining in learned query optimizers under data drift.
In this setting, we simulate an online deployment by continuously injecting drifted data and retraining the model after a fixed number of query executions.
As shown in Figure~\ref{subfig.lqo_retrain}, periodic retraining helps the learned optimizers to regain their performance advantage over PostgreSQL.
This stems from the model's gradual adaptation to the evolving data distribution. 
As the system executes queries on drifted data, it accumulates additional knowledge reflective of the new distribution, which is then leveraged during periodic retraining to restore performance.

\noindent
\expmessage{When drift is smoothly injected and resources are sufficient, retraining is a viable strategy to enhance robustness, as its cost can be amortized across executions.}

\subsection{Experiments on Updatable Learned Indexes under Synthetic Data Drift}
\label{subsec:eval:indexes}

We study the performance of updatable learned indexes with varying synthesized data drift.
Since all the evaluated learned indexes are in-memory, we include the in-memory B$^+$-tree~\cite{DBLP:journals/pvldb/WongkhamLLZLW22} as a baseline.
We use the Facebook~\cite{DBLP:conf/infocom/GjokaKBM10} dataset to conduct our experiments. Based on it, we generate an original dataset (ORIG) and three drifted versions with drift factors $d$= 0.1, 0.3, and 0.5. 
Each drifted dataset is kept the same size as ORIG. 
We initialize each learned index using ORIG, and then incrementally apply insertions and deletions to introduce data drift. 
After applying drift, we perform point lookups to evaluate performance under different drift levels.


\begin{figure}[t]
\centering
\subfigure[Drift Factor]{
\includegraphics[height=0.15\textwidth]{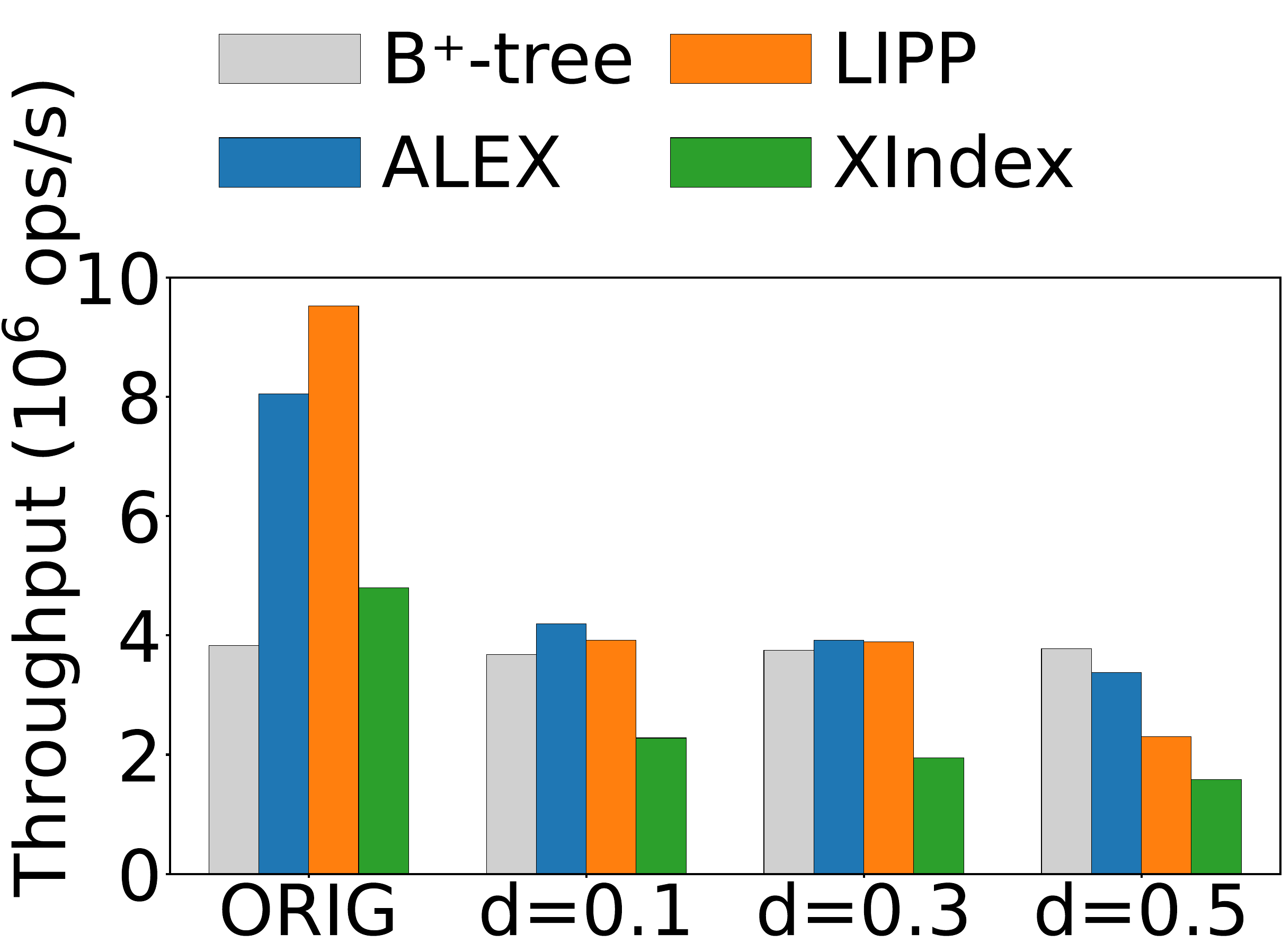}
\label{Fig.index.fb}
}\hspace{3mm}
\subfigure[Comparison with A-LI]{
\includegraphics[height=0.15\textwidth]{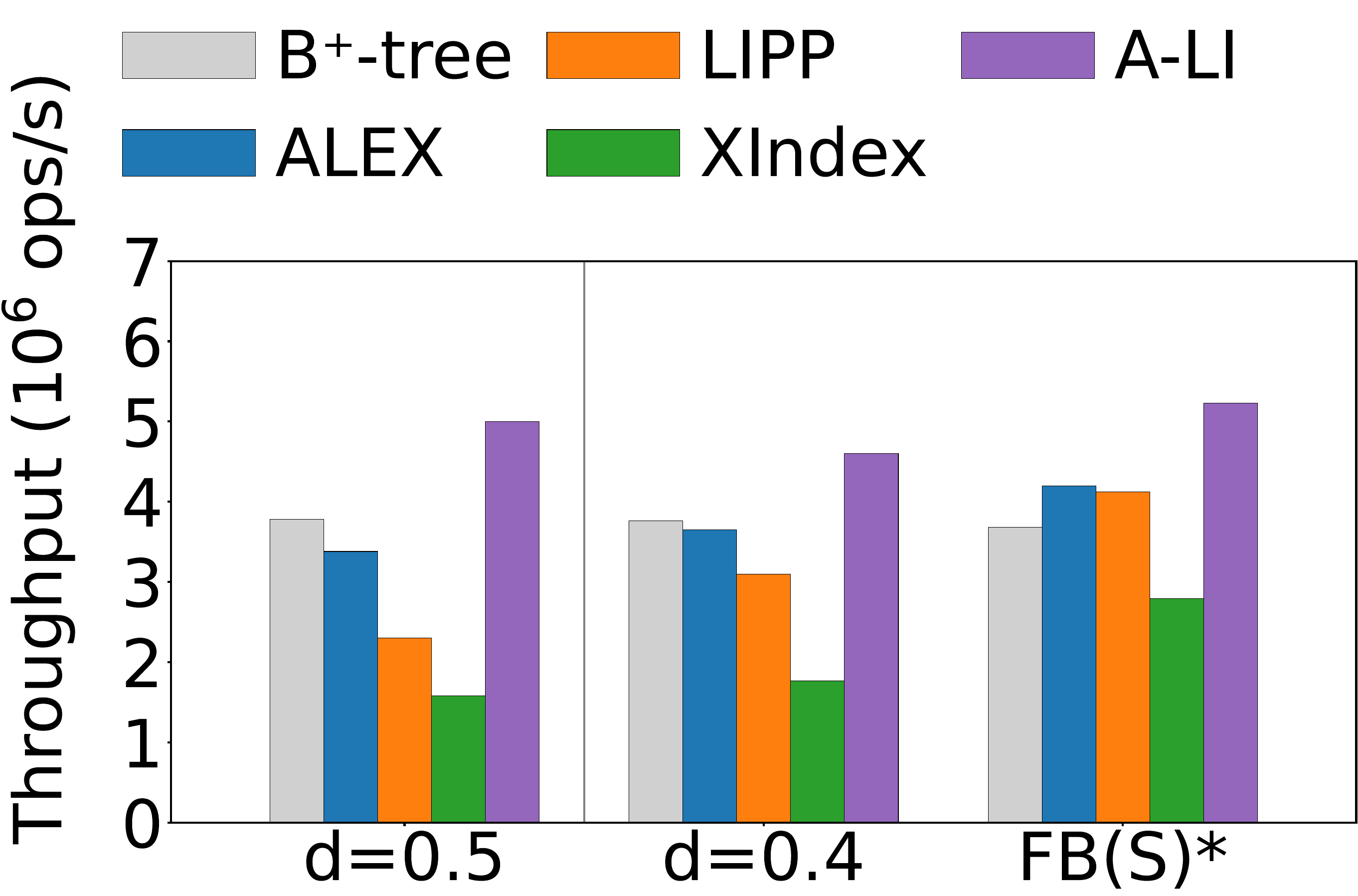}
\label{Fig.index.ali}
}
\vspace{-4mm}
\captionsetup{justification=raggedright}
\caption{Throughput of Updatable Learned Indexes under Synthetic Data Drift}
\label{fig.index_eval_imdb}
\vspace{-6mm}
\end{figure}

\begin{table}[t]
\renewcommand\arraystretch{1.1}
\vldbrevision{
\caption{\vldbrevision{Detailed Metrics of Updatable Learned Indexes}}
\label{tab:lidx_eval:acc}
\vspace{-2mm}
\resizebox{0.86\linewidth}{!}{%
\begin{tabular}{llllll}
\toprule
Competitors & Detailed Metrics    & ORIG & $d$=0.1 & $d$=0.3 & $d$=0.5 \\
\hline
B$^+$-tree        & \multirow{3}{*}{Avg Search Bound} & 3.467 & 3.507 & 3.630 & 3.652 \\
ALEX   &                      & 3.428 & 4.090 & 4.126 & 4.222 \\
XIndex      &                      & 4.556 & 5.179 & 11.620 & 12.842 \\
\hline
LIPP       & Avg Search Depth & 2.279 & 2.837 & 2.976 & 3.229 \\
\bottomrule
\end{tabular}
}
}
\vspace{-4mm}
\end{table}



\subsubsection{Experiments under synthesized data drift with various drift factors}
We plot the overall throughput in Figure~\ref{fig.index_eval_imdb}.
As data drift increases, all learned indexes exhibit degraded performance, while the B$^+$-tree maintains relatively stable throughput. 
This is due to the fact that learned indexes rely on trained models to guide key placements, but the model accuracy deteriorates as the data distribution changes, leading to higher search costs. 
ALEX and LIPP insert data in-place, which leads to higher key lookup overhead as the data distribution evolves. 
XIndex, though designed to handle updates via a separate buffer, suffers from increased buffer size under drift, leading to additional lookup overhead. 
In contrast, B$^+$-tree does not rely on data distribution modeling; its structural properties remain effective regardless of drift, thus preserving stable search bounds and throughput.
Among the learned indexes, LIPP experiences the most significant performance drop due to its chaining-based conflict resolution mechanism, which triggers frequent node splits. 
These splits deepen the tree structure, further increasing search cost.
To validate these observations, Table~\ref{tab:lidx_eval:acc} reports the search bounds on leaf nodes. 
A higher search bound and tree height indicate reduced model accuracy. 
We observe that the search bounds of ALEX, LIPP, and XIndex increase consistently with the degree of drift, aligning with their respective throughput degradation trends.

\noindent\expmessage{For learned indexes, heterogeneous structures generally offer greater stability under drift compared to homogeneous ones, but may fail to achieve optimal performance in non-drift or well-sampled scenarios.}

\subsubsection{Experiments for learned index with dynamic structures}
\label{subsec:lidx_extended}
To investigate how to enhance the robustness of learned indexes, we build A-LI, a learned index where each node independently adopts different design choices, including node size and split strategy. 
We employ a model to predict the optimal configuration parameters for a given drift level.
We execute point lookups on data with drift factors ($d$=0.1,0.3,0.5) using various parameter settings and collect the resulting throughout as supervision signals to train the model.
In our experiments, we collect 1000 samples.
Apart from $d$=0.5, we evaluate the performance of A-LI under $d$=0.4 and FB(S)*, a variant of the Facebook dataset generated using the STACK drifter.
As shown in Figure~\ref{Fig.index.ali}, A-LI achieves up to 32.18\% higher throughput than the next-best learned index under $d$=0.5, and up to 24.45\% improvement under unseen drifts.
A-LI adapts its structure by applying different split strategies and node sizes across nodes according to local drift characteristics. In particular, nodes experiencing stronger drift are assigned larger node sizes, which helps reduce search error and maintain high performance.
Under $d$=0.5, A-LI achieves an average search error of 3.037, lower than all evaluated baselines.
These results demonstrate that a learned index with an adaptive structure can remain efficient under drift. 
Recent work, such as SELIX~\cite{selix}, further explores this direction.

\noindent
\expmessage{Designing adaptive learned indexes that employ different structures for different nodes or subtrees can be a promising direction to improve robustness.}

\extended{
\begin{figure}
    \centering
    \includegraphics[width=0.4\textwidth]{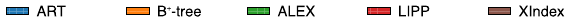} \\
    \subfigure[Throughput]{
    \label{Fig.index.concurrent.imdb.throughput}
    \includegraphics[width=0.22\textwidth]{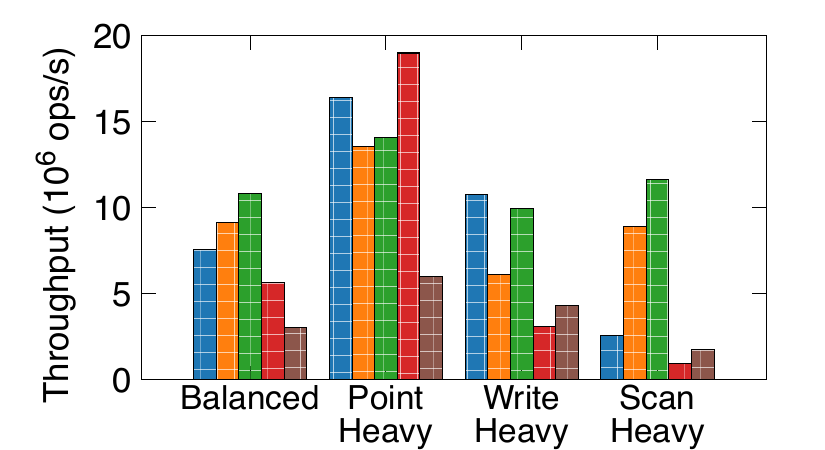}
    }
    \subfigure[Latency]{
    \label{Fig.index.concurrent.imdb.latency}
    \includegraphics[width=0.22\textwidth]{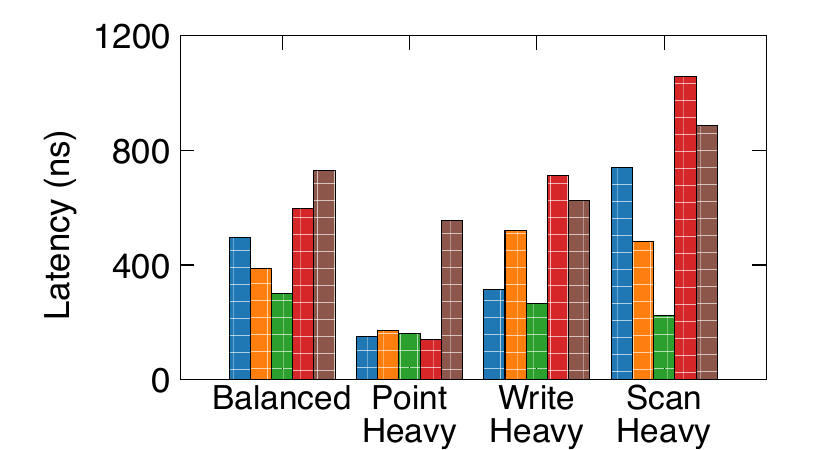}
    }
    \vspace{-4mm}
    \caption{Performance of Learned Indexes with Concurrent Read/Write Access}
    \label{Fig.index.concurrent.imdb}
    \vspace{-2mm}
\end{figure}
}

\extended{\subsubsection{Experiments on concurrent read/write access under mixed workloads.}
We evaluate the performance of learned indexes under concurrent read and write workloads.
Based on default operations specified in Section~\ref{subsec:workloaddataset}, we construct 4 mixed workloads: (1) Balanced workload, which
contains 33\% inserts to put keys into indexes, 33\% point reads to
randomly lookup a single key, and 33\% range scans to randomly read keys with the default scan length.
(2) Point-heavy workload, which consists of 90\% point reads, 5\% writes, and 5\% range scans.
(3) Write-heavy workload, which
contains 90\% writes, 5\% point reads, and 5\% range scans.
(4) Scan-heavy workload, which consists of 90\% range scans, 5\% writes, and 5\% point reads.
Each experiment executes a total of 4M operations using 4 concurrent threads.
We adopt the implementation of ART and B$^+$-tree with optimistic lock coupling (OLC)~\cite{DBLP:conf/damon/LeisSK016} as provided in~\cite{DBLP:conf/sigmod/WangPLLZKA18}, and use the concurrent implementations of ALEX and LIPP as provided in~\cite{DBLP:journals/pvldb/WongkhamLLZLW22}, to conduct our experiments.

We report the throughput and latency across different mixed workloads in Figure~\ref{Fig.index.concurrent.imdb}.
We can observe that, under the balanced workload, ALEX outperforms the other indexes due to its relatively stable performance across writes, point reads, and scans. 
Regarding the point-heavy workload, LIPP archives the best performance because of its precise point-lookup technique.
Learned indexes cannot outperform ART under the write-heavy workload, primarily due to their inherent overhead for model updates and maintenance.
In addition, ALEX demonstrates superior performance under the scan-heavy workload for similar reasons as discussed in O1.
}



\subsection{Experiments under Workload Drift}
\label{subsec:eval:extended}
\label{subsec:eval:workload_drift}

\subsubsection{Experiments on learned query optimizer}
\label{subsec:eval:ablation_workload_drift}

We apply different drift factors ($d \in {0.1, 0.3, 0.5}$) to the join patterns.
We use the generated drifted workloads to train learned query optimizers, and evaluate them on the original JOB workload, ensuring the performance can be compared between different drifts.
As observed in Figure~\ref{subfig.ablation_lqo_workload}, Balsa exhibits the most pronounced performance degradation, due to its reliance on a black-box plan enumeration process.
When training join patterns no longer align with the test workload, Balsa's model produces inaccurate plan predictions, resulting in suboptimal performance.
This is consistent with our finding in Section~\ref{subsec:eval:lqo_drift_factor}.


\extended{
We then fix the drift factor $d$ to 0.5 and conduct experiments on workloads with join pattern drift and predicate drift to investigate the performance of learned query optimizers under different types of workload drift.
Each learned optimizer is trained on the original data with the drifted workload and then evaluated using the same set of testing queries.
The results in Figure~\ref{fig.learned_eval_job5} show that
HybridQO is more sensitive to join pattern drift than Bao.
This is expected due to the differences in how hints are utilized in HybridQO and Bao.
HybridQO relies on join-order-based hints, making it more susceptible to changes in join patterns, whereas Bao uses more general hints that incorporate both join patterns and predicates, allowing it to adapt better to such drift.
}

\subsubsection{Evaluation on learned CC}

We test Polyjuice~\cite{DBLP:conf/osdi/WangDWCW0021}, with two traditional algorithms, 2PL and OCC~\cite{DBLP:conf/sigmod/0005YBPD22}, and a hybrid CC algorithm, IC3~\cite{DBLP:conf/sigmod/WangMCYCL16} under varying workload drift.
We use an initial arrival rate simulated with a single client thread, and increase the drift factors to simulate transaction arrival rates.
Polyjuice is first trained on the initial arrival rate and then retrained after each arrival rate drift, following the original retraining procedure.
We execute the default TPC-C~\cite{tpcc} transactions and plot the throughput in Figure~\ref{Fig.cc.chbenchmark.1}.
As drift increases, Polyjuice achieves higher throughput due to the increased arrival rate.
However, it requires more time to converge to an efficient policy under higher drift, as greater drift increases learning complexity due to limited prior knowledge
In addition, Polyjuice’s retraining is time-consuming, hindering its responsiveness to continuous drift in arrival rates.
These results demonstrate that learned CC still has room for improvement in fast adaptation, as discussed in recent work~\cite{neurcc}.

\extended{
To validate this point, we perform additional experiments to assess CCaaLF~\cite{DBLP:journals/corr/abs-2503-10036}, a recently proposed learned CC algorithm that enables fast adaptation under dynamic workloads.
We rerun our experiments on both the default workloads and CH-Benchmark workloads under arrival rate drift ($d=0.5$), and plot the results in Figure~\ref{fig.ccaalf}.
As observed, CaaLF achieves higher throughput and faster convergence than Polyjuice.
This improved adaptability is mainly due to the fact that NeurCC~\cite{neurcc} explicitly models transaction dependencies in action selection, enabling it to dynamically adjust its policy based on the current system state.
}

\noindent\expmessage{Fast adaptation is crucial for learned CC to be practical in real-world scenarios, since transactions can be completed in milliseconds or less.}

\extended{We conduct additional experiments using the CH-Benchmark workload under transaction arrival rate drift. 
Specifically, we run our evaluations on the transactional workloads from CH-Benchmark with varying the drift factor across 0.1, 0.3, and 0.5.
The transaction throughput results are presented in Figure~\ref{fig.cc_eval_tpcc}. 
As shown in the figure and consistent with observations in Figure~\ref{fig.cc_eval}, Polyjuice achieves higher throughput but requires longer convergence time to reach an optimal policy as the drift factor increases.}

\begin{figure}[t]
\centering
\subfigure[{Learned Query Optimizer}]{
\includegraphics[height=0.15\textwidth]{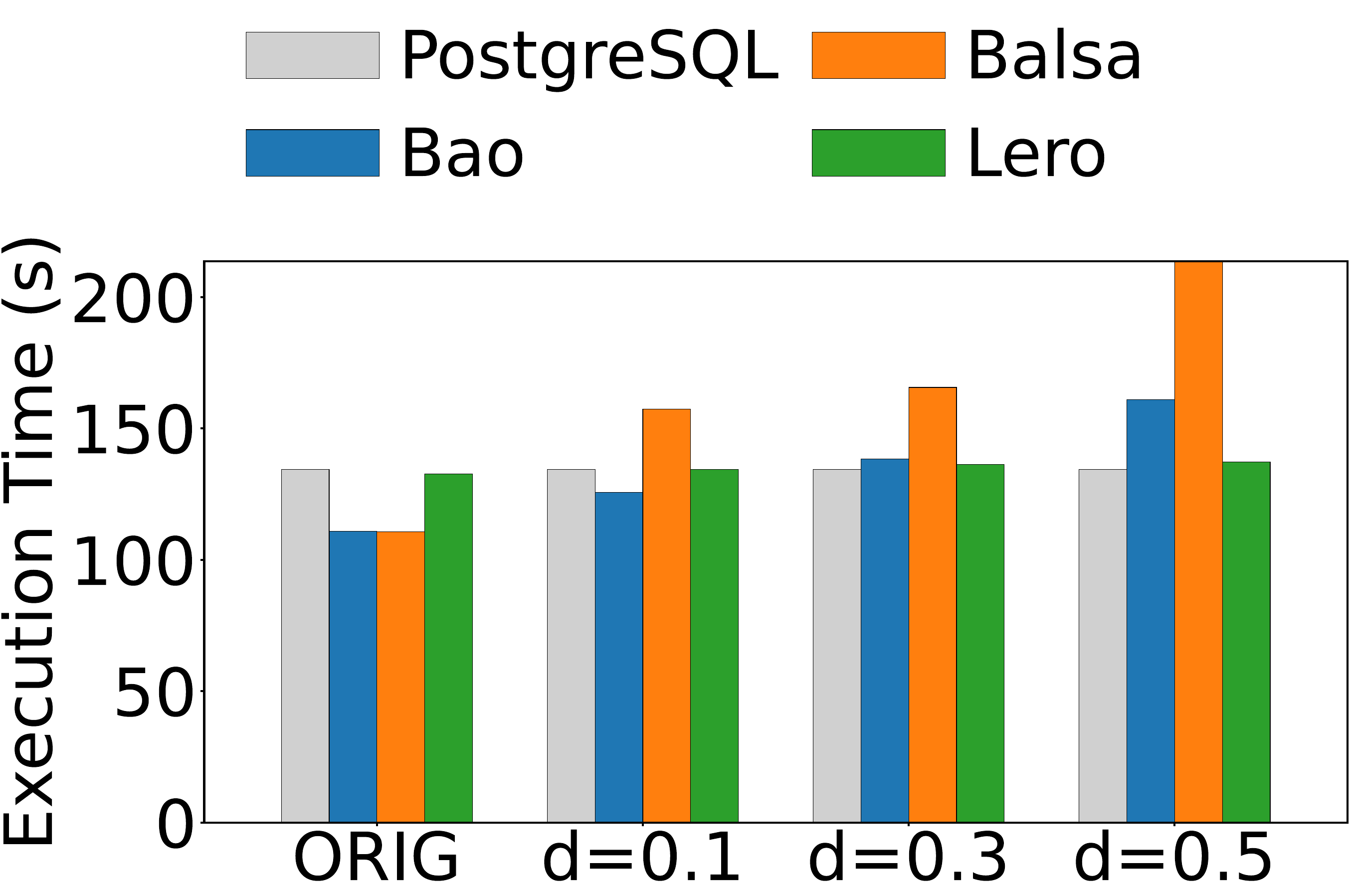}
\label{subfig.ablation_lqo_workload}
}\hspace{3mm}
\subfigure[Learned CC]{
\label{Fig.cc.chbenchmark.1}
\includegraphics[height=0.15\textwidth]{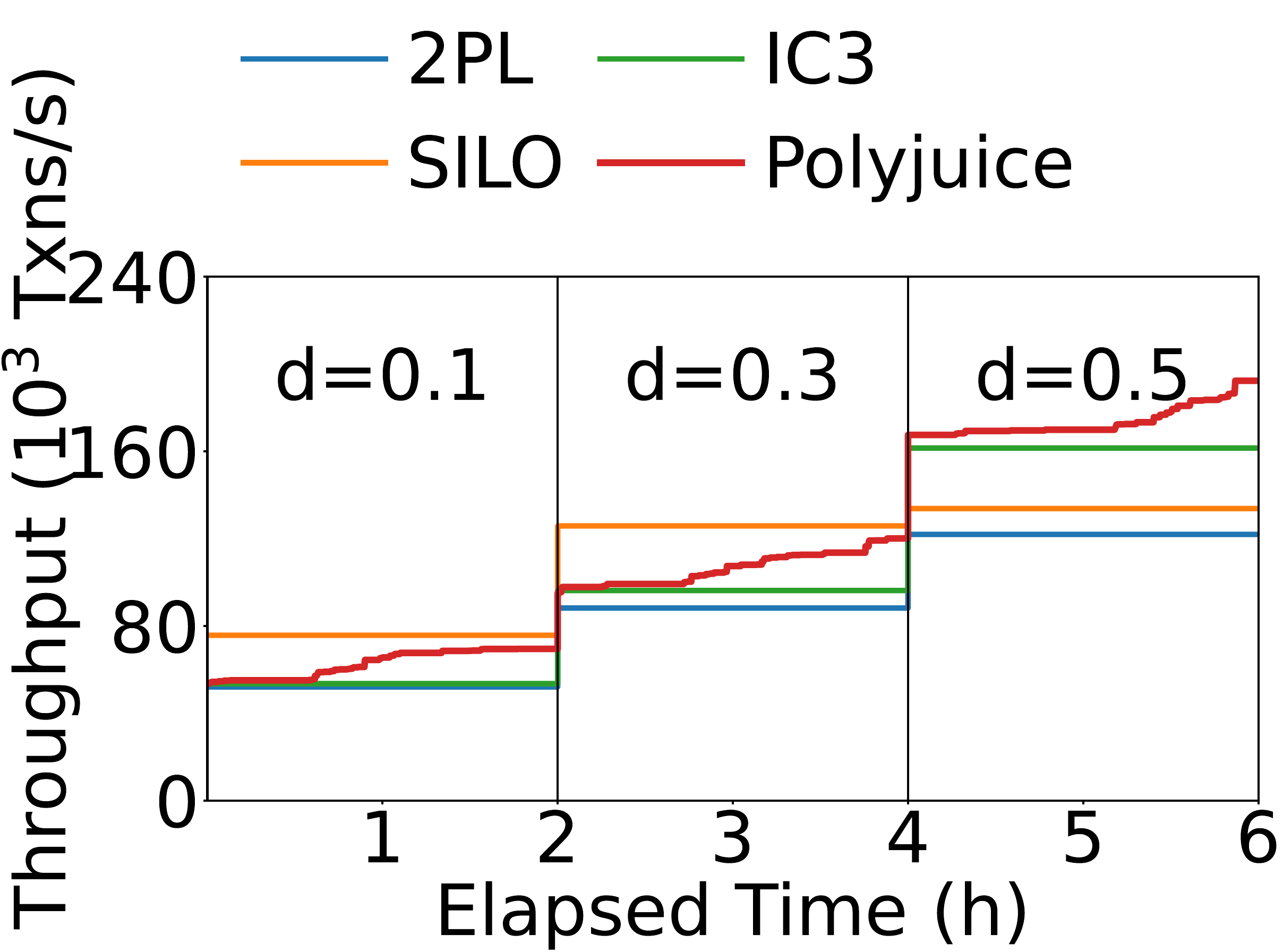}
}
\extended{
\subfigure[Execution Time - Join Pattern Drift and Predicate Drift]{
\label{fig.learned_eval_job5}
\includegraphics[width=0.42\textwidth]{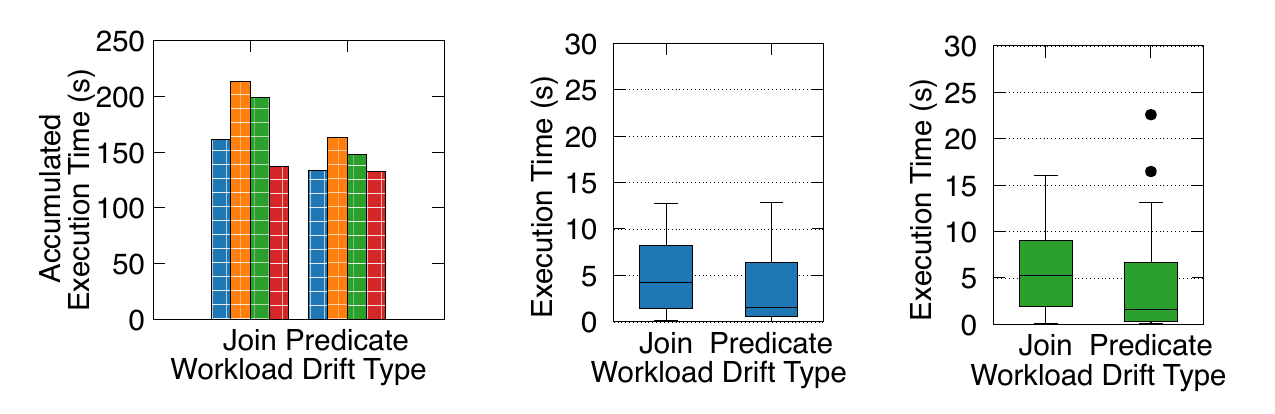}
}}
\vspace{-4mm}
\captionsetup{justification=raggedright}
\caption{Evaluations under Workload Drift}
\label{fig.lqo_eval_on_real_data_drift}
\vspace{-4mm}
\end{figure}


\extended{
\begin{figure}[t]
\centering
\subfigure[Default Workload]{
\label{Fig.cc.default.throughput}
\includegraphics[width=0.22\textwidth]{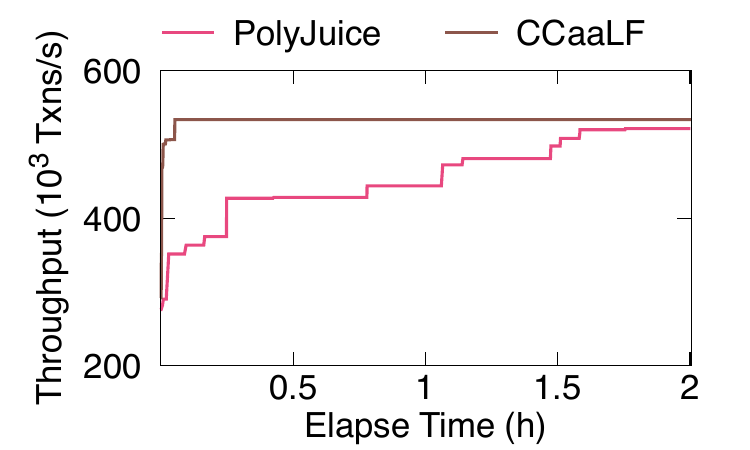}
}
\subfigure[CH-Benchmark Workload]{
\label{Fig.cc.chbenchmark.throughput}
\includegraphics[width=0.22\textwidth]{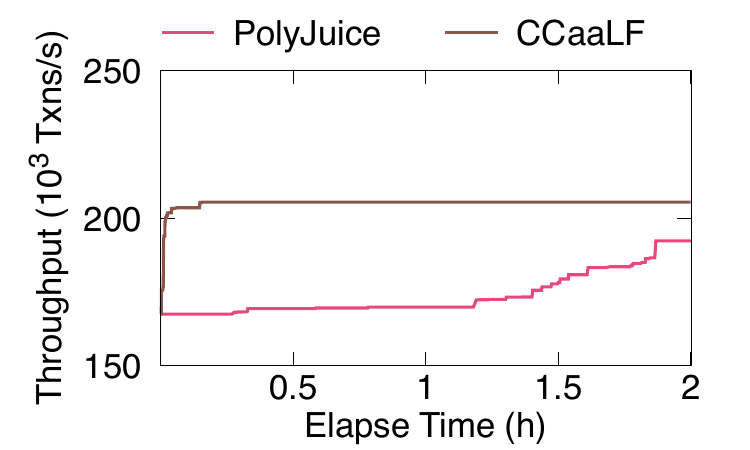}
}
\vspace{-4mm}
\captionsetup{justification=raggedright}
\caption{CCaaLF vs PolyJuice (Arrival Rate Drift with Drift Factor $d$=0.5)}
\label{fig.ccaalf}
\vspace{-2mm}
\end{figure}
}

\section{Related Work} \label{sec:related}


\paragraphtitle{Data Generators}
Existing data generators for tabular data typically focus on modeling the distribution of a static table~\cite{DBLP:journals/sigmod/RGPW24,DBLP:conf/icml/KotelnikovBRB23,DBLP:conf/iclr/Zhang0SSQFRK24}.
To improve flexibility, some recent approaches~\cite{DBLP:conf/nips/XuSCV19,DBLP:journals/pacmmod/LiuFTLD24} propose conditional generation models that support data synthesis under given schema constraints or workload patterns.
However, these methods do not explicitly incorporate distribution drift into the generation process.
As a result, their conditional mechanisms are not designed to support effective generation of evolving data distributions.
In contrast, the drift generator in \dbname explicitly models distributional changes, enabling the synthesis of data with controlled drift.
This further enables fine-grained evaluation of learned database components under evolving drift conditions.

\paragraphtitle{Benchmarks for Learned Database Components}
Multiple benchmarks have been devised for 
various learned database components~\cite{DBLP:journals/pvldb/HanWWZYTZCQPQZL21,DBLP:journals/corr/abs-2408-16170,DBLP:journals/pacmmod/LanBCB23,DBLP:journals/pvldb/SunZL23,DBLP:journals/pacmmod/MoCWCB23,DBLP:journals/pvldb/0006CJLXWLWZZWR22,DBLP:journals/pvldb/Ma0L0M22}.
Among them, several recent studies have attempted to benchmark learned database components under data and workload drift~\cite{DBLP:journals/pvldb/NegiWKTMMKA23,DBLP:journals/pvldb/LehmannSS24,DBLP:conf/sigmod/KimJSHCC22,DBLP:journals/pvldb/SunZL23,DBLP:journals/pvldb/WongkhamLLZLW22}.
These efforts typically treat drift in a case-by-case manner, and therefore only partially reveal how different types and degrees of drift affect system performance.
\dbname, in contrast, provides controllable and effective drift-aware data and workload generation, enabling a more comprehensive performance evaluation under diverse drift scenarios. 
As more learned database components emerge, such as cardinality estimators based on more generalizable models~\cite{DBLP:journals/pvldb/LiWZDZ023,DBLP:journals/corr/abs-2505-04404,DBLP:conf/cidr/KipfKRLBK19}, and as end-to-end learned DBMSs begin to integrate these components~\cite{neurdb-scis-24}, \dbname serves as a valid and general benchmark suite to support the continued improvement of their robustness.

\paragraphtitle{Benchmarks with Distribution Drift}
Several benchmarks have been proposed to evaluate performance under distribution drift.
DSB~\cite{DBLP:journals/pvldb/DingCGN21}, a standard benchmark for decision-making applications, enables evaluations under distribution drift using TPC-DS.
However, it is restricted to exponential distributions and does not quantitatively measure drift as \dbname.
In addition, concept drift modeling~\cite{DBLP:conf/kes/PorwikD22,DBLP:conf/aaai/LiYLX022,DBLP:conf/eurosys/BotherSGK23,DBLP:conf/icml/KohSMXZBHYPGLDS21,DBLP:conf/nips/GardnerPS23,DBLP:journals/pacmmod/LuoXTJC25,DBLP:journals/pacmmod/BotherRGHMTK25} has gained traction for evaluating ML-based prediction and analysis tasks.
Although they are not directly applicable to database benchmarking,  these methods also treat drift as distribution drift.
Emerging dynamic benchmarks~\cite{DBLP:conf/dbtest-ws/BensonBBLLPRSST24,DBLP:conf/nips/GardnerPS23} show promising potential.
As a dynamic benchmark,
 \dbname 
introduces measurable and controllable data and workload drift based on real drifts to enable systematic 
performance evaluations.

\section{Conclusion} \label{sec:conclusion}

This paper introduced \dbname, a new benchmark suite designed to evaluate end-to-end learned DBMSs containing all learned components under controllable data and workload drift.
We defined the drift factor to effectively quantify drift and introduced a novel drift-aware data and workload generation framework.
Extensive experimental results show the effectiveness of \dbname in generating realistic drift, and 
demonstrate that different learned database components withstand and adapt to data and workload drift
to varying degrees, depending on their specific design choices.
Through in-depth analysis, we offer recommendations for enhancing the robustness of learned database components.




\bibliographystyle{ACM-Reference-Format}
\balance
\bibliography{main-bibliography}

@inproceedings{DBLP:conf/icml/Sohl-DicksteinW15,
  author       = {Jascha Sohl{-}Dickstein and
                  Eric A. Weiss and
                  Niru Maheswaranathan and
                  Surya Ganguli},
  editor       = {Francis R. Bach and
                  David M. Blei},
  title        = {Deep Unsupervised Learning using Nonequilibrium Thermodynamics},
  booktitle    = {Proceedings of the 32nd International Conference on Machine Learning,
                  {ICML} 2015, Lille, France, 6-11 July 2015},
  series       = {{JMLR} Workshop and Conference Proceedings},
  volume       = {37},
  pages        = {2256--2265},
  publisher    = {JMLR.org},
  year         = {2015},
}

@article{DBLP:journals/pvldb/DingCGN21,
  author       = {Bailu Ding and
                  Surajit Chaudhuri and
                  Johannes Gehrke and
                  Vivek R. Narasayya},
  title        = {{DSB:} {A} Decision Support Benchmark for Workload-Driven and Traditional
                  Database Systems},
  journal      = {Proc. {VLDB} Endow.},
  volume       = {14},
  number       = {13},
  pages        = {3376--3388},
  year         = {2021}
}

@article{DBLP:journals/pacmmod/WuI24,
  author       = {Peizhi Wu and
                  Zachary G. Ives},
  title        = {Modeling Shifting Workloads for Learned Database Systems},
  journal      = {Proc. {ACM} Manag. Data},
  volume       = {2},
  number       = {1},
  pages        = {38:1--38:27},
  year         = {2024}
}

@article{neurdb-scis-24,
  author = {Beng Chin Ooi and
            Shaofeng Cai and
            Gang Chen and
            Yanyan Shen and
            Kian-Lee Tan and
            Yuncheng Wu and
            Xiaokui Xiao and
            Naili Xing and
            Cong Yue and
            Lingze Zeng and
            Meihui Zhang and
            Zhanhao Zhao},
  title  =  {NeurDB: An AI-powered Autonomous Data System},
  journal=  {SCIENCE CHINA Information Sciences},
  year   =  {2024},
  volume = {67},
  number = {10},
  url    =  {https://www.sciengine.com/SCIS/doi/10.1007/s11432-024-4125-9},
  doi    =  {https://doi.org/10.1007/s11432-024-4125-9}
}

@article{DBLP:journals/pvldb/SunZL23,
  author       = {Zhaoyan Sun and
                  Xuanhe Zhou and
                  Guoliang Li},
  title        = {Learned Index: {A} Comprehensive Experimental Evaluation},
  journal      = {Proc. {VLDB} Endow.},
  volume       = {16},
  number       = {8},
  pages        = {1992--2004},
  year         = {2023}
}

@inproceedings{DBLP:conf/osdi/WangDWCW0021,
  author       = {Jia{-}Chen Wang and
                  Ding Ding and
                  Huan Wang and
                  Conrad Christensen and
                  Zhaoguo Wang and
                  Haibo Chen and
                  Jinyang Li},
  title        = {Polyjuice: High-Performance Transactions via Learned Concurrency Control},
  booktitle    = {{OSDI}},
  pages        = {198--216},
  publisher    = {{USENIX} Association},
  year         = {2021}
}

@article{DBLP:journals/pvldb/ZhuCDCPWZ23,
  author       = {Rong Zhu and
                  Wei Chen and
                  Bolin Ding and
                  Xingguang Chen and
                  Andreas Pfadler and
                  Ziniu Wu and
                  Jingren Zhou},
  title        = {Lero: {A} Learning-to-Rank Query Optimizer},
  journal      = {Proc. {VLDB} Endow.},
  volume       = {16},
  number       = {6},
  pages        = {1466--1479},
  year         = {2023}
}

@inproceedings{DBLP:conf/sigmod/MarcusNMTAK21,
  author       = {Ryan Marcus and
                  Parimarjan Negi and
                  Hongzi Mao and
                  Nesime Tatbul and
                  Mohammad Alizadeh and
                  Tim Kraska},
  title        = {Bao: Making Learned Query Optimization Practical},
  booktitle    = {{SIGMOD} Conference},
  pages        = {1275--1288},
  publisher    = {{ACM}},
  year         = {2021}
}

@inproceedings{DBLP:conf/sigmod/DingMYWDLZCGKLK20,
  author       = {Jialin Ding and
                  Umar Farooq Minhas and
                  Jia Yu and
                  Chi Wang and
                  Jaeyoung Do and
                  Yinan Li and
                  Hantian Zhang and
                  Badrish Chandramouli and
                  Johannes Gehrke and
                  Donald Kossmann and
                  David B. Lomet and
                  Tim Kraska},
  title        = {{ALEX:} An Updatable Adaptive Learned Index},
  booktitle    = {{SIGMOD} Conference},
  pages        = {969--984},
  publisher    = {{ACM}},
  year         = {2020}
}

@inproceedings{DBLP:conf/sigmod/KraskaBCDP18,
  author       = {Tim Kraska and
                  Alex Beutel and
                  Ed H. Chi and
                  Jeffrey Dean and
                  Neoklis Polyzotis},
  title        = {The Case for Learned Index Structures},
  booktitle    = {{SIGMOD} Conference},
  pages        = {489--504},
  publisher    = {{ACM}},
  year         = {2018}
}

@article{DBLP:journals/pvldb/HanWWZYTZCQPQZL21,
  author       = {Yuxing Han and
                  Ziniu Wu and
                  Peizhi Wu and
                  Rong Zhu and
                  Jingyi Yang and
                  Liang Wei Tan and
                  Kai Zeng and
                  Gao Cong and
                  Yanzhao Qin and
                  Andreas Pfadler and
                  Zhengping Qian and
                  Jingren Zhou and
                  Jiangneng Li and
                  Bin Cui},
  title        = {Cardinality Estimation in {DBMS:} {A} Comprehensive Benchmark Evaluation},
  journal      = {Proc. {VLDB} Endow.},
  volume       = {15},
  number       = {4},
  pages        = {752--765},
  year         = {2021}
}

@book{DBLP:books/daglib/0001548,
  author       = {Christopher D. Manning and
                  Hinrich Sch{\"{u}}tze},
  title        = {Foundations of statistical natural language processing},
  publisher    = {{MIT} Press},
  year         = {2001}
}

@article{DBLP:journals/pvldb/WongkhamLLZLW22,
  author       = {Chaichon Wongkham and
                  Baotong Lu and
                  Chris Liu and
                  Zhicong Zhong and
                  Eric Lo and
                  Tianzheng Wang},
  title        = {Are Updatable Learned Indexes Ready?},
  journal      = {Proc. {VLDB} Endow.},
  volume       = {15},
  number       = {11},
  pages        = {3004--3017},
  year         = {2022}
}

@article{DBLP:journals/pacmmod/KurmanjiT23,
  author       = {Meghdad Kurmanji and
                  Peter Triantafillou},
  title        = {Detect, Distill and Update: Learned {DB} Systems Facing Out of Distribution
                  Data},
  journal      = {Proc. {ACM} Manag. Data},
  volume       = {1},
  number       = {1},
  pages        = {33:1--33:27},
  year         = {2023}
}

@article{DBLP:journals/pvldb/LiWZDZ023,
  author       = {Pengfei Li and
                  Wenqing Wei and
                  Rong Zhu and
                  Bolin Ding and
                  Jingren Zhou and
                  Hua Lu},
  title        = {{ALECE:} An Attention-based Learned Cardinality Estimator for {SPJ}
                  Queries on Dynamic Workloads},
  journal      = {Proc. {VLDB} Endow.},
  volume       = {17},
  number       = {2},
  pages        = {197--210},
  year         = {2023}
}

@article{DBLP:journals/pacmmod/KurmanjiTT24,
  author       = {Meghdad Kurmanji and
                  Eleni Triantafillou and
                  Peter Triantafillou},
  title        = {Machine Unlearning in Learned Databases: An Experimental Analysis},
  journal      = {Proc. {ACM} Manag. Data},
  volume       = {2},
  number       = {1},
  pages        = {49:1--49:26},
  year         = {2024}
}

@inproceedings{DBLP:conf/sigmod/LinTLCW21,
  author       = {Yu{-}Shan Lin and
                  Ching Tsai and
                  Tz{-}Yu Lin and
                  Yun{-}Sheng Chang and
                  Shan{-}Hung Wu},
  title        = {Don't Look Back, Look into the Future: Prescient Data Partitioning
                  and Migration for Deterministic Database Systems},
  booktitle    = {{SIGMOD} Conference},
  pages        = {1156--1168},
  publisher    = {{ACM}},
  year         = {2021}
}

@article{DBLP:journals/pvldb/YuC0L22,
  author       = {Xiang Yu and
                  Chengliang Chai and
                  Guoliang Li and
                  Jiabin Liu},
  title        = {Cost-based or Learning-based? {A} Hybrid Query Optimizer for Query
                  Plan Selection},
  journal      = {Proc. {VLDB} Endow.},
  volume       = {15},
  number       = {13},
  pages        = {3924--3936},
  year         = {2022}
}

@article{DBLP:journals/pvldb/NegiWKTMMKA23,
  author       = {Parimarjan Negi and
                  Ziniu Wu and
                  Andreas Kipf and
                  Nesime Tatbul and
                  Ryan Marcus and
                  Sam Madden and
                  Tim Kraska and
                  Mohammad Alizadeh},
  title        = {Robust Query Driven Cardinality Estimation under Changing Workloads},
  journal      = {Proc. {VLDB} Endow.},
  volume       = {16},
  number       = {6},
  pages        = {1520--1533},
  year         = {2023}
}

@article{DBLP:journals/pvldb/LehmannSS24,
  author       = {Claude Lehmann and
                  Pavel Sulimov and
                  Kurt Stockinger},
  title        = {Is Your Learned Query Optimizer Behaving As You Expect? {A} Machine
                  Learning Perspective},
  journal      = {Proc. {VLDB} Endow.},
  volume       = {17},
  number       = {7},
  pages        = {1565--1577},
  year         = {2024}
}

@article{DBLP:journals/pvldb/MarcusNMZAKPT19,
  author       = {Ryan Marcus and
                  Parimarjan Negi and
                  Hongzi Mao and
                  Chi Zhang and
                  Mohammad Alizadeh and
                  Tim Kraska and
                  Olga Papaemmanouil and
                  Nesime Tatbul},
  title        = {Neo: {A} Learned Query Optimizer},
  journal      = {Proc. {VLDB} Endow.},
  volume       = {12},
  number       = {11},
  pages        = {1705--1718},
  year         = {2019}
}

@inproceedings{DBLP:conf/sigmod/YangC0MLS22,
  author       = {Zongheng Yang and
                  Wei{-}Lin Chiang and
                  Sifei Luan and
                  Gautam Mittal and
                  Michael Luo and
                  Ion Stoica},
  title        = {Balsa: Learning a Query Optimizer Without Expert Demonstrations},
  booktitle    = {{SIGMOD} Conference},
  pages        = {931--944},
  publisher    = {{ACM}},
  year         = {2022}
}

@article{DBLP:journals/pvldb/AnneserTCXPLM23,
  author       = {Christoph Anneser and
                  Nesime Tatbul and
                  David E. Cohen and
                  Zhenggang Xu and
                  Prithviraj Pandian and
                  Nikolay Laptev and
                  Ryan Marcus},
  title        = {AutoSteer: Learned Query Optimization for Any {SQL} Database},
  journal      = {Proc. {VLDB} Endow.},
  volume       = {16},
  number       = {12},
  pages        = {3515--3527},
  year         = {2023}
}

@inproceedings{DBLP:conf/acl/TaiSM15,
  author       = {Kai Sheng Tai and
                  Richard Socher and
                  Christopher D. Manning},
  title        = {Improved Semantic Representations From Tree-Structured Long Short-Term
                  Memory Networks},
  booktitle    = {{ACL} {(1)}},
  pages        = {1556--1566},
  publisher    = {The Association for Computer Linguistics},
  year         = {2015}
}

@inproceedings{DBLP:conf/aaai/MouLZWJ16,
  author       = {Lili Mou and
                  Ge Li and
                  Lu Zhang and
                  Tao Wang and
                  Zhi Jin},
  title        = {Convolutional Neural Networks over Tree Structures for Programming
                  Language Processing},
  booktitle    = {{AAAI}},
  pages        = {1287--1293},
  publisher    = {{AAAI} Press},
  year         = {2016}
}

@article{DBLP:journals/pvldb/WuZCCWX21,
  author       = {Jiacheng Wu and
                  Yong Zhang and
                  Shimin Chen and
                  Yu Chen and
                  Jin Wang and
                  Chunxiao Xing},
  title        = {Updatable Learned Index with Precise Positions},
  journal      = {Proc. {VLDB} Endow.},
  volume       = {14},
  number       = {8},
  pages        = {1276--1288},
  year         = {2021}
}

@article{DBLP:journals/pvldb/FerraginaV20,
  author       = {Paolo Ferragina and
                  Giorgio Vinciguerra},
  title        = {The PGM-index: a fully-dynamic compressed learned index with provable
                  worst-case bounds},
  journal      = {Proc. {VLDB} Endow.},
  volume       = {13},
  number       = {8},
  pages        = {1162--1175},
  year         = {2020}
}

@inproceedings{DBLP:conf/ppopp/TangWDHWWC20,
  author       = {Chuzhe Tang and
                  Youyun Wang and
                  Zhiyuan Dong and
                  Gansen Hu and
                  Zhaoguo Wang and
                  Minjie Wang and
                  Haibo Chen},
  title        = {XIndex: a scalable learned index for multicore data storage},
  booktitle    = {PPoPP},
  pages        = {308--320},
  publisher    = {{ACM}},
  year         = {2020}
}

@online{neuralbenchimpl,
    author =    {NeurBench},
    title =  {NeurBench Implementation},
    year = 2026,
    url =    {https://github.com/neurdb/neurbench},
    lastaccessed = {}
}

@article{DBLP:journals/corr/abs-2408-16170,
  author       = {Yannis Chronis and
                  Yawen Wang and
                  Yu Gan and
                  Sami Abu{-}El{-}Haija and
                  Chelsea Lin and
                  Carsten Binnig and
                  Fatma {\"{O}}zcan},
  title        = {CardBench: {A} Benchmark for Learned Cardinality Estimation in Relational
                  Databases},
  journal      = {CoRR},
  volume       = {abs/2408.16170},
  year         = {2024}
}

@inproceedings{DBLP:conf/kes/PorwikD22,
  author       = {Piotr Porwik and
                  Benjamin Mensah Dadzie},
  title        = {Detection of data drift in a two-dimensional stream using the Kolmogorov-Smirnov
                  test},
  booktitle    = {{KES}},
  series       = {Procedia Computer Science},
  volume       = {207},
  pages        = {168--175},
  publisher    = {Elsevier},
  year         = {2022}
}

@inproceedings{DBLP:conf/aaai/LiYLX022,
  author       = {Wendi Li and
                  Xiao Yang and
                  Weiqing Liu and
                  Yingce Xia and
                  Jiang Bian},
  title        = {{DDG-DA:} Data Distribution Generation for Predictable Concept Drift
                  Adaptation},
  booktitle    = {{AAAI}},
  pages        = {4092--4100},
  publisher    = {{AAAI} Press},
  year         = {2022}
}

@inproceedings{DBLP:conf/dbtest-ws/BensonBBLLPRSST24,
  author       = {Lawrence Benson and
                  Carsten Binnig and
                  Jan{-}Micha Bodensohn and
                  Federico Lorenzi and
                  Jigao Luo and
                  Danica Porobic and
                  Tilmann Rabl and
                  Anupam Sanghi and
                  Russell Sears and
                  Pinar T{\"{o}}z{\"{u}}n and
                  Tobias Ziegler},
  title        = {Surprise Benchmarking: The Why, What, and How},
  booktitle    = {DBTest@SIGMOD},
  pages        = {1--8},
  publisher    = {{ACM}},
  year         = {2024}
}

@online{tpcc,
  author =    {The Transaction Processing Council},
  title =  {T{PC}-C},
  year = 2024,
  url =    {http://www.tpc.org/tpcc/},
  lastaccessed = {}
  }

@inproceedings{DBLP:conf/cloud/CooperSTRS10,
	author    = {Brian F. Cooper and
	Adam Silberstein and
	Erwin Tam and
	Raghu Ramakrishnan and
	Russell Sears},
	title     = {Benchmarking cloud serving systems with {YCSB}},
	booktitle = {SoCC},
	pages     = {143--154},
	publisher = {{ACM}},
	year      = {2010}
}

@article{DBLP:journals/pvldb/LeisGMBK015,
  author       = {Viktor Leis and
                  Andrey Gubichev and
                  Atanas Mirchev and
                  Peter A. Boncz and
                  Alfons Kemper and
                  Thomas Neumann},
  title        = {How Good Are Query Optimizers, Really?},
  journal      = {Proc. {VLDB} Endow.},
  volume       = {9},
  number       = {3},
  pages        = {204--215},
  year         = {2015}
}

@article{DBLP:journals/pacmmod/MoCWCB23,
  author       = {Songsong Mo and
                  Yile Chen and
                  Hao Wang and
                  Gao Cong and
                  Zhifeng Bao},
  title        = {Lemo: {A} Cache-Enhanced Learned Optimizer for Concurrent Queries},
  journal      = {Proc. {ACM} Manag. Data},
  volume       = {1},
  number       = {4},
  pages        = {247:1--247:26},
  year         = {2023}
}

@article{DBLP:journals/pacmmod/LanBCB23,
  author       = {Hai Lan and
                  Zhifeng Bao and
                  J. Shane Culpepper and
                  Renata Borovica{-}Gajic},
  title        = {Updatable Learned Indexes Meet Disk-Resident {DBMS} - From Evaluations
                  to Design Choices},
  journal      = {Proc. {ACM} Manag. Data},
  volume       = {1},
  number       = {2},
  pages        = {139:1--139:22},
  year         = {2023}
}

@inproceedings{DBLP:conf/sigmod/WangMCYCL16,
  author       = {Zhaoguo Wang and
                  Shuai Mu and
                  Yang Cui and
                  Han Yi and
                  Haibo Chen and
                  Jinyang Li},
  title        = {Scaling Multicore Databases via Constrained Parallel Execution},
  booktitle    = {{SIGMOD} Conference},
  pages        = {1643--1658},
  publisher    = {{ACM}},
  year         = {2016}
}

@inproceedings{DBLP:conf/sigmod/0005YBPD22,
  author    = {Yu Xia and
               Xiangyao Yu and
               Matthew Butrovich and
               Andrew Pavlo and
               Srinivas Devadas},
  title     = {Litmus: Towards a Practical Database Management System with Verifiable {ACID} Properties and Transaction Correctness},
  booktitle = {{SIGMOD} Conference},
  pages     = {1478--1492},
  publisher = {{ACM}},
  year      = {2022}
}

@article{DBLP:journals/pvldb/0006CJLXWLWZZWR22,
  author       = {Zhou Zhang and
                  Zhaole Chu and
                  Peiquan Jin and
                  Yongping Luo and
                  Xike Xie and
                  Shouhong Wan and
                  Yun Luo and
                  Xufei Wu and
                  Peng Zou and
                  Chunyang Zheng and
                  Guoan Wu and
                  Andy Rudoff},
  title        = {{PLIN:} {A} Persistent Learned Index for Non-Volatile Memory with
                  High Performance and Instant Recovery},
  journal      = {Proc. {VLDB} Endow.},
  volume       = {16},
  number       = {2},
  pages        = {243--255},
  year         = {2022}
}

@article{DBLP:journals/pvldb/Ma0L0M22,
  author       = {Chaohong Ma and
                  Xiaohui Yu and
                  Yifan Li and
                  Xiaofeng Meng and
                  Aishan Maoliniyazi},
  title        = {{FILM:} a Fully Learned Index for Larger-than-Memory Databases},
  journal      = {Proc. {VLDB} Endow.},
  volume       = {16},
  number       = {3},
  pages        = {561--573},
  year         = {2022}
}

@inproceedings{DBLP:conf/eurosys/BotherSGK23,
  author       = {Maximilian B{\"{o}}ther and
                  Foteini Strati and
                  Viktor Gsteiger and
                  Ana Klimovic},
  title        = {Towards {A} Platform and Benchmark Suite for Model Training on Dynamic
                  Datasets},
  booktitle    = {EuroMLSys@EuroSys},
  pages        = {8--17},
  publisher    = {{ACM}},
  year         = {2023}
}

@inproceedings{DBLP:conf/nips/HoJA20,
  author       = {Jonathan Ho and
                  Ajay Jain and
                  Pieter Abbeel},
  title        = {Denoising Diffusion Probabilistic Models},
  booktitle    = {NeurIPS},
  year         = {2020}
}

@article{DBLP:journals/pacmmod/LiuFTLD24,
  author       = {Tongyu Liu and
                  Ju Fan and
                  Nan Tang and
                  Guoliang Li and
                  Xiaoyong Du},
  title        = {Controllable Tabular Data Synthesis Using Diffusion Models},
  journal      = {Proc. {ACM} Manag. Data},
  volume       = {2},
  number       = {1},
  pages        = {28:1--28:29},
  year         = {2024}
}

@inproceedings{DBLP:conf/icml/KotelnikovBRB23,
  author       = {Akim Kotelnikov and
                  Dmitry Baranchuk and
                  Ivan Rubachev and
                  Artem Babenko},
  title        = {TabDDPM: Modelling Tabular Data with Diffusion Models},
  booktitle    = {{ICML}},
  volume       = {202},
  pages        = {17564--17579},
  year         = {2023}
}

@inproceedings{DBLP:conf/nips/DhariwalN21,
  author       = {Prafulla Dhariwal and
                  Alexander Quinn Nichol},
  title        = {Diffusion Models Beat GANs on Image Synthesis},
  booktitle    = {NeurIPS},
  pages        = {8780--8794},
  year         = {2021}
}

@inproceedings{DBLP:conf/icml/LiuCYMLM0P23,
  author       = {Haohe Liu and
                  Zehua Chen and
                  Yi Yuan and
                  Xinhao Mei and
                  Xubo Liu and
                  Danilo P. Mandic and
                  Wenwu Wang and
                  Mark D. Plumbley},
  title        = {AudioLDM: Text-to-Audio Generation with Latent Diffusion Models},
  booktitle    = {{ICML}},
  volume       = {202},
  pages        = {21450--21474},
  year         = {2023}
}

@inproceedings{DBLP:conf/icml/KohSMXZBHYPGLDS21,
  author       = {Pang Wei Koh and
                  Shiori Sagawa and
                  Henrik Marklund and
                  Sang Michael Xie and
                  Marvin Zhang and
                  Akshay Balsubramani and
                  Weihua Hu and
                  Michihiro Yasunaga and
                  Richard Lanas Phillips and
                  Irena Gao and
                  Tony Lee and
                  Etienne David and
                  Ian Stavness and
                  Wei Guo and
                  Berton Earnshaw and
                  Imran S. Haque and
                  Sara M. Beery and
                  Jure Leskovec and
                  Anshul Kundaje and
                  Emma Pierson and
                  Sergey Levine and
                  Chelsea Finn and
                  Percy Liang},
  title        = {{WILDS:} {A} Benchmark of in-the-Wild Distribution Shifts},
  booktitle    = {{ICML}},
  volume       = {139},
  pages        = {5637--5664},
  year         = {2021}
}

@inproceedings{DBLP:conf/nips/GardnerPS23,
  author       = {Josh Gardner and
                  Zoran Popovic and
                  Ludwig Schmidt},
  title        = {Benchmarking Distribution Shift in Tabular Data with TableShift},
  booktitle    = {NeurIPS},
  year         = {2023}
}

@online{imdb,
    author =    {IMDB},
    title =  {IMDB Dataset},
    year = 2025,
    url =    {https://www.imdb.com/},
    lastaccessed = {}
}

@online{pearson,
    author =    {Wikipedia},
    title =  {Pearson Correlation Coefficient},
    year = 2025,
    url =    {https://en.wikipedia.org/wiki/Pearson_correlation_coefficient},
    lastaccessed = {}
}

@inproceedings{DBLP:conf/infocom/GjokaKBM10,
  author       = {Minas Gjoka and
                  Maciej Kurant and
                  Carter T. Butts and
                  Athina Markopoulou},
  title        = {Walking in Facebook: {A} Case Study of Unbiased Sampling of OSNs},
  booktitle    = {{INFOCOM}},
  pages        = {2498--2506},
  publisher    = {{IEEE}},
  year         = {2010}
}

@online{imdbarchive,
    author =    {IMDB Archive},
    title =  {IMDB Archive},
    year = 2025,
    url =    {https://ftp.fu-berlin.de/pub/misc/movies/database/frozendata/diffs/},
    lastaccessed = {}
}

@online{stackoverflow,
    author =    {StackOverflow},
    title =  {StackOverflow},
    year = 2025,
    url =    {https://stackoverflow.com},
    lastaccessed = {}
}

@article{DBLP:journals/corr/abs-2503-10036,
  author       = {Hexiang Pan and
                  Shaofeng Cai and
                  Yeow Meng Chee and
                  Tien Tuan Anh Dinh and
                  Yuncheng Wu and
                  Beng Chin Ooi},
  title        = {CCaaLF: Concurrency Control as a Learnable Function},
  journal      = {CoRR},
  volume       = {abs/2503.10036},
  year         = {2025}
}

@inproceedings{DBLP:conf/sigmod/WangPLLZKA18,
  author       = {Ziqi Wang and
                  Andrew Pavlo and
                  Hyeontaek Lim and
                  Viktor Leis and
                  Huanchen Zhang and
                  Michael Kaminsky and
                  David G. Andersen},
  title        = {Building a Bw-Tree Takes More Than Just Buzz Words},
  booktitle    = {{SIGMOD} Conference},
  pages        = {473--488},
  publisher    = {{ACM}},
  year         = {2018}
}

@inproceedings{DBLP:conf/damon/LeisSK016,
  author       = {Viktor Leis and
                  Florian Scheibner and
                  Alfons Kemper and
                  Thomas Neumann},
  title        = {The {ART} of practical synchronization},
  booktitle    = {DaMoN},
  pages        = {3:1--3:8},
  publisher    = {{ACM}},
  year         = {2016}
}

@inproceedings{DBLP:conf/cidr/KipfKRLBK19,
  author       = {Andreas Kipf and
                  Thomas Kipf and
                  Bernhard Radke and
                  Viktor Leis and
                  Peter A. Boncz and
                  Alfons Kemper},
  title        = {Learned Cardinalities: Estimating Correlated Joins with Deep Learning},
  booktitle    = {{CIDR}},
  publisher    = {www.cidrdb.org},
  year         = {2019}
}

@article{DBLP:journals/pvldb/RenenHPVDNLSKK24,
  author       = {Alexander van Renen and
                  Dominik Horn and
                  Pascal Pfeil and
                  Kapil Vaidya and
                  Wenjian Dong and
                  Murali Narayanaswamy and
                  Zhengchun Liu and
                  Gaurav Saxena and
                  Andreas Kipf and
                  Tim Kraska},
  title        = {Why {TPC} Is Not Enough: An Analysis of the Amazon Redshift Fleet},
  journal      = {Proc. {VLDB} Endow.},
  volume       = {17},
  number       = {11},
  pages        = {3694--3706},
  year         = {2024}
}

@article{DBLP:journals/pvldb/RenenL23,
  author       = {Alexander van Renen and
                  Viktor Leis},
  title        = {Cloud Analytics Benchmark},
  journal      = {Proc. {VLDB} Endow.},
  volume       = {16},
  number       = {6},
  pages        = {1413--1425},
  year         = {2023}
}

@article{DBLP:journals/pacmmod/BotherRGHMTK25,
  author       = {Maximilian B{\"{o}}ther and
                  Ties Robroek and
                  Viktor Gsteiger and
                  Robin Holzinger and
                  Xianzhe Ma and
                  Pinar T{\"{o}}z{\"{u}}n and
                  Ana Klimovic},
  title        = {Modyn: Data-Centric Machine Learning Pipeline Orchestration},
  journal      = {Proc. {ACM} Manag. Data},
  volume       = {3},
  number       = {1},
  pages        = {55:1--55:30},
  year         = {2025}
}

@inproceedings{DBLP:conf/sigmod/KimJSHCC22,
  author       = {Kyoungmin Kim and
                  Jisung Jung and
                  In Seo and
                  Wook{-}Shin Han and
                  Kangwoo Choi and
                  Jaehyok Chong},
  title        = {Learned Cardinality Estimation: An In-depth Study},
  booktitle    = {{SIGMOD} Conference},
  pages        = {1214--1227},
  publisher    = {{ACM}},
  year         = {2022}
}

@article{DBLP:journals/ml/Ben-DavidBCKPV10,
  author       = {Shai Ben{-}David and
                  John Blitzer and
                  Koby Crammer and
                  Alex Kulesza and
                  Fernando Pereira and
                  Jennifer Wortman Vaughan},
  title        = {A theory of learning from different domains},
  journal      = {Mach. Learn.},
  volume       = {79},
  number       = {1-2},
  pages        = {151--175},
  year         = {2010}
}

@inproceedings{DBLP:conf/iclr/GulrajaniL21,
  author       = {Ishaan Gulrajani and
                  David Lopez{-}Paz},
  title        = {In Search of Lost Domain Generalization},
  booktitle    = {{ICLR}},
  year         = {2021}
}

@article{DBLP:journals/corr/abs-2505-04404,
  author       = {Jiaqi Zhu and
                  Shaofeng Cai and
                  Yanyan Shen and
                  Gang Chen and
                  Fang Deng and
                  Beng Chin Ooi},
  title        = {In-Context Adaptation to Concept Drift for Learned Database Operations},
  journal      = {ICML},
  year         = {2025}
}

@inproceedings{DBLP:conf/icml/ZeighamiS24,
  author       = {Sepanta Zeighami and
                  Cyrus Shahabi},
  title        = {Theoretical Analysis of Learned Database Operations under Distribution
                  Shift through Distribution Learnability},
  booktitle    = {{ICML}},
  year         = {2024}
}

@online{redbenchimpl,
    author =    {Skander Krid et al.},
    title =  {Redbench Implementation},
    year = 2025,
    url =    {https://github.com/utndatasystems/redbench},
    lastaccessed = {}
}

@online{joblightimpl,
    author =    {Andreas Kipf et al.},
    title =  {JOB-light Worklods},
    year = 2025,
    url =    {https://github.com/andreaskipf/learnedcardinalities/tree/master/workloads},
    lastaccessed = {}
}

@article{DBLP:journals/pvldb/NegiMKMTKA21,
  author       = {Parimarjan Negi and
                  Ryan Marcus and
                  Andreas Kipf and
                  Hongzi Mao and
                  Nesime Tatbul and
                  Tim Kraska and
                  Mohammad Alizadeh},
  title        = {Flow-Loss: Learning Cardinality Estimates That Matter},
  journal      = {Proc. {VLDB} Endow.},
  volume       = {14},
  number       = {11},
  pages        = {2019--2032},
  year         = {2021}
}

@inproceedings{DBLP:conf/nips/XuSCV19,
  author       = {Lei Xu and
                  Maria Skoularidou and
                  Alfredo Cuesta{-}Infante and
                  Kalyan Veeramachaneni},
  title        = {Modeling Tabular data using Conditional {GAN}},
  booktitle    = {NeurIPS},
  pages        = {7333--7343},
  year         = {2019}
}

@article{DBLP:journals/sigmod/RGPW24,
  author       = {Maria F. Davila R. and
                  Sven Groen and
                  Fabian Panse and
                  Wolfram Wingerath},
  title        = {Navigating Tabular Data Synthesis Research Understanding User Needs
                  and Tool Capabilities},
  journal      = {{SIGMOD} Rec.},
  volume       = {53},
  number       = {4},
  pages        = {18--35},
  year         = {2024}
}

@online{imdbold,
    author =    {Cinemagoer},
    title =  {Old {IMDB} Data Files},
    year = 2025,
    url =    {https://cinemagoer.readthedocs.io/en/latest/usage/ptdf.html},
    lastaccessed = {}
}

@inproceedings{DBLP:conf/iclr/Zhang0SSQFRK24,
  author       = {Hengrui Zhang and
                  Jiani Zhang and
                  Zhengyuan Shen and
                  Balasubramaniam Srinivasan and
                  Xiao Qin and
                  Christos Faloutsos and
                  Huzefa Rangwala and
                  George Karypis},
  title        = {Mixed-Type Tabular Data Synthesis with Score-based Diffusion in Latent
                  Space},
  booktitle    = {{ICLR}},
  year         = {2024}
}

@article{DBLP:journals/pacmmod/LuoXTJC25,
  author       = {Yuanhui Luo and
                  Minhui Xie and
                  Yiheng Tong and
                  Shichao Jiang and
                  Yunpeng Chai},
  title        = {Understanding Robustness Issues of Updatable Learned Indexes: [Experiments
                  {\&} Analysis]},
  journal      = {Proc. {ACM} Manag. Data},
  volume       = {3},
  number       = {4},
  pages        = {270:1--270:25},
  year         = {2025}
}

@inproceedings{neurdb2025cidr,
  author       = {Zhanhao Zhao and
                  Shaofeng Cai and
                  Haotian Gao and
                  Hexiang Pan and
                  Siqi Xiang and
                  Naili Xing and
                  Gang Chen and
                  Beng Chin Ooi and
                  Yanyan Shen and
                  Yuncheng Wu and
                  Meihui Zhang},
  title        = {NeurDB: On the Design and Implementation of an AI-powered Autonomous
                  Database},
  booktitle    = {{CIDR}},
  publisher    = {www.cidrdb.org},
  year         = {2025}
}

@article{selix,
  author       = {Baofu Han and Guoyu Hu and Bing Li and Xiaokui Xiao and Zhanhao Zhao and Beng Chin Ooi},
  title        = {On Self-Designing
Learned Indexes},
  journal      = {Proc. {ACM} Manag. Data},
  volume       = {4},
  number       = {2},
  pages        = {219:1--219:25},
  year         = {2026}
}

@article{neurcc,
  author       = {Hexiang Pan and Shaofeng Cai and Yuncheng Wu and Yeow Meng Chee and Tien Tuan Anh Dinh and Gang Chen and Beng Chin Ooi},
  title        = {Modeling Concurrency Control as a Learnable Function},
  journal      = {Proc. {ACM} Manag. Data},
  volume       = {4},
  number       = {2},
  pages        = {211:1--211:28},
  year         = {2026}
}

\end{document}